\newtheorem{lemma}{Lemma}[section]
\newtheorem{cor}{Corollary}
\newtheorem{prop}{Proposition}
\newtheorem{thm}{Theorem}
\newtheorem{defn}{Definition}
\newtheorem{rem}{Remark}
\newcommand{\vx}{\mathbf{x}}
\newcommand{\R}{\mathbb{R}}
\newcommand{\lpo}{\textsc{LP}^L}
\newcommand{\lpt}{\textsc{LP}^U}
\newcommand{\cutE}[1]{\delta\left({#1}\right)}
\newcommand{\abs}[1]{\left|{#1}\right|}
\newcommand{\cC}{\mathcal{C}}
\newcommand{\cS}{\mathcal{S}}
\newcommand{\ball}{\mathcal{B}}
\newcommand{\av}[2]{\mathbb{E}_{#2}\left[{#1}\right]}
\newcommand{\cost}{\textsc{cost}}
\newcommand{\ED}{\mathrm{ED}}
\newcommand{\xor}{\mathrm{XOR}}
\newcommand{\yell}[1]{}
\newcommand{\ayell}[1]{}
\newcommand{\cK}{\mathcal{K}}
\newcommand{\cM}{\mathcal{M}}
\newcommand{\cB}{\mathcal{B}}
\newcommand{\ST}{\mathrm{ST}}
\newcommand{\SD}{\mathrm{DISJ}}
\newcommand{\AND}{\mathrm{AND}}
\newcommand{\cD}{\mathcal{D}}
\newcommand{\eps}{\epsilon}
\newcommand{\maxhard}{maximally hard on the star graph}
\newcommand{\distrbnd}{\log{|V|}\log\log{|V|}}
\newcommand{\lpst}{LP_{\ST}}
\newcommand{\lpmfc}{LP_{\mathrm{MDN}}}
\newcommand{\lpmtch}{LP_{\mathrm{MTCH}}}
\newcommand{\IP}{\mathrm{IP}}
\newcommand{\TRIBES}{\mathrm{TRIBES}}
\newcommand{\UDISJ}{\mathrm{UDISJ}}
\newcommand{\binloss}{\log(nk)\cdot \log^2{k}}
\newcommand{\binlosslogk}{\log(nk)\cdot \log^3{k}}
\title{The Range of Topological Effects on Communication}
\author{Arkadev Chattopadhyay \thanks{School of Technology and Computer Science, Tata Institute of Fundamental Research, email: arkadev.c\@@tifr.res.in. Research partially supported by a Ramanujan Fellowship of the DST.} 
\and Atri Rudra\thanks{Department of Computer Science and Engineering, University at Buffalo, SUNY, email: atri\@@buffalo.edu. Research supported in part by NSF grant CCF-0844796.}}
\begin{document}
\maketitle


\thispagestyle{empty}

\begin{abstract}
We continue the study of communication cost of computing functions when inputs are distributed among $k$ processors, each of which is located at one vertex of a network/graph called a terminal. Every other node of the network also has a processor, with no input. The communication is point-to-point 
and the cost is the total number of bits exchanged by the protocol, in the worst case, on all edges. 

Chattopadhyay, Radhakrishnan and Rudra (FOCS'14) recently initiated a study of the effect of topology of the network on the total communication cost using tools from $L_1$ embeddings. Their techniques provided tight bounds for simple functions like Element-Distinctness (ED), which depend on the 1-median of the graph. This work addresses two other kinds of natural functions. We show that for a large class of natural functions like Set-Disjointness the communication cost is essentially $n$ times the cost of the optimal Steiner tree connecting the terminals. Further, we show for natural composed functions like  $\text{ED} \circ \text{XOR}$ and  $\text{XOR} \circ \text{ED}$, the naive protocols suggested by their definition is optimal for general networks. Interestingly, the bounds for these functions depend on more involved topological parameters that are a combination of Steiner tree and 1-median costs.  

To obtain our results, we use some new tools in addition to ones used in Chattopadhyay et. al. These include  (i) viewing the communication constraints via a linear program; (ii)  using tools from the theory of tree embeddings to prove  topology sensitive direct sum results that handle the case of composed functions and (iii) representing the communication constraints of certain problems as a family of collection of multiway cuts, where each multiway cut simulates the hardness of computing the function on the star topology. 
\end{abstract}

\newpage

\pagenumbering{arabic}

\section{Introduction} \label{sec:intro}

We consider the following distributed computation problem $p \equiv (f,G,K,\Sigma)$: there is a set $K$ of $k$ processors that have to jointly compute a function $f\,:\,\Sigma^{K} \to \{0,1\}$. Each of the $k$ inputs to $f$ is held by a distinct processor. \yell{We can also handle the case when a terminal has more than one input. Should we mention this as a footnote and concentrate on $K$ being a proper set and not a multi-set for the intro? --Atri} Each processor is located on some node of a network (graph) $G \equiv (V,E)$. These nodes in $V$ with an input are called {\em terminals} and the set of such nodes is denoted by $K$. The other nodes in $V$ have no input but have processors that also participate in the computation of $f$ via the following communication process: there is some fixed a-priori protocol according to which, in each round of communication, nodes of the network send messages to their neighbors. The behavior of a node in any round is just a (randomized) function of inputs held by it and the sequence of bits it has received from its neighbors in the past. All communication is point-to-point in the sense that each edge of $G$ is a private communication channel between its endpoints. In any round, if one of the endpoints of an edge is in a state where it expects to receive some communication from the other side, then silence from the other side is not allowed in a legal protocol. At the end of communication process, some pre-designated node of the network outputs the value of $f$ on the input instance held by processors in $K$. \yell{We need to clarify the silence equals one bits transmitted somewhere. Maybe in the prelims section? --Atri}\ayell{I have clarified it. While doing this I noticed that silence being counted as one bit and silence not being tolerated is a little different. In particular, silence tolerated but counted as 1 bit does not quite let us prove later the Steiner tree lower bound for all functions. This seems to make the protocol restrictive, but this is my quick way out for now:--Arkadev.} We assume that protocols are randomized, using public coins that are accessible to all nodes of the network, and err with small probability. The cost of a protocol on an input is the expected total number of bits communicated on all edges of the network. The main question we study in this work is how the cost of the best protocol on the worst input depends on the function $f$, the network $G$ and the set of terminals $K$. This cost is denoted by $R_{\epsilon}\big(p\big)$ (and we use $R(p)$ to denote $R_{1/3}(p)$). It is not difficult to see that this cost is lower bounded by the expected cost (of the best protocol) under any distribution $\mu$ over the inputs to nodes in $K$. This latter quantity is denoted by $R_{\epsilon,\mu}\big(p\big)$ and turns out to be easier to lower bound under a conveniently chosen $\mu$.

This communication model seems to be a natural abstraction of many distributed problems and was recently studied in its full generality by Chattopadhyay, Radhakrishnan and Rudra~\cite{CRR14}.\footnote{Related but different problems have been considered in distributed computing. Please see Appendix~\ref{app:dc} for more details.} A noteworthy special case is when $G$ is just a pair of nodes connected by an edge. This corresponds to the classical model of 2-party communication introduced by Yao~\cite{Yao79} more than three decades ago. The study of the classical model has blossomed into the vibrant and rich field of communication complexity, which has deep connections to theoretical computer science in general and computational complexity in particular.    

This point-to-point model had received early attention in the works of Tiwari~\cite{T87}, Dolev and Feder~\cite{DF89} and Duris and Rolim~\cite{DR98}. These early works seem to have entirely focused on deterministic and non-deterministic complexities. In particular, Tiwari~\cite{T87} showed several interesting topology-sensitive bounds on the cost of deterministic protocols for simple functions. However, these bounds were for specific graphs like trees, grids, rings etc. More recently, there has been a resurgence of interest in the randomized complexity of functions in the point-to-point model. These have several motivations: BSP model of Valiant~\cite{bgp}, models for MapReduce~\cite{mr-1}, parallel models to compute conjunctive queries~\cite{join-1}, distributed models for learning~\cite{learning-1}, distributed streaming and functional monitoring~\cite{func-survey}, sensor networks~\cite{KK12} etc. Interestingly, in a very recent work Drucker, Kuhn and Oshman \cite{DKO14} showed that some outstanding questions in this model (where one is interested in bounding the number of rounds of communication as opposed to bounding the total communication) have connections to well known hard problems on constant-depth circuits. \yell{Should we also mention other related work in distributed computing that are related but not exactly the problem we are considering? --Atri} \ayell{The DKO reference is a PODC paper, so takes care of that--Arkadev.} Motivated by such diverse applications, a flurry of recent works~\cite{PVZ12,WZ12,WZ13,WZ14,BEOPV13,HRVZ13,LSWW14,CM14} have proved strong lower bounds, developing very interesting techniques. All of these works, however, focus on the star topology with $k$ leaves, each having a terminal and a central non-terminal node. Note that every function on the star can be computed using $O(kn)$ bits of communication, by making the leaves simultaneously send each of their $n$-bit inputs to the center that outputs the answer. The aforementioned recent works show that this is an optimal protocol for various natural functions. 

In contrast, on a general graph not all functions seem to admit $O(kn)$-bit protocols. Consider the naive protocol that makes all terminals send their inputs to a special node $u$. The speciality of $u$ is the following: let the status of a node $v$ in network $G$ w.r.t. $K$, denoted by $\sigma_K\left(v\right)$, be given by $\sum_{w \in K} d_G(v,w)$, where $d_G(x,y)$ is the length of a shortest path in $G$ between nodes $x$ and $y$. Node $u$ is special and called the {\em median} as it has a minimal status among all nodes, which we denote by $\sigma_K\left(G\right)$. Thus, the cost of the naive protocol is $\sigma_K\left(G\right)\cdot n$. For the star, the center is the median with status $k$. On the other hand, for the line, ring and grid, each having $k$ nodes all of which are terminals, $\sigma_K(G)$ is $\Theta(k^2)$, $\Theta(k^2)$ and $\Theta(k^{3/2})$ respectively.

The work in \cite{CRR14} appears to be the first one to address the issue of randomized protocols over arbitrary $G$. It shows simple natural functions like Element-Distinctness\footnote{Given inputs $X^i\in \Sigma$ for every $i\in K$, the function $\ED:\Sigma^K\to\{0,1\}$ is defined as follows:
$\ED\left((X^i)_{i\in K}\right) =1 \text{ if and only if } X^i\neq X^j \text{ for every } i\neq j\in K$.},
 have $\Theta(\sigma_K\left(G\right))$ as the cost (up to a poly-log$(k)$ factor) of the optimal randomized protocol computing them. While these are essentially the strongest possible lower bounds\footnote{Strictly speaking, the strongest lower bound is $\Omega(\sigma_K\left(G\right)\cdot n)$. Several functions, called linear 1-median type later, are shown to achieve this bound in \cite{CRR14}.}, not all functions of interest have that high complexity. Consider the function Equality that outputs 1 precisely when all input strings at the nodes in $K$ are the same. There is a randomized protocol of cost much less than $\sigma_K(G)$ for computing it: consider a minimal cost Steiner-tree with nodes in $K$ as the terminals. Let the cost of this tree be denoted by $\ST\left(G,K\right)$. Root this tree at an arbitrary  node. Each leaf node sends a hash (it turns out $O(1)$ bits of random hash suffices for our purposes\footnote{Observe that if two strings held at two terminals are not equal, each hash will detect inequality with probability $2/3$.}) of its string to its parent. Each internal node $u$ collects all hashes that it receives from nodes in the sub-tree rooted at $u$, verifies if they are all equal to some string $s$. If so, it sends $s$ to its parent and otherwise, it sends a special symbol to its parent indicating inequality. Thus, in cost $O\left(\ST\left(G,K\right)\right)$, one can compute Equality with small error probability.\footnote{In fact, we observe in Theorem~\ref{thm:steiner-tree-minimal} that {\em any} function $f:\Sigma^K\to \{0,1\}$ that depends on all of its input symbols needs $\Omega(\ST(G,K))$ amounts of communication (even for randomized protocols), which implies that the randomized protocol above for Equality is essentially optimal.}

\yell{In most of the discussion below we are ignoring $n$, which is appropriate for the functions we consider. So maybe state this explicitly? --Atri}

For many scenarios in a distributed setting, the task to be performed is naturally layered in the following way. The set of terminal nodes is divided into $t$ groups $K_1,\ldots,K_t$. Within a group of $m$ terminals, the input needs to be pre-processed in a specified manner, expressed as a function $g: \big(\{0,1\}^n\big)^m \to \{0,1\}^n$. Finally the results of the computation of the groups need to be combined in a different way, given by another function $f : \big(\{0,1\}^n\big)^t \to \{0,1\}$. More precisely, we want to compute the composed function $f \circ g$. The canonical protocol will first compute in parallel all instances of the task $g$ in groups using the optimal protocol for $g$ and then use the optimal protocol for $f$ on the outputs of $g$ in each of $K_i$. However, this is not the optimal protocol for all $f,g$ and network $G$. For example, consider the case when $f$ is Equality and $g$ is the bit-wise XOR function. As we show later, the optimal protocol for computing XOR has cost $\Theta\left(\ST\left(G,K\right)\cdot n\right)$. Hence, the naive protocol for $\text{EQ}\circ \text{XOR}$ will have cost $\Omega\big(\left(\ST\left(G,K'\right)\right)\,+ \,\sum_{i=1}^r \left(\ST\left(G,K_i\right)\cdot n\right)\,\big)$. However, it is not hard to see that there is a protocol of cost $O\big(t\cdot \left(\ST\left(G,K\right)\right)$. This cost can be much lower than the naive cost depending on the network.  
\ayell{It will be good to give a counter-example. Isn't EQ composed with XOR a counter example?}

\section{Our Results}
%
The first part of our work attempts to understand when the naive protocol cannot be improved upon for composed functions. Function composition is a widely used technique in computational complexity for building new functions out of more primitive ones \cite{RM99,G11,GNW11,BBCR13,KRW95}. Proving that the naive way of solving $f \circ g$ is essentially optimal, in many models remain open. In particular, even in the 2-party model of communication where the network is just an edge, this problem still remains unsolved (see \cite{BBCR13}).
 To describe our results on composition, we need the following terminology: The cost of solving a problem $\big(f,G,K,\{0,1\}^n\big)$ will have a dependence on both $n$ and the topology of $G$. We will deal with two kinds of dependence on $n$. If the cost depends linearly on $n$, we say $f$ is of linear type. Otherwise, there is no dependence on $n$. (We typically ignore poly-log factors in this paper.)  Call $f$ a \emph{$1$-median} type function if its topology-sensitive complexity is $\sigma_K\left(G\right)$. We say $f$ is of \emph{Steiner tree} type, if its topology-sensitive complexity is $\ST\left(G,K\right)$. The protocol for a Steiner tree type problem $f$ seems to move information around in a fundamentally different way from the one for a $1$-median type problem $g$. It seems tempting to expect that there composition cannot be solved by any cheaper protocol than the naive ones. However, we are only able to prove this intuition for few natural instances in this work.    


Consider the following composition: the first function is element distinctness function, denoted by $\ED$, which was shown by \cite{CRR14} to be of $1$-median type. The second is the bit-wise xor  function (which we denote by $\xor_n$), which is shown to be of linear Steiner-tree type later in Appendix~\ref{app:lp-cc}. In particular, given a graph $G=(V,E)$ and $t$ subsets $K_1,\dots,K_t\subseteq V$, we define the composed function $\ED\circ\xor_n$ as follows.
Given $k_i\stackrel{def}{=} |K_i|$ $n$-bit vectors $X_1^i,\dots,X_{k_i}^i\in\{0,1\}^n$ for every $i\in [t]$, define
$\ED\circ\xor_n\left( X^1_1,\dots,X^1_{k_1},\dots,X^t_1,\dots,X_{k_t}^t\right)=\ED\left(\xor_n\left(X^1_1,\dots,X^1_{k_1}\right),\dots,\xor_n\left(X^t_1,\dots,X_{k_t}\right)\right)$.
The naive algorithm mentioned earlier specializes for $\ED\circ\xor_n$ as follows: compute the inner bit-wise $\xor$'s first\footnote{In fact, we just need to compute the $\xor$ of the hashes of the input, which with a linear hash is just the bit-wise $\xor$ of $O(\log{k})$-bits of hashes.} and then compute the $\ED$ on the intermediate values. This immediately leads to an upper bound of 
\begin{equation}
\label{eq:ed-xor-mixed-ub}
 O\left(\sigma_{K_1,\dots,K_t}(G)\cdot \log{k}+ \sum_{i=1}^t \ST(G,K_i)\cdot \log{k}\right),
\end{equation} where $\sigma_{K_1,\dots,K_t}(G)$ is the minimum of $\sigma_K(G)$ for every choice of $K$ that has exactly one terminal from $K_i$ for every $i\in [t]$. One of our results, stated below, shows that this upper bound is tight to within a poly-log factor: 

\begin{thm}
\label{thm:ed-xor-lb}
\[R(\ED\circ\xor_n,G,K,\{0,1\}^n)\ge \Omega\left(\frac{\sigma_{K_1,\dots,K_t}(G)}{\log{t}}\;+\; \frac{\sum_{i=1}^t \ST(G,K_i)}{\distrbnd}\right).\]
\end{thm}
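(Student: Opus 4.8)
The plan is to prove the two lower bounds separately, since the bound is a sum of two terms and it suffices to lower bound $R(\ED\circ\xor_n,G,K,\{0,1\}^n)$ by each term (up to a constant factor, absorbing the factor of $2$ into the poly-log losses). The first term, $\sigma_{K_1,\dots,K_t}(G)/\log t$, should follow by reducing from a "hard core" instance where $\ED$ is the bottleneck: hardwire all but one input in each group $K_i$ to $\vzero$, so that $\xor_n$ on group $i$ simply returns the single live input $X^i_{j_i}\in\{0,1\}^n$, and we are left with an instance of $(\ED, G, K', \{0,1\}^n)$ where $K'$ picks exactly one terminal $j_i$ from each $K_i$. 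Since this hardwiring is valid for every choice of representatives, a protocol for $\ED\circ\xor_n$ on $G$ yields a protocol for $\ED$ on $(G,K')$ for every such $K'$, and in particular for the $K'$ achieving the minimum status $\sigma_{K_1,\dots,K_t}(G)$. The $1$-median lower bound for $\ED$ from \cite{CRR14} then gives $\Omega(\sigma_{K'}(G)/\mathrm{polylog})$; one needs to check that the relevant poly-log factor there is $\log t$ (the number of groups, which plays the role of the number of terminals in the reduced $\ED$ instance), which I expect to match once the parameters are tracked carefully.

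The second term, $\sum_{i=1}^t \ST(G,K_i)/(\distrbnd)$, is where the real work lies and where I would invoke the "topology-sensitive direct sum" machinery advertised in the introduction. The idea is that computing $\ED\circ\xor_n$ forces the protocol to essentially solve each inner $\xor_n$ instance on group $K_i$, and since $\xor_n$ is of linear Steiner-tree type (Appendix~\ref{app:lp-cc}), each such sub-instance costs $\Omega(\ST(G,K_i)\cdot n)$ — wait, note the theorem statement has no factor of $n$ in the second term, so actually we only need that $\ED\circ\xor_n$ forces each group to solve $\xor$ of the $O(\log k)$-bit hashes, i.e. a constant-input-length version, costing $\Omega(\ST(G,K_i))$ per group. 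To turn "solving many sub-instances" into a genuine sum lower bound on a single network, I would use a direct-sum argument in the distributional setting: pick an input distribution $\mu$ on $(\{0,1\}^n)^K$ under which (a) the outer $\ED$ instance is hard in the information-theoretic sense, forcing the protocol's transcript to reveal the inner values, and (b) conditioned on a random choice of which group to "charge", the protocol's behaviour on the edges of an optimal Steiner tree $T_i$ for $K_i$ can be isolated. The tree-embedding tools come in to handle the fact that the optimal Steiner trees $T_1,\dots,T_t$ may overlap arbitrarily in $G$: one embeds $G$ (or the relevant union of trees) into a distribution over dominating trees with $O(\log|V|\log\log|V|)$ distortion so that edge costs become separable across groups, which is exactly where the $\distrbnd = \log|V|\log\log|V|$ factor in the denominator originates.

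Concretely, the steps I would carry out are: (1) set up the hard distribution $\mu$ — I'd take each group independently, with the "representative" inputs drawn from the hard distribution for $\ED$ and the non-representative inputs drawn so that the bit-wise $\xor$ of each group is distributed as a hard $\xor$-instance (e.g. uniform shares of a target string, or a TRIBES/UDISJ-style embedding if a stronger information bound is needed); (2) establish an information-cost lower bound for $\xor_n$ of the Steiner-tree type — i.e. that any protocol must "transmit" $\Omega(1)$ bits of information about each group's output across every Steiner-tree cut, which I expect is already implicit in the linear Steiner-tree lower bound for $\xor$ cited to Appendix~\ref{app:lp-cc}, repackaged as an information complexity statement; (3) prove the direct sum: the total information the transcript reveals about $(\xor_n(X^1),\dots,\xor_n(X^t))$ is at least the sum of the per-group informations, and each per-group term, via the tree embedding and the LP view of communication constraints, costs $\Omega(\ST(G,K_i)/\distrbnd)$ communication; (4) convert distributional-with-information back to worst-case randomized cost via $R \ge R_{\eps,\mu}$ and standard information-vs-communication inequalities.

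The main obstacle I anticipate is step (3): making the direct sum topology-sensitive. In the ordinary 2-party setting a direct sum for many independent copies of $g$ under the composing function $f$ requires that $f$ itself be "hard" (so the composed transcript can't shortcut any copy) — here that role is played by $\ED$ being $1$-median hard — but on a general graph the copies live on overlapping subgraphs, so "independent copies" is not literally true; the information about group $i$'s output that flows along tree $T_i$ may be piggybacked on communication that is simultaneously carrying information about group $j$. The tree embedding decouples the \emph{cost accounting} (each edge of $G$ gets charged to at most $O(\distrbnd)$ many $T_i$'s in expectation over the embedding), but one still has to argue the protocol genuinely pays, on the embedded tree instance, for each group — i.e. that a low-communication protocol on $G$ would yield low-communication protocols simultaneously for all the single-group $\xor$ problems on the respective dominating trees, contradicting the Steiner-tree lower bound on trees. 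Getting the quantifiers right here (the embedding is randomized and must be fixed before the adversary, or handled via an averaging argument) and ensuring the $\ED$-hardness of the outer instance is strong enough to prevent the protocol from guessing even one group's output for free is the crux; the rest is bookkeeping with the LP/cut formulation of communication constraints already set up in the paper.
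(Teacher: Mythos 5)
Your handling of the first term is essentially the paper's: fix representatives $u_i\in K_i$ achieving $\sigma_{K_1,\dots,K_t}(G)$, neutralize the remaining inputs so that each group's $\xor_n$ collapses to its representative's input, and invoke the $\ED$ lower bound of~\cite{CRR14} on the resulting $t$-terminal instance (this is Lemma~\ref{lem:ed-xor-mfc-lb}; the paper fixes the non-representative inputs to distinct strings of disjoint support rather than to zero, but the reduction is the same).

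For the second term your route diverges from the paper's, and it diverges exactly at the point you yourself flag as the crux. You propose an information-cost direct sum over the $t$ inner $\xor_n$ instances, with tree embeddings used to decouple which edges ``belong'' to which group, and you concede that you do not know how to argue the protocol genuinely pays for each group when the Steiner trees overlap. The paper never resolves that question at the level of protocols or information; it sidesteps it. The actual argument (Lemma~\ref{lem:ed-xor-lb}) takes the all-uniform input distribution and proves, for \emph{every single cut} $C$, that the induced two-party problem needs $\Omega(t')$ expected communication, where $t'$ is the number of groups $K_i$ separated by $C$ --- and this per-cut hardness is charged to the \emph{outer} $\ED$ via a reduction from two-party set disjointness on $t/2$-element sets (Lemma~\ref{lem:ed-cor-2-party-lb}), not to the inner $\xor_n$ being Steiner-tree hard. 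By Corollary~\ref{cor:lpo-cc-lb} this exhibits a feasible point of $\lpo(G)$ with constraint values $b^i(C)$ equal to the indicator that $C$ separates $K_i$, which are sub-additive. The entire ``direct sum'' content is then Theorem~\ref{cor:lp1-lp2}, a purely combinatorial statement about two LPs, namely $\lpo(G)\ge \Omega\bigl(1/(\distrbnd)\bigr)\cdot\lpt(G)$, proved by observing that the two LPs coincide on trees and embedding $G$ into a random subtree; since $\lpt(G)$ decomposes as $\sum_{i=1}^t \lpst(G,K_i)=\Theta\bigl(\sum_{i=1}^t \ST(G,K_i)\bigr)$, the bound follows. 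The overlapping-trees/piggybacking issue you worry about therefore never arises: one never extracts $t$ separate protocols, and the embedding acts on LP solutions rather than on transcripts. As written, your step (3) is a genuine gap --- you would need either to supply the missing topology-sensitive information direct sum (which the paper does not claim to know how to do; compare its first open question) or to replace your steps (2)--(3) with the per-cut $\ED$-hardness plus the LP comparison theorem.
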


We prove the above result (and other similar results) by essentially proving a topology sensitive direct sum theorem (see Section~\ref{sec:tree} for more).

To get a feel for how~\eqref{eq:ed-xor-mixed-ub} behaves between the two extremes consider the case when $G$ is a $\sqrt{k}\times \sqrt{k}$ grid and the set of $k$ terminals (i.e. all nodes are terminals) is divided into $t$ sets of size $k/t$, where each $K_i$ for $i\in [t]$ is a $\sqrt{\frac{k}{t}}\times\sqrt{\frac{k}{t}}$ sub-grid. It can be verified that in this case~\eqref{eq:ed-xor-mixed-ub} is (up to an $O(\log{k})$ factor) $t\sqrt{k}+k$. 
 In Section~\ref{sec:xor-ed}, we further show that changing the order of composition to $\text{XOR}\circ \text{ED}$ also does not allow any cost savings over the naive protocol:

\begin{thm}
\label{thm:xor-ed-lb}
For every choice of $u_i\in K_i$:
\[R(\xor_1\circ\ED,G,K,\{0,1\}^n)\ge \Omega\left(\ST(G,\{u_1,\dots,u_t\})+\frac{\sum_{i=1}^t \sigma_{K_i}(G)}{\log{k}}\right).\]
\end{thm}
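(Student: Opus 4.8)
The plan is to prove the two summands of the claimed bound as \emph{separate} lower bounds on $R(\xor_1\circ\ED,G,K,\{0,1\}^n)$ and then to add them up using $\max(a,b)\ge\tfrac12(a+b)$. Fix the representatives $u_1,\dots,u_t$. I would show (a) $R(\xor_1\circ\ED,G,K,\{0,1\}^n)\ge\Omega\big(\ST(G,\{u_1,\dots,u_t\})\big)$ and (b) $R(\xor_1\circ\ED,G,K,\{0,1\}^n)\ge\Omega\big(\tfrac1{\log k}\sum_{i=1}^t\sigma_{K_i}(G)\big)$; the theorem then follows, the $\log k$ being already present in the second term. (We may assume each $k_i\ge 2$ and $2^n\ge k$, since otherwise an inner $\ED$ is constant and that group drops out of the statement.)

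Part (a) is by restriction. In each group $K_i$, hard-wire the inputs at the terminals of $K_i\setminus\{u_i\}$ to $k_i-1$ fixed, pairwise distinct strings of $\{0,1\}^n$. Then the $i$-th inner function collapses to the indicator that $u_i$'s string avoids those $k_i-1$ values, which is a non-constant function of $u_i$'s input, so the restricted function $h\colon(\{0,1\}^n)^{\{u_1,\dots,u_t\}}\to\{0,1\}$, $h=\bigoplus_i\ED_i$, depends on all $t$ of its arguments. Any protocol for $\xor_1\circ\ED$ on $G$ solves $h$ on $G$ with effective terminal set $\{u_1,\dots,u_t\}$ (a terminal holding a publicly known constant behaves exactly like a Steiner node), so $R(\xor_1\circ\ED,\dots)\ge R(h,G,\{u_1,\dots,u_t\},\{0,1\}^n)$, and Theorem~\ref{thm:steiner-tree-minimal} bounds the latter below by $\Omega\big(\ST(G,\{u_1,\dots,u_t\})\big)$.

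Part (b) is a topology-sensitive direct sum, run through the tree-embedding machinery of Section~\ref{sec:tree}. For each $i$ take the distribution $\mu_i$ on $(\{0,1\}^n)^{K_i}$ from \cite{CRR14} witnessing the $1$-median lower bound for $\ED$ on $(G,K_i)$ in its information-theoretic form, and let $\mu=\prod_i\mu_i$. Two ingredients combine. First, for a product distribution the total number of bits any protocol $\Pi$ sends over all edges of $G$ is at least $\sum_i I_i(\Pi)$, where $I_i(\Pi)$ is the information the transcript on $G$ carries about the $K_i$-block; this is subadditivity of entropy over edges together with superadditivity of information over the independent blocks, applied edge by edge. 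Second, for each fixed $i$, conditioning on a fixing of the remaining $t-1$ blocks turns $\xor_1\circ\ED$ on the $i$-th block into $\ED_i(\cdot)\oplus c$ for a bit $c$ determined by that fixing, so $I_i(\Pi)$ is bounded below by the information complexity of $\ED$ on $(G,K_i)$ under $\mu_i$, which \cite{CRR14} lower-bounds by $\Omega(\sigma_{K_i}(G)/\log k)$. Summing over $i$ yields (b).

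The hard part will be making this pipeline go through in the \emph{network} model rather than on a single edge: the cost is a sum over all edges, the per-block information measure must be apportioned among edges and cuts, and that apportionment must stay compatible with the embedding of $(G,K_i)$ into a distribution over tree metrics that underlies \cite{CRR14}'s $\sigma_{K_i}(G)$ bound — all while losing only the stated logarithmic factor; this is exactly what the tools of Section~\ref{sec:tree} are built for. The conceptual (but easy) point is that the $\xor$ combiner is what forbids ``skipping'' a block: flipping any single inner output flips the final answer, so the protocol must solve all $t$ instances, which is why the bound is a \emph{sum} of the $\sigma_{K_i}(G)$'s rather than a maximum, as it would be for a combiner such as $\OR$ without further structure.
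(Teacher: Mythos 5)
Part (a) is fine and matches the paper in substance: the paper simply applies Theorem~\ref{thm:steiner-tree-minimal} to $\xor_1\circ\ED$ on the full terminal set and uses $\ST(G,\cK)\ge\ST(G,\{u_1,\dots,u_t\})$, whereas you restrict to the representatives first; either works. The combination of the two terms via $\max(a,b)\ge\tfrac12(a+b)$ is also what the paper does implicitly.

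Part (b) has a genuine gap. You propose a global information-theoretic direct sum: total communication $\ge\sum_i I_i(\Pi)$, and each $I_i(\Pi)$ is at least ``the information complexity of $\ED$ on $(G,K_i)$,'' which you then equate with the CRR14 bound $\Omega(\sigma_{K_i}(G)/\log k)$. But no such network information-complexity quantity is available: the $\sigma_{K_i}(G)/\log k$ bound of~\cite{CRR14} (reproved here as Theorem~\ref{thm:ed-lb}) is inherently a \emph{per-cut} statement — across each cut $C$ one gets $\Omega(\min(|C\cap K_i|,|K_i\setminus C|))$ bits \emph{on the edges of $\delta(C)$}, and these are aggregated over a Bourgain family of cuts in which each edge appears $O(\log k)$ times. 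The mutual information between the \emph{global} transcript and block $i$ carries no per-cut accounting, so your pipeline would at best recover a topology-free bound of order $k_i$ per block, not $\sigma_{K_i}(G)/\log k$. The step you defer to ``the tools of Section~\ref{sec:tree}'' is precisely the missing content; moreover Theorem~\ref{cor:lp1-lp2} loses a factor $\distrbnd$, not $\log k$, so even if it applied it would not give the claimed bound. If instead you try to make the per-cut route work with your product distribution $\prod_i\mu_i$, the induced two-party problem across a cut is $\bigoplus_i\SD(A_i,B_i)$ with unequal set sizes, and you would need a two-party XOR lemma for disjointness (the route the authors abandoned because it costs extra $\log(nk)\cdot\mathrm{poly}\log k$ factors).

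The paper's actual argument avoids all of this with a different hard distribution: it picks a near-worst-case pairing $M_i$ of each $K_i$ with $d(G,M_i)\ge\tfrac12\sigma_{K_i}(G)$ (Lemma~\ref{lem:worst-match}), gives each matched pair a distinct $\log k$-bit prefix so that collisions can only occur \emph{within} a pair, and plants the BJKS unique-disjointness distribution on the suffixes of the pairs. Then for any cut $C$, after conditioning on the unseparated pairs, the induced two-party problem \emph{is} $\UDISJ_{m'}$ on the $m'$ separated pairs, so Theorem~\ref{thm:info-set-disj} gives an $\Omega(m')$ bound across $C$ directly, with no XOR lemma; Corollary~\ref{cor:lpo-cc-lb} and Lemma~\ref{lem:lp1-lp2-mtch} (the matching LP plus Bourgain, losing only $\log k$) then yield $\Omega\big(\sum_i\sigma_{K_i}(G)/\log k\big)$. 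You would need to supply something playing the role of this pairing-plus-prefix construction for your part (b) to go through.
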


The results discussed so far follow by appropriately reducing the problem on a general graph to a bunch of two-party lower bounds, one across each cut in the graph. This was the general idea in~\cite{CRR14} as well but the reductions in this paper need to use different tools. However, the idea of two-party reduction seems to fail for the Set-Disjointness function,
 which is one of the centrally studied function in communication complexity. In our setting, the natural definition of Set-Disjointness (denoted by $\SD$) is as follows: each of the $k$ terminals in $K$ have an $n$-bit string and the function tests if there is an index $i\in [n]$ such that all $k$ strings have their $i$th bit set to $1$. It is easy to check that this function can be computed with $O(\ST(G,K)\cdot n)$ bits of communication (in fact one can compute the bit-wise $\AND$ function with this much communication by successively computing the partial bit-wise $\AND$ as we go up the Steiner tree). Before our work, only a tight bound was known for the special case of $G$ being a $k$-star (i.e. a lower bound of $\Omega(kn)$), due to the recent work of Braverman et al.~\cite{BEOPV13}. In this work, we present a fairly general technique that ports a tight lower bound on a $k$-star to an almost tight lower bound for the general graph case. For the complexity of Set-Disjointness, this technique yields the following bound: 

\begin{thm}
\label{thm:sd-lb}
\[R(\SD,G,K,\{0,1\}^n)\ge \Omega\left(\frac{\ST(G,K)\cdot n}{\log^2{k}}\right).\]
\end{thm}

Next, we present our key technical results and an overview of their proofs. We would like to point out that our proofs use many tools used in algorithm-design like (sub)tree embeddings, Bor\.{u}vka's algorithm to compute an MST for a graph and integrality gaps of some well-known LPs, besides using $L_1$-embeddings of graph that was also used in \cite{CRR14}. We hope this work encourages further investigation of other algorithmic techniques to prove message-passing lower bounds.

\section{Key Technical Results and Our Techniques}
\label{sec:gen}

In Appendix~\ref{app:lp-cc} we present a simple formulation of communication lower bounds in terms of a linear program (LP), whose constraints correspond to two-party communication complexity lower bounds induced across various cuts in the graph $G$. 
In particular, we prove our earlier claimed lower bound of $\Omega(\ST(G,K)\cdot n)$ for the $\xor_n$ problem.
Further, this connection can also be used to recover the $\Omega(\sigma_K(G)/\log{k})$ lower bound for the $\ED$ function from~\cite{CRR14}-- see Theorem~\ref{thm:ed-lb}. While LPs have been used to prove communication complexity lower bounds in the standard 2-party setting (see e.g. \cite{Sherstov09,SZ09}), our use of LPs above seem to be novel for proving communication lower bounds. 
In the remainder of the section, we present two general results that we will use to prove our lower bounds for specific functions including those in Theorems~\ref{thm:ed-xor-lb},~\ref{thm:xor-ed-lb} and~\ref{thm:sd-lb}. (See Appendix~\ref{sec:app} for the details.)

\subsection{A Result on Two LPs}
\label{sec:tree}



We now present a result that relates the objective values of two similar LPs. Both the LPs will involve the same underlying topology graph $G=(V,E)$.

We begin with the first LP, which we dub $\lpo(G)$:
\[\min \sum_{e\in E} x_e\]
subject to
\begin{align*}
\sum_{e\text{ crosses } C} x_e  & \ge \sum_{i=1}^{\ell} b^i(C) & \text{ for every cut } C\\
x_e& \ge 0 &\text{ for every } e\in E.
\end{align*}
In our results, we will use $x_e$ to denote the expected communication of an arbitrary protocol for a problem $p$ over a distribution over the input. The constraint for each cut $C$ will correspond to a two-party lower bound of $\sum_{i=1}^{\ell} b^i(C)$. Then the objective value of the above LP, which by abuse of notation we will also denote by $\lpo(G)$, will be a valid lower bound on $R(p)$.

Next we consider the second LP, which we dub $\lpt(G)$:
\[\min \sum_{i=1}^{\ell} \sum_{e\in E} x_{i,e}\]
subject to
\begin{align*}
\sum_{e\text{ crosses } C} x_{i,e}  & \ge b^i(C) & \text{ for every cut } C\text{ and }i\in [\ell]\\
x_{i,e}& \ge 0 &\text{ for every } e\in E\text{ and } i\in [\ell].
\end{align*}
In our results, we will connect the objective value of the above LP (which again with abuse of notation we will denote by $\lpt(G)$) to  the total communication of a trivial algorithm that solves problem $p$.

Our main aim is to show that for certain settings, the lower bound we get from $\lpo(G)$ is essentially the same as the upper bound we get from $\lpt(G)$.

Before we state our main technical result, we need to define the property we need on the values $b^i(C)$. In particular, let $\delta(C)$ denote the set of crossing edges for a cut $C$. We say that the values $b^i(C)$ satisfy the {\em sub-additive property} if for any three cuts $C_1,C_2$ and $C_3$ such that $C_1\cup C_2= C_3$,\footnote{This means that one side of the cut $C_3$ is the union of one side each of $C_1$ and $C_2$.} we have that for every $i\in [\ell]$:
$b^i(C_3)\le b^i(C_1)+b^i(C_2)$.
We remark that the two main families of functions that we consider in this paper lead to LPs that do satisfy the sub-additive property (see Appendix~\ref{app:sub-add}).
We are now ready to state our first main technical result:
\begin{thm}
\label{cor:lp1-lp2}
For any graph $G=(V,E)$ (and values $b^i(C)$ for any $i\in [\ell]$ and cut $C$ with the sub-additive property), we have
\[\lpt(G)\ge \lpo(G)\ge \Omega\left(\frac{1}{\distrbnd}\right)\cdot \lpt(G).\]
\end{thm}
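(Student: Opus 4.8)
The easy direction $\lpt(G)\ge\lpo(G)$ I would dispatch first: given any optimal solution $\{x_{i,e}\}$ to $\lpt(G)$, set $x_e=\sum_{i=1}^{\ell}x_{i,e}$. Then for any cut $C$ we have $\sum_{e\text{ crosses }C}x_e=\sum_i\sum_{e\text{ crosses }C}x_{i,e}\ge\sum_i b^i(C)$, so $x$ is feasible for $\lpo(G)$ and has the same objective value, giving $\lpo(G)\le\lpt(G)$. The content is the reverse inequality $\lpo(G)\ge\Omega(1/(\distrbnd))\cdot\lpt(G)$, i.e.\ any feasible solution $x$ to $\lpo(G)$ can be ``split'' into a feasible solution to $\lpt(G)$ losing only a $O(\distrbnd)$ factor.

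The plan is to take an optimal $\lpo(G)$ solution $x$, treat $\{x_e\}$ as edge lengths on $G$, and use tree embeddings. Specifically I would invoke the FRT-style probabilistic embedding of the shortest-path metric induced by $x$ into a distribution over dominating tree metrics with distortion $O(\log|V|)$ (Fakcharoenphol–Rao–Talwar); since we only need the tree to dominate distances among the relevant terminal set and it suffices to use a hierarchical decomposition, the $\log\log|V|$ factor in $\distrbnd$ is the slack that comes from boosting the expectation guarantee to a high-probability / union-bound guarantee over the poly$(|V|)$-many cuts one needs to control, or from the depth of the recursion. On such a tree $T$ (with edge weights inherited so that $\mathrm{dist}_T\ge\mathrm{dist}_x$ pointwise and $\mathbb E[\mathrm{dist}_T]\le O(\log|V|)\,\mathrm{dist}_x$), routing ``problem $i$'' along tree paths is cheap: for each $i$ I would build $x_{i,e}$ supported on (a small blow-up of) the tree edges, paying $b^i(C)$ across every cut by charging it to the single tree edge the cut must cross, so that $\sum_e x_{i,e}\lesssim$ (weighted size of $T$), and the total $\sum_i\sum_e x_{i,e}$ is governed by $\sum_i$ (something like $b^i$ of the whole instance) times the tree's total length, which in expectation is $O(\log|V|)\cdot\sum_e x_e=O(\log|V|)\cdot\lpo(G)$. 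The sub-additive property is exactly what lets me decompose an arbitrary cut $C$ through the tree: writing the set of tree edges crossed by $C$ and repeatedly applying $b^i(C_3)\le b^i(C_1)+b^i(C_2)$ along the hierarchical decomposition, I bound $b^i(C)$ by a sum of $b^i$ over ``elementary'' cuts (one per tree edge), which is what makes the per-edge charging feasible for $\lpt$.

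The main obstacle I expect is making the ``charge $b^i(C)$ to a single tree edge'' step actually yield a \emph{feasible} $\lpt$ solution for \emph{every} cut $C$ of $G$, not just the tree cuts: an arbitrary $G$-cut may cross many tree edges, and I must guarantee $\sum_{e\in\delta_G(C)}x_{i,e}\ge b^i(C)$ using only the mass I placed near tree edges. This is where sub-additivity is essential — it upgrades ``I cover every tree-edge cut'' to ``I cover every cut'' — but one has to be careful that the decomposition $C_1\cup C_2=C_3$ can be iterated down the laminar family of the tree's vertex partitions, and that the blow-up in the number of terms stays within the claimed $\log$ factors. A secondary technical point is converting the \emph{expected} tree length bound from FRT into an actual deterministic feasible solution: standard, by choosing the best tree in the support (or averaging over the distribution, since the LP is over reals and $\lpt$ is a minimization — a convex combination of feasible split-solutions is feasible and its objective is the average), which is exactly where the clean ``$\mathbb E$'' of the embedding suffices. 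I would organize the write-up as: (1) easy direction; (2) FRT statement specialized to our needs; (3) construction of $\{x_{i,e}\}$ from a sampled tree; (4) feasibility via sub-additivity; (5) objective bound via linearity of expectation; (6) derandomize/average.
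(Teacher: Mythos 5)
Your overall skeleton matches the paper's: the easy direction by summing coordinates, the reduction to trees where sub-additivity turns the single-tree-edge cuts $C_e$ into a certificate of feasibility for every cut, and an averaging/Markov step over a random tree. But there is a genuine gap in the choice of embedding. You invoke FRT, which produces a dominating tree whose edges are \emph{not} edges of $G$ (and carry weights). The argument needs two things that fail for such trees. First, to transfer the upper bound back you need $\lpt(G)\le\lpt(T)$; this holds when $T$ is a \emph{subtree} of $G$ (extend a feasible $\lpt(T)$ solution by zero on $E(G)\setminus E(T)$ and note $\delta_T(C)\subseteq\delta_G(C)$ for every cut), but for an FRT tree there is no way to place the mass sitting on a non-$G$ tree edge onto edges of $G$ so that every $G$-cut constraint is still met. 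Second, your ``charge $b^i(C)$ to the single tree edge the cut must cross'' step only contributes to $\sum_{e\in\delta_G(C)}x_{i,e}$ if that tree edge is a $G$-edge. The paper flags exactly this obstruction in its remark on why plain tree embeddings do not suffice: with weighted FRT edges one cannot recover $\sum_{e'\in\delta_T(C)}w_{e'}\cdot x'_{e'}\ge\sum_{e\in\delta_G(C)}x_e$. This is also why your explanation of the $\log\log|V|$ factor is off: it is not a union-bound or recursion-depth loss, it is the price of embedding into \emph{spanning subtrees} (low weighted-average-stretch spanning trees in the style of Abraham--Neiman), where the best known distortion is $O(\log|V|\log\log|V|)$ rather than FRT's $O(\log|V|)$.

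The fix is the paper's route. (i) Prove the exact equality $\lpo(T)=\lpt(T)$ for a tree $T$: the lower bound on $\lpo(T)$ comes from the cuts $C_e$ alone, giving $\lpo(T)\ge\sum_e\sum_i b^i(C_e)$, and the assignment $x_{i,e}=b^i(C_e)$ is feasible for $\lpt(T)$ precisely by sub-additivity applied along the tree edges crossed by an arbitrary cut. (ii) Take the optimal $\lpo(G)$ solution $x$, sample a spanning subtree $T$ with weighted average stretch $\alpha=O(\log|V|\log\log|V|)$ with respect to the weights $x_e$, and push $x$ onto $T$ by routing each $e=(u,v)$ along the tree path $P_{u,v}$; feasibility for $\lpo(T)$ is immediate because any cut separating $u$ from $v$ must cut an edge of $P_{u,v}$, and the expected cost is at most $\alpha\sum_e x_e$, so Markov yields a tree with $\lpo(T)\le O(\alpha)\cdot\lpo(G)$. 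Then $\lpt(G)\le\lpt(T)=\lpo(T)\le O(\alpha)\cdot\lpo(G)$. Your feasibility-via-sub-additivity intuition is the right one; the missing idea is that the tree must live inside $G$.
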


Theorem~\ref{cor:lp1-lp2} is the main ingredient in proving the lower bound for a $1$-median function composed with a Steiner tree function as given in Theorem~\ref{thm:ed-xor-lb} (see Appendix~\ref{sec:ed-xor}).
%
%
We can also use Theorem~\ref{cor:lp1-lp2} to prove nearly tight lower bound for composing a Steiner tree type function $\xor$ with a linear $1$-median function $\IP$ as well as another $1$-median function $\ED$. However, it turns out for these functions, we
can prove a better bound than Theorem~\ref{cor:lp1-lp2}. 
%
%
 %
In particular, using techniques developed in~\cite{CRR14}, we can prove lower bounds given in Theorem~\ref{thm:xor-ed-lb} and the one stated below (see Appendix~\ref{sec:xor-ip} 
 for details):
\begin{cor}
\label{cor:xor-ip-lb}
For every choice of $u_i\in K_i$:
\[R(\xor\circ\IP_n,G,K,\{0,1\}^n)\ge\Omega\left(\ST(G,\{u_1,\dots,u_t\})+\frac{\sum_{i=1}^t \sigma_{K_i}(G)\cdot n}{\log{k}}\right).\]
\end{cor}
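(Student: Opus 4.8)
The plan is to prove the two summands of the lower bound separately and then combine them via $\max(a,b)\ge(a+b)/2$.

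\textbf{The Steiner-tree term.} Since $\xor\circ\IP_n$ depends on every one of its input symbols, Theorem~\ref{thm:steiner-tree-minimal} already gives $R(\xor\circ\IP_n,G,K,\{0,1\}^n)\ge\Omega(\ST(G,K))\ge\Omega(\ST(G,\{u_1,\dots,u_t\}))$, the last step because $\{u_1,\dots,u_t\}\subseteq K$ and $\ST(\cdot)$ is monotone in the terminal set; this holds for every choice of $u_i\in K_i$. (Equivalently, inside each $K_i$ one can fix the inputs of all terminals other than $u_i$ so that the group's $\IP_n$-output becomes a free bit of $u_i$'s input, reducing the problem to $\xor_1$ on the terminals $\{u_1,\dots,u_t\}$, for which Appendix~\ref{app:lp-cc} supplies the bound $\Omega(\ST(G,\{u_1,\dots,u_t\}))$.)

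\textbf{The $1$-median term.} Here we run the $\ell_1$-embedding argument of \cite{CRR14}, but applied once to all $t$ groups simultaneously. Let $\mu_i$ be the hard distribution for $\IP_n$ on $K_i$ used in \cite{CRR14} and set $\mu=\bigotimes_{i=1}^t\mu_i$. For any $\epsilon$-error protocol let $x_e$ be its expected communication on edge $e$ under $\mu$. The key observation about the outer $\xor$ is a per-cut direct sum: across any cut $C=(S,\bar S)$, the two-party problem it induces is $\bigoplus_{i=1}^t(\text{compute }\IP_n\text{ on group }i)$; groups lying entirely on one side of $C$ are evaluated locally and XORed out, and for the remaining groups the XOR of the corresponding two-party inner-product instances is itself a single larger inner product on independent coordinate blocks (inner product is additive under $\xor$, and the blocks are independent since $\mu$ is a product), so the two-party distributional lower bounds add. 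Hence $\sum_{e\in\delta(C)}x_e\ \ge\ \sum_{i=1}^t b^i(C)$ for every cut $C$, where $b^i(C)$ is the two-party distributional lower bound for $\IP_n$ on $K_i$ across $C$ from \cite{CRR14}. Now take a Bourgain-type embedding of the metric $d_G$ restricted to $K=\bigcup_iK_i$ into $\ell_1$ with distortion $\alpha=O(\log k)$, written as a nonnegative measure $\nu$ over cuts with $\int[\,e\in\delta(C)\,]\,d\nu(C)\le 1$ for every edge $e$ and $\int[\,C\text{ separates }u,w\,]\,d\nu(C)\ge\alpha^{-1}d_G(u,w)$ for all $u,w\in K$ (its restriction to each $K_i$ is again such an embedding of the $K_i$-metric). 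Integrating the cut inequalities against $\nu$,
\[
\sum_e x_e\ \ge\ \int\!\Big(\sum_{e\in\delta(C)}x_e\Big)d\nu(C)\ \ge\ \sum_{i=1}^t\int b^i(C)\,d\nu(C)\ \ge\ \sum_{i=1}^t\Omega\!\left(\frac{\sigma_{K_i}(G)\cdot n}{\log k}\right),
\]
where each term of the last inequality is the single-function $1$-median lower bound of \cite{CRR14} for $\IP_n$ on $K_i$ (the passage from $\int b^i(C)\,d\nu(C)$ to $\sigma_{K_i}(G)\cdot n$, using a $1$-median $v_i^\ast$ of $K_i$ and $\sum_{w\in K_i}d_G(v_i^\ast,w)=\sigma_{K_i}(G)$). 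This yields the second summand, and combining the two bounds proves the corollary.

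\textbf{Main obstacle.} The delicate step is the per-cut direct sum for the outer $\xor$: one must show that $\xor$-combining the $t$ group instances costs the \emph{sum} $\sum_ib^i(C)$, not merely the maximum. This rests on two points: (i) $\mu=\bigotimes_i\mu_i$ makes the instances independent across the cut; and (ii) after ``folding'' the terminals of a group that lie on the same side of $C$, the resulting effective two-party instance is still a genuinely hard inner product under $\mu_i$ — which is precisely why \cite{CRR14}'s distribution, rather than the uniform one (under which the fold of the generalized inner product would collapse to a heavily biased input), must be used. A secondary point, already present in \cite{CRR14}, is choosing the $\ell_1$ decomposition so that each edge carries cut-mass at most $1$ simultaneously with distortion $O(\log k)$; it is exactly because the direct sum above is carried out per cut — so a single embedding serves all $t$ groups — that one loses only $O(\log k)$ here instead of the $\distrbnd$ of Theorem~\ref{cor:lp1-lp2}.
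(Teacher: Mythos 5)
Your proposal is correct and takes essentially the same route as the paper: the Steiner-tree term follows from Theorem~\ref{thm:steiner-tree-minimal} (or the $\xor_1$ reduction the paper hints at), and the $1$-median term follows from the same per-cut argument — under the product/uniform hard distribution the pairs separated by any cut combine into one large two-party inner product, so the per-group bounds add (the paper invokes \cite{CG88} for exactly this), and the cuts are then aggregated with an $O(\log{k})$ loss. The only difference is packaging: the paper runs the aggregation through Corollary~\ref{cor:lpo-cc-lb} and Lemma~\ref{lem:lp1-lp2-mtch} (i.e., $\lpmtch$, Theorem~\ref{thm:mtch-lp} and the matchings $M_i$ of Lemma~\ref{lem:worst-match}) rather than integrating directly against a Bourgain cut measure and appealing to the $1$-median vertex, but these are the same calculation.
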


\subsubsection{Proof Overview}
\label{sec:techniques-1}

We give an overview of our proof of Theorem~\ref{cor:lp1-lp2} (specialized to the proof of Theorem~\ref{thm:ed-xor-lb}).
While the LP based lower bound argument for $\xor_n$ in Appendix~\ref{app:lp-cc} is fairly straightforward things get more interesting when we consider $\ED\circ\xor_n$. It turns out that just embedding the hard distribution for $\ED$ from~\cite{CRR14}, one can prove a lower bound of just $\Omega\left(\frac{\sigma_{K_1,\dots,K_t}(G)}{\log{t}}\right)$ (see Lemma~\ref{lem:ed-xor-mfc-lb}). The more interesting part is proving a lower bound of $\tilde{\Omega}\left(\sum_{i=1}^t \ST(G,K_i)\right)$. It is not too hard to connect the {\em upper} bound of $\tilde{O}\left(\sum_{i=1}^t \ST(G,K_i)\right)$ to the following LP, which we dub $\lpst^U(G,K)$ (and is a specialization of $\lpt(G)$):
\[\min \sum_{i=1}^{t} \sum_{e\in E} x_{i,e}\]
subject to
\begin{align*}
\sum_{e\text{ crosses } C} x_{i,e}  & \ge 1 & \text{ for every cut } C\text{ that separates } K \text{ and }i\in [t]\\
x_{i,e}& \ge 0 &\text{ for every } e\in E\text{ and } i\in [t].
\end{align*}
Indeed the above LP is basically solving the sum of $t$ independent linear programs: call them $\lpst(G,K_i)$ for each $i \in [t]$. Hence, one can independently optimize each of these $\lpst(G,K_i)$ and then just put them together to get an optimal solution for $\lpst^U(G,K)$. This matches the claimed upper bounds since it is well-known that the objective value of  $\lpst(G,K_i)$ is $\Theta(\ST(G,K_i))$~\cite{vazirani}.

On the other hand, if one tries the approach we used to prove the lower bound for $\xor_n$, then one picks an appropriate hard distribution $\mu$ and shows that for every cut $C$ the induced two-party problem has a high enough lower bound. In this case, it turns out (see Section~\ref{sec:ed-xor}) that the corresponding two-party lower bound (ignoring constant factors) is the number of sets $K_i$ separated by the cut. Then proceeding as in the argument for $\xor_n$ if one sets $y_e$ to be the expected (under $\mu$) communication for any fixed protocol over any $e\in E$, then $(y_e)_{e\in E}$ is a feasible solution for the following LP, which we dub $\lpst^L(G,K)$ (and is a specialization of $\lpo(G)$):
\[\min \sum_{e\in E} x_e\]
subject to
\begin{align*}
\sum_{e\text{ crosses } C} x_e  & \ge v(C,K) & \text{ for every cut } C\\
x_e& \ge 0 &\text{ for every } e\in E,
\end{align*}
where $v(C,K)$ is the number of subsets $K_i$ that are separated by $C$. If we denote the objective value of the above LP by $\lpst^L(G,K)$, then we have an overall lower bound of $\Omega(\lpst^L(G,K))$. Thus, we would be done if we can show that $\lpst^L(G,K)$ and $\lpst^U(G,K)$ are close. It is fairly easy to see that $\lpst^L(G,K)\le \lpst^U(G,K)$. However, to prove a tight lower bound, we need an approximate inequality in the other direction. We show this is true by the following two step process:
\begin{enumerate}
\item First we observe that if $G$ is a tree $T$ then $\lpst^L(T,K)= \lpst^U(T,K)$.
\item Then we use results from embedding graphs into sub-trees to show that there exists a subtree $T$ of $G$ such that $\lpst^L(G,K)\approx \lpst^L(T,K)$ and $\lpst^U(G,K)\approx \lpst^U(T,K)$, which with the first step completes our proof. 
\end{enumerate}
We would like to remark on three things. First, our proof can handle more general constraints than those imposed by the Steiner tree LP. In particular, we generalize the argument above to prove Theorem~\ref{cor:lp1-lp2}.
 Second, to the best of our knowledge this result relating the objective values of these two similar LPs seems to be new. However, we would like to point out that our proof follows (with minor modifications) a similar structure that has been used to prove other algorithmic results via tree embeddings (e.g. in~\cite{AA97}). Third, we find it interesting to observe that the upper bound on the gap between the two LP's is the key step in accomplishing a distributed direct-sum like result.

\subsection{From Star to Steiner Trees}


We define a multicut $C$ of $K$ to be a collection of non-empty pair-wise disjoint subsets $C_1,\ldots,C_r$ of $K$. Each such subset is called an explicit set of $C$ and the (maybe empty) set $K \setminus \cup_{i=1}^r C_i$ is called its implicit set.
We will call $f:\Sigma^K\to \{0,1\}$ to be $h$-{\em \maxhard} if the following holds for any multicut $C$. There exists a distribution $\mu_C^f$ such that the expected cost (under $\mu_C^f$) of any protocol that correctly computes $f$ on the following star graph is $\Omega(|C|\cdot h(|\Sigma|))$: each leaf of the star has all terminals from an explicit set from $C$, no two leaves have terminals from the same explicit set and the center contains terminals from the implicit set.
The following is our second main technical result:
\begin{thm}
\label{thm:sd-st} Let $f$ be $h$-\maxhard$\text{.}$ Then
\[R(f,G,K,\Sigma)\ge \Omega\left(\frac{\ST(G,K)\cdot h(|\Sigma|)}{\log^2{k}}\right).\]
\end{thm}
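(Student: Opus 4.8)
The goal is to port an $h$-maximally-hard lower bound on stars to an $\Omega\!\bigl(\ST(G,K)\cdot h(|\Sigma|)/\log^2 k\bigr)$ bound on arbitrary $G$. The natural route is the LP/cut framework already set up in the paper: fix an arbitrary randomized protocol $\pi$ for $(f,G,K,\Sigma)$, and let $x_e$ be the expected number of bits sent across edge $e$ under a suitably chosen hard distribution. I want to show that for a well-chosen family of cuts $C$ of $G$, the induced two-party problem forces $\sum_{e \in \delta(C)} x_e \ge \Omega(b(C))$ for appropriate values $b(C)$, so that $(x_e)$ is feasible for an instance of $\lpo(G)$; then the objective $\sum_e x_e$ — hence $R(f,G,K,\Sigma)$ — is at least $\lpo(G)$, which I will argue is $\Omega(\ST(G,K)\cdot h(|\Sigma|))$ up to the stated $\log^2 k$ loss.

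\textbf{Step 1: reducing a graph cut to a star instance.} Given a cut $C = (S, V\setminus S)$ of $G$, the terminals split into $K\cap S$ and $K\setminus S$. I want to contract each side and view the crossing edges $\delta(C)$ as the channel of a two-party problem; but to invoke $h$-maximal-hardness I need a \emph{star}, not a 2-party instance. The key idea (hinted by the definition of $h$-maximally-hard via multicuts) is to use a whole \emph{family} of cuts simultaneously. Concretely, take a partition-style collection: a multicut of $G$ that induces a multicut $\mathcal C = (C_1,\dots,C_r)$ of $K$, where each $C_j$ is the terminal set inside one "cluster" of $G$ and the implicit set sits in a designated central cluster. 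Computing $f$ on $G$ with this clustering simulates computing $f$ on the star whose leaves are the clusters; by $h$-maximal-hardness there is a distribution $\mu$ under which any protocol costs $\Omega(|\mathcal C|\cdot h(|\Sigma|))$ on that star, and this cost must be paid on the edges of $G$ that leave the clusters. Summing this over an appropriate set of clusterings gives, for each cut $C$ separating $K$, a lower bound $b(C)$ proportional to (number of clusters $C$ separates) $\cdot\, h(|\Sigma|)$ — exactly the $v(C,K)$-type constraint from the $\lpst^L$ discussion, scaled by $h(|\Sigma|)$.

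\textbf{Step 2: from the cut-LP to $\ST(G,K)$.} With those constraints, $(x_e)$ is feasible for (a scaling of) $\lpst^L(G,K)$, so $R(f,G,K,\Sigma) \ge \Omega\bigl(h(|\Sigma|)\bigr)\cdot \lpst^L(G,K)$. Now I invoke the LP-comparison machinery of Theorem~\ref{cor:lp1-lp2}: the single-commodity constraints here satisfy the sub-additive property (if $C_1\cup C_2=C_3$ then a cluster separated by $C_3$ is separated by $C_1$ or $C_2$, so $v(C_3,K)\le v(C_1,K)+v(C_2,K)$), hence $\lpst^L(G,K) \ge \Omega\bigl(\lpst^U(G,K)/\distrbnd\bigr)$, and $\lpst^U(G,K)$ — being a sum of $\Theta(\log k)$-many Steiner-tree LPs or, more carefully, a single commodity problem whose optimum is $\Theta(\ST(G,K))$ — is $\Omega(\ST(G,K))$. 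Chaining these gives $R(f,G,K,\Sigma) \ge \Omega\!\bigl(\ST(G,K)\cdot h(|\Sigma|)/(\distrbnd)\bigr)$; tightening the clustering construction (using a single balanced family rather than $\log k$ nested ones, and exploiting that $|\Sigma|$ is fixed so we lose only in $k$) is what brings the denominator down to $\log^2 k$.

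\textbf{Main obstacle.} The delicate point is Step 1: manufacturing, for an arbitrary $G$, a family of clusterings of $K$ such that (a) each clustering genuinely reduces to a star instance on which $h$-maximal-hardness applies (the implicit set must be routable to a single "center" without the protocol getting extra information for free — this is where the silence-counts-as-one-bit convention and the point-to-point assumption are used), and (b) the resulting cut constraints are strong enough, for \emph{every} cut, to certify feasibility for $\lpst^L(G,K)$ rather than some weaker LP. Balancing these — enough clusterings to cover all cuts, but few enough that the $h(|\Sigma|)$ factor is not diluted — is the crux, and I expect it to cost the two $\log k$ factors in the denominator. Everything after that is the already-developed tree-embedding / LP-duality argument.
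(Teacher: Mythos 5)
Your plan has a genuine gap at its core: Step 1 tries to convert the star lower bound into per-cut two-party constraints of the form $\sum_{e\in\delta(C)}x_e\ge v(C,K)\cdot h(|\Sigma|)$ for \emph{every} cut $C$, so as to certify feasibility for an LP of the $\lpst^L$ type. But the hypothesis of $h$-maximal hardness does not supply this. It only guarantees, for each multicut $\mathcal C$ of $K$, a \emph{multicut-specific} distribution $\mu_{\mathcal C}^f$ under which the \emph{total} communication leaving the clusters of that one multicut is $\Omega(|\mathcal C|\cdot h(|\Sigma|))$; it says nothing about the communication across an arbitrary bipartition under a single fixed distribution. The LP feasibility argument (Lemma~\ref{lem:lpo-cc-lb}) requires one global distribution $\mu$ whose induced two-party problems are simultaneously hard across all cuts, and the paper's motivation for this theorem is precisely that no such distribution is known for functions like Set-Disjointness --- this is why the two-party/cut-LP route of the other results is abandoned here. ``Summing over an appropriate set of clusterings'' does not repair this: each clustering comes with its own hard distribution, and mixing more than $O(\log k)$ of them dilutes every individual constraint below the claimed bound.

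The paper's actual proof avoids the cut-LP entirely. It constructs an explicit $(O(\log k),1/3)$-multicut family (Definition~\ref{def:multicut-family}) by growing balls around the terminals, a process that simulates Bor\.uvka's algorithm on the metric closure of $K$; within each collection $\cC_i$ the nested multicuts have pairwise \emph{disjoint} cut-edge sets, so a single distribution $\mu_i$ (built from the singleton explicit sets) makes every multicut in $\cC_i$ hard, and the star lower bounds are added up directly by linearity of expectation (Lemma~\ref{lem:multicut=>sd-lb}) --- no tree embedding, no Theorem~\ref{cor:lp1-lp2}. The quantity $\sum_{i,j}|C_i^{(j)}|$ is then related to the MST of the metric closure and hence to $\ST(G,K)$ (Lemma~\ref{lem:sum-cost=st}). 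The two $\log k$ losses come from mixing $O(\log k)$ distributions and from simulating a graph protocol on a star, not from covering cuts or from sub-tree embeddings as your Step 2 suggests. The ball-growing/Bor\.uvka construction with disjoint cut-edge sets is the missing idea; without it the proposal does not go through.
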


The above result easily implies the lower bound (see Section~\ref{sec:disj} ) in Theorem~\ref{thm:sd-lb}. 
%
%
\yell{I think the parameters for $\TRIBES$ is wrong above: i.e. $K$ should probably be broken up further. --Atri}
\ayell{I removed the Tribes result for now because of what I mentioned in the email. --Arkadev}
Theorem~\ref{thm:sd-st} can also be used to prove a lower  bound similar to Theorem~\ref{thm:sd-lb} above for the Tribes function using the lower bound for Tribes on the star topology from~\cite{CM14}. 
 We defer the proof of this claim to the full version of the paper.

\subsubsection{Proof Overview}
\label{sec:techniques-2}

In all of the arguments so far, we reduce the lower bound problem on $(G,K)$ to a bunch of two party lower bounds induced by cuts. However, we are not aware of any hard distribution such that one can prove a tight lower bound that reduces the set disjointness problem to a bunch of two-party lower bounds. In fact, the only non-trivial lower bound for set disjointness, in the point-to-point model, that we are aware of is the $\Omega(kn)$ lower bound for the $k$-star by Braverman et al.~\cite{BEOPV13}. In particular, their proof does not seem to work by reducing the problem to two-party lower bounds. In this work, we are able to extend the set disjointness lower bound of~\cite{BEOPV13} to Theorem~\ref{thm:sd-lb}. 

We prove Theorem~\ref{thm:sd-lb} by modifying the argument in~\cite{CRR14} as follows. Essentially the idea in~\cite{CRR14} is to construct a collection of cuts such that essentially every edge participates in $O(\log{k})$ cuts and one can prove the appropriate two-party lower bound across each of the cuts in the collection so that when one sums up the contribution from each cut one gets the appropriate $\Omega(\sigma_K(G)/\log{k})$ overall lower bound. (These collection of cuts were obtained via Bourgain's $L_1$ embedding~\cite{Bou85,LLR95}. As mentioned earlier, this trick does not seem to work for set disjointness and it is very much geared towards $1$-median type functions). We modify this idea as follows: we construct a collection of {\em multi-cuts} such that (i) every edge in $G$ appears in at most one multi-cut and (ii) one can use lower bounds on star graph to compute lower bounds for the induced function on each multi-cut, which can then be added up.

The main challenge in the above is to construct an appropriate collection of multi-cuts that satisfy properties (i) and (ii) above. The main idea is natural: we start with balls of radius $0$ centered at each of the $k$ terminals and then one grows all the balls at the same rate. When two balls intersect, we combine the two balls and grows the larger ball appropriately. The multi-cut at any point of time is defined by the vertices in various balls. To argue the required properties, we observe that the algorithm above essentially simulates Bor\.{u}vka's algorithm~\cite{boruvka} on the {\em metric closure} of $K$ with respect to the shortest path distances in $G$. In other words, we show that the sum of the contributions of the lower bounds from each multi-cut is related to the MST on the metric closure of $K$ with respect to $G$, which is well-known to be closely related to $\ST(G,K)$ (see e.g.~\cite[Chap. 2]{vazirani}). It turns out that for set disjointness, one has to define $O(\log{k})$ different hard distributions (that depend on the structure of the multi-cuts above) and this is the reason why we lose a $O(\log{k})$ factor in our lower bound. (We lose another $O(\log{k})$ factor since we use lower bounds on the star topology.) To the best of our knowledge this is the first instance where the hard distribution actually depends on the graph structure-- most of our results as well as those preceding ours use hard distributions that are {\em independent} of the graph structure. This argument generalizes easily to prove Theorem~\ref{thm:sd-st}.

\section{Open Questions}
\label{sec:concl}

\yell{ARKADEV: I'm not sure how much stress we should put on open question. If we list too many of them would it make our current results look too preliminary? Am not sure. For now am listing the main open questions I can think of. --Atri}
\ayell{ATRI: I think the first two are fine. I am commenting out the third for now. It may have a simple counter-argument. --Arkadev}

We conclude by pointing out two of the many open questions that arise from our work:
\begin{enumerate}
\item Our two main technical tools are complementary. Theorem~\ref{cor:lp1-lp2} works for the case when the set of terminals  $K$ is divided into sets $K_1,\dots,K_t$ and one applies some inner functions on these $K_i$'s. Theorem~\ref{cor:lp1-lp2} allows us to prove a sort of direct sum result in this case. 
However, this technique reduces the problem on $(G,K)$ to a bunch of two-party lower bounds. On the other hand, Theorem~\ref{thm:sd-st} transforms the problem on $(G,K)$ to lower bounds on star graphs. However, this cannot prove a direct sum type lower bound (and also only handles Steiner tree type constraints). A natural question to ask is if one can get the best of both worlds, i.e. can we show a direct sum type lower bound of the kind $\Omega(\sum_{i=1}^t \ST(G,K_i))$ by reducing the problem to a bunch of lower bounds on the star topology?
\yell{Should we explicitly mention where the argument for Lemma~\ref{lem:multicut=>sd-lb} fails for multiple inner functions? --Atri}
\ayell{I think the above level of detail is appropriate. --Arkadev}
\item In this paper we only present results for specific $f\circ g$. 
 It would be nice to prove our conjecture from the introduction: if the inner function is a (linear) Steiner tree type and the outer function is a (linear) $1$-median type function, then the trivial two-stage algorithm is optimal for $f\circ g$. There are several avenues to pursue this. One such is to extend the {\em XOR lemma} (which corresponds to proving that the naive protocol is optimal for $\xor\circ g$) of Barak et al.~\cite{BBCR13} from the two-party communication setting to ours (as long as $g$ is of 1-median type).
\end{enumerate}

%
%

\bibliographystyle{plain}
\bibliography{two-lps,references-ar,references-ac}

\section*{Acknowledgments}

Thanks to Jaikumar Radhakrishnan for pointing out that cost of minimum Steiner tree can bound the communication complexity of a class of functions.
Many thanks to Anupam Gupta for answering our questions on tree embeddings and related discussions.

We would like to thank the organizers of the \href{http://www.dagstuhl.de/en/program/calendar/semhp/?semnr=14391}{2014 Dagstuhl seminar on Algebra in Computational Complexity} for inviting us to Dagstuhl, where some of the results in this paper were obtained.

AC is supported by a Ramanujan Fellowship of the DST and AR is supported in part by NSF grant CCF-0844796.

\section*{Notes on the Appendix}


Further, some of our results hold for the case when more than one input is assigned to the same terminal, i.e. we have a {\em multi-set} of terminals. In the appendix, we will use $\cK$ to denote the case of the set of terminals being a multi-set and $K$ to denote the case that the set of terminals is a proper set.

\appendix

\section{Related Work in Distributed Computing}
\label{app:dc}

Not surprisingly, the role of topology in computation has been studied extensively in distributed computing~\cite{peleg-book}. There are three main differences between works in this literature and ours. First, the main objective in distributed computation is to minimize the end to end delay of the computation, which in communication complexity terminology corresponds to the number of rounds need to compute a given function. By contrast, we mostly consider the related but different measure of the total amount of communication. Second,
  the effect of network topology on the cost of
  communication has been analyzed to quite an extent when the networks
  are \emph{dynamic} (see for
  example the recent survey of Kuhn and Oshman \cite{KO11}). By
  contrast, in this paper we are concerned with static networks of
  arbitrary topology. Finally, there has also been work on proving lower bounds for distributed computing on static networks, see e.g. the recent work of Das Sarma et al.~\cite{das-sarma}. This line of work differs from ours in at least two ways. First, their aim is to prove lower bounds on the number of rounds needed to compute, especially when the edges of the graph are capacitated. This paper, on the other hand, focuses on the total communication needed without placing any restriction on the capacities of the edges or the number of rounds involved.
Second, the kinds of functions considered in the distributed computing community (for recent papers see e.g.~\cite{das-sarma,L13,DKO14}) are generally of a different nature than the kinds of functions that we consider in this paper (which are more influenced by the functions typically considered in the communication complexity literature). For many functions in distributed computing,  the function $f$ itself depends on $G$ (e.g. computing the diameter of $G$, the cost of the MST of $G$ etc.) while all the functions we consider are independent of $G$-- indeed we want to keep the function $f$ the same and see how its communication complexity changes as we change $G$. Further, even for the case when $f$ is independent of $G$ (e.g. sorting) typically one has $k=n$  and $V=\cK$ while in our case we have arbitrary $\cK$ and $|V|$ and $n$ are independent parameters. (There is a very recent exception in~\cite{KNPR13}.)

\section{Communcation Complexity Lower Bounds via LPs}
\label{app:lp-cc}

A basic idea in our technique, is to understand the topological constraints placed on the communication demands of the problem by considering cuts of a graph. The general idea of using cuts for this purpose has appeared in many places before like network coding (ex: \cite{lili1,lili2,lili3} and function computation in sensor networks (ex: \cite{KK12}). But the idea of using several cuts rather than a single cut that we describe next is primarily borrowed from \cite{CRR14} (similar though slightly less general arguments were also made in~\cite{T87,PVZ12}).  The original problem $\big(f,\{0,1\}^n, G, K\big)$ naturally gives rise to a classical 2-party problem across a cut $C = (V^A,V^B)$, where $V^A,V^B$ partition the set of vertices $V(G)$. In the 2-party problem, Alice gets the inputs of the terminals in $K^A \equiv K \cap V^A$ and Bob gets the inputs of terminals in $K^B \equiv K \cap V^B$. Alice and Bob compute $f^C$, the induced problem on the cut. A protocol $\Pi$ solving $f$ induces protocol $\Pi'$ for Alice and Bob as follows: let $\delta(C)$ be the set of cut-edges. As long as $\Pi$ does not send any bits across any edge in $\delta(C)$, Alice and Bob simulate $\Pi$ internally with no communication to each other. If $\Pi$ communicates bits through edges in $\delta(C)$, Alice sends exactly those bits to Bob that were sent in $\Pi$ from vertices in $V^A$ to vertices in $V^B$ via some pre-determined encoding in $\Pi'$. Bob then sends to Alice the bits sent in the other direction in $\Pi$. Thus, the 2-party problem gets solved in essentially the same cost as the total number of bits sent over edges of $\delta(C)$ by $\Pi$. However, the simple thing to note is that if $f^C$ is known to have large 2-party communication complexity of $b(C)$, then that places a communication demand of $b(C)$ across the cut $C$. 

We would like to say that we understand the communication bottlenecks in the graph, as is often done in analyzing network flows, by specifying this demand $b(C)$ from our understanding of 2-party communication complexity. An obvious problem is the following: usually randomized 2-party communication complexity specifies ``worst-case'' complexity. The worst-case cost locally across each cut $C$ may not correspond to a globally consistent input. It was observed in \cite{CRR14} that there is a simple fix to this. We define a global input distribution $\mu$ such that the ``expected'' communication cost of the 2-party problem across cut $C$  w.r.t the induced distribution $\mu_C$ is $b(C)$. Then, the use of linearity of expectation helps us analyze the expected communication cost of the original problem. This idea was used in \cite{CRR14} by using a special family of cuts obtained from $L_1$ embeddings of graphs. This worked well to give 1-median type lower bounds, where the demand function $b(C)$ was of a specific type. In this work, we want to deal with more varied demand functions. It turns out to be more convenient and (in hindsight) more natural for us to write these two-party communication constraints as a linear program (LP). This helps us not only to recover the bounds for the 1-median type functions but also to obtain tight bounds for other types of functions.  

We illustrate the use of an LP in our setting by considering the bit-wise xor function: given inputs $X^i=(X^i_1,\dots,X^i_n)\in\{0,1\}^n$ for every $i\in K$, the function $\xor_n:\left(\{0,1\}^n\right)^K\to\{0,1\}^n$ is defined as follows:
$\xor_n\left((X^i)_{i\in K}\right) = \left( \left(\bigoplus_{i\in K} X^i_j\right)_{j=1}^n\right)$,
where $\oplus$ denote the boolean xor function. It is easy to see that we can compute this function by successively computing the bit-wise xor values of inputs along the Minimum Steiner tree for $K$, which implies an upper bound of $O(\ST(G,K)\cdot n)$.  
We now show how one can prove an $\Omega(\ST(G,K)\cdot n)$ lower bound for $\xor_n$. 
 Let $\mu$ be the hard distribution that assign an independent and uniformly random vector from $\{0,1\}^n$ to each of the $k$ terminals. Now fix any protocol $\Pi$ that correctly solves the $\xor_n$ function on $(G,K)$ on all inputs. Now consider a cut $C$ of $G$ that separates the terminal set $K$. If one now considers the induced two party problem, it is not too hard to see that if Alice gets the vectors on one side of $C$ and Bob gets the rest of the input then Alice and Bob are trying to solve the two-party bit-wise XOR function. In particular, Alice and Bob have two vectors\footnote{$A$ is the bit-wise xor of all the inputs on Alice's side and $B$ is the bit-wise XOR of all the input on Bob's side.} $A,B\in\{0,1\}^n$ and they want to compute $\xor_n(A,B)$. $\Pi$ thus induces a bounded error randomized protocol for Alice and Bob where they communicate only bits that $\Pi$ communicates on cut-edges. Further, the induced distribution $\mu_C$ on $(A,B)$ is the uniform distribution on $\{0,1\}^n\times \{0,1\}^n$.  It is not difficult to use an entropy argument and conclude that the two-party problem has an expected (under $\mu_C$) communication complexity lower bound of at least $\alpha\cdot n$ for some absolute constant $\alpha>0$. Now for every $e\in E$, define $x_e$ to be expected total communication through edge $e$ by $\Pi$ under $\mu$. Then the argument above and linearity of expectation implies that the expected total communication complexity of $\Pi$  (scaled down by a factor of $\alpha\cdot n$) is lower bounded by the objective value of the following linear program, which we will dub $\lpst(G,K)$:
\[\min \sum_{e\in E} x_e\]
subject to
\begin{align*}
\sum_{e\text{ crosses } C} x_e&\ge 1 &\text{ for every cut } C \text{ that separates } K\\
x_e&\ge 0&\text{ for every }e\in E.
\end{align*}
By abuse of notation let $\lpst(G,K)$ also denote the objective value of the above LP. It is well-known that $\lpst(G,K)$ is $\Theta(\ST(G,K))$ (see Theorem~\ref{thm:st-lp}), which with the above discussion implies the desired lower bound of $\Omega(\ST(G,K)\cdot \alpha\cdot n)=\Omega(\ST(G,K)\cdot n)$ for the $\xor_n$ problem. 


\section{More Details on Graph Parameters}
\label{sec:graphs}

It is well-known that $\ST(G,K)$ is closely related to  $\lpst(G,K)$ (see e.g.~\cite{vazirani}):
\begin{thm}
\label{thm:st-lp}
\[ \lpst(G,K) \le \ST(G,K)\le 2\cdot\lpst(G,K).\]
\end{thm}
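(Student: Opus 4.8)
The statement decomposes into the two inequalities $\lpst(G,K)\le \ST(G,K)$ and $\ST(G,K)\le 2\,\lpst(G,K)$; the first is immediate and the second carries all the content.

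\textbf{The inequality $\lpst(G,K)\le \ST(G,K)$.} Fix a minimum Steiner tree $T$ for $K$ with edge set $F$; recall that in our model the cost of a subgraph is its number of edges, so $|F|=\ST(G,K)$. Set $x_e=1$ for $e\in F$ and $x_e=0$ otherwise. For any cut $C$ that separates $K$ there are terminals on opposite sides, and the $T$-path between them must use an edge crossing $C$, so $\sum_{e\text{ crosses }C}x_e\ge 1$. Hence $x$ is feasible for $\lpst(G,K)$ with objective value $|F|$, and since $\lpst$ is a minimization, $\lpst(G,K)\le\ST(G,K)$.

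\textbf{The inequality $\ST(G,K)\le 2\,\lpst(G,K)$.} This is the assertion that the undirected cut relaxation of Steiner tree has integrality gap at most $2$, and I would prove it by LP duality together with a Bor\.{u}vka/moat-growing argument in the style of Goemans--Williamson. By LP duality, $\lpst(G,K)$ equals the maximum of $\sum_C y_C$ over nonnegative weights $y_C$ on Steiner cuts subject to $\sum_{C:\,e\text{ crosses }C}y_C\le 1$ for every $e\in E$ --- a fractional packing of Steiner cuts under unit edge capacities. I would build such a packing by growing balls: start a radius-$0$ ball at each terminal, grow all currently active balls at a common rate, and track the partition of $K$ induced by the connected components of the union of balls (two terminals merge once their balls lie in one component). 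Charging $dt$ to the cut $\delta(\text{cluster})$ for each current cluster accumulates dual weight; as in the Goemans--Williamson analysis, one freezes a moat as soon as one of its boundary edges goes tight, which keeps the constructed $y$ feasible, while the edge set produced by the merges, expanded to shortest paths in $G$, is a Steiner tree of cost at most $2\sum_C y_C$. Therefore $\ST(G,K)\le 2\sum_C y_C\le 2\,\lpst(G,K)$. An equivalent route goes through the metric closure: writing $\mathrm{MST}(K)$ for the weight of a minimum spanning tree of the complete graph on $K$ with edge lengths $d_G$, replacing each metric-closure edge by a shortest path of $G$ gives $\ST(G,K)\le\mathrm{MST}(K)$, and the synchronized ball-growing above is single-linkage clustering of $K$ under $d_G$, which yields a feasible dual of value at least $\tfrac12\mathrm{MST}(K)$, hence $\mathrm{MST}(K)\le 2\,\lpst(G,K)$.

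\textbf{Main obstacle.} The first inequality and the $\ST(G,K)\le\mathrm{MST}(K)$ step are routine; the real work is the lower bound $\lpst(G,K)\ge\tfrac12\,\ST(G,K)$. One must pack Steiner cuts carefully enough that no edge is overloaded --- each edge sits simultaneously on the boundary of the two clusters it joins, and moats have to stop growing exactly when a boundary edge becomes tight --- which is precisely the Goemans--Williamson bookkeeping; a naive continuous moat packing loses an extra factor of $2$. The clean factor of $2$ is essentially optimal, since this relaxation has integrality gap $2-2/|K|$, so this direction genuinely requires the primal--dual freezing step and cannot be shortened much. Alternatively, the inequality $\ST(G,K)\le 2\,\lpst(G,K)$ may simply be quoted as the standard $2$-approximation/integrality-gap bound for the cut relaxation of Steiner tree.
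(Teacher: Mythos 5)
Your proposal is correct, and it matches what the paper does: the paper gives no proof of Theorem~\ref{thm:st-lp}, simply citing it as the well-known integrality-gap bound for the undirected cut relaxation of Steiner tree (via~\cite{vazirani}), and your sketch --- the trivial primal feasibility of the indicator vector of a minimum Steiner tree for the first inequality, and the Goemans--Williamson/Agrawal--Klein--Ravi primal--dual moat-growing dual packing for the factor-$2$ upper bound --- is exactly the standard argument from that reference. Your closing remark that the second inequality "may simply be quoted" is precisely the route the paper takes.
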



The quantity $\sigma_k(G)$ is closely related to the following LP, which we will dub $\lpmfc(G,K)$:
\[\min \sum_{e\in E} x_e\]
subject to
\begin{align}
\label{eq:lpmfc-cons}
\sum_{e\text{ crosses } C} x_e&\ge \min(|C|,|K\setminus C|) &\text{ for every cut } C\\
x_e&\ge 0&\text{ for every }e\in E. \notag
\end{align}
By abuse of notation let $\lpmfc(G,K)$ also denote the objective value of the above LP.
The following result was implicitly argued in~\cite{CRR14}. For the sake of completeness we present a proof in Appendix~\ref{sec:graphs}.
\begin{thm}
\label{thm:mfc-lp}
\[ \lpmfc(G,K) \ge \Omega\left(\frac{\sigma_K(G)}{\log{k}}\right).\]
\end{thm}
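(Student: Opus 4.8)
The plan is to lower bound the LP value $\lpmfc(G,K)$ by exhibiting a feasible solution to the \emph{dual} LP with large objective value, and to relate that dual objective to $\sigma_K(G)$. Recall that $\sigma_K(G) = \min_{u\in V}\sum_{w\in K} d_G(u,w)$, so a natural target is to produce dual weights supported on cuts that, when aggregated, force every vertex $u$ to "pay" roughly $\sum_{w\in K}d_G(u,w)$ (up to the $\log k$ loss). The dual of $\lpmfc(G,K)$ assigns a nonnegative weight $y_C$ to each cut $C$, subject to $\sum_{C: e \text{ crosses } C} y_C \le 1$ for every edge $e$, and maximizes $\sum_C \min(|C|,|K\setminus C|)\cdot y_C$. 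By LP duality this maximum equals $\lpmfc(G,K)$, so it suffices to build one good dual-feasible $\{y_C\}$.

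The key construction I would use is the ball-growing / $L_1$-embedding family of cuts already used in \cite{CRR14}: for each terminal $w\in K$ and each radius threshold, consider the cut $C(w,r) = \{v : d_G(v,w) \le r\}$. Summing the edge-load of all such cuts as $r$ ranges over $[0,\infty)$, each edge $e=(a,b)$ is crossed by $C(w,r)$ for $r$ in an interval of length exactly $|d_G(a,w)-d_G(b,w)| \le w(e)$ (assuming unit-length edges, length $1$), so scaling each $y_{C(w,r)}$ down by a factor $\Theta(k)$ (the number of terminals whose ball families can simultaneously load an edge) — or more carefully, by the standard $O(\log k)$ factor coming from a padded/Bourgain-style decomposition so that only $O(\log k)$ of the balls are "active" on any edge at once — makes the weights edge-feasible. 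The contribution of the ball around $w$ to the objective is then $\int_0^{\mathrm{diam}} \min(|C(w,r)|, |K\setminus C(w,r)|)\,dr$, and I would argue this is $\Omega\big(\sum_{w'\in K} d_G(w,w')\big)$ up to constants by noting that $r < d_G(w,w')$ implies $w'\notin C(w,r)$, so the "small side" count is at least the number of terminals still outside the ball; integrating over $r$ recovers $\sum_{w'} d_G(w,w')$ minus lower-order terms. Averaging (or summing and rescaling) over all $w\in K$ and using that $\sigma_K(G) \le \frac{1}{k}\sum_{w\in K}\sum_{w'\in K} d_G(w,w')$ is within a constant factor of $\min_u \sum_{w\in K} d_G(u,w)$ — this last comparison is the standard fact that the $1$-median cost and the average pairwise terminal distance agree up to a factor of $2$ — yields the claimed bound $\lpmfc(G,K) \ge \Omega(\sigma_K(G)/\log k)$.

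The main obstacle I anticipate is controlling the edge-congestion of the cut family to keep the $\log k$ loss and no worse: naively summing ball cuts over all $k$ terminals loads an edge up to $\Omega(k)$ times, which would only give $\sigma_K(G)/k$. The fix is to invoke the same machinery as in \cite{CRR14}, namely a low-diameter / padded decomposition (Bourgain-style $L_1$ embedding of the shortest-path metric on $V$) which produces $O(\log k)$ "scales" of randomly shifted partitions such that each edge is cut in only $O(1)$ parts per scale in expectation; one then takes the cuts induced by these partitions as the dual-feasible family. Since Theorem~\ref{thm:mfc-lp} is stated as "implicitly argued in \cite{CRR14}," I expect the proof to largely be a matter of extracting this dual-feasibility bookkeeping from their $L_1$-embedding argument and packaging it as an LP statement, rather than introducing a genuinely new idea; the one thing to be careful about is that the RHS of constraint~\eqref{eq:lpmfc-cons} is $\min(|C|,|K\setminus C|)$ rather than $1$, which is exactly what lets the ball cuts (whose small side can be large when the ball is small and $K$ is spread out) contribute proportionally to distances rather than just to $1$.
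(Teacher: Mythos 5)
Your overall architecture matches the paper's: both arguments aggregate the LP constraints over the $O(\log k)$-congestion cut family coming from Bourgain's embedding (the paper sums the primal constraints over that family rather than writing down an explicit dual solution, but this is the same computation). The gap is in how you lower bound the objective contribution of the cuts. You claim that for a single ball family around a terminal $w$, the integral $\int_r \min\left(|C(w,r)|,|K\setminus C(w,r)|\right)\,dr$ is $\Omega\left(\sum_{w'\in K} d_G(w,w')\right)$ because ``the small side count is at least the number of terminals still outside the ball.'' That inequality goes the wrong way: the small side is the \emph{minimum} of the two sides, and for small $r$ the ball may contain only $w$ itself, so the minimum is $1$, not $k$ minus the ball's terminal count. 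Concretely, if $w$ sits at distance $D$ from a cluster containing the other $k-1$ terminals, your integral is $\Theta(D)$ while $\sum_{w'} d_G(w,w') = \Theta(kD)$ --- a factor-$k$ loss that would destroy the bound.

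The missing step is the elementary inequality $\min(|C|,|K\setminus C|) \ge |C|\cdot|K\setminus C|/k$, which converts the right-hand side of each constraint into $(1/k)$ times the number of terminal pairs separated by $C$. Summing over the Bourgain cut collection $\cC$ and using its guarantee that every pair $(u,v)$ is separated by at least $d_G(u,v)$ cuts then gives $\sum_{C\in\cC}\min(|C|,|K\setminus C|) \ge \frac1k\sum_{u\neq v\in K} d_G(u,v) = \frac1k\sum_{u\in K}\sigma_K(u)\ge \sigma_K(G)$, and dividing by the $O(\log k)$ edge congestion finishes the proof. With that substitution your argument closes; note also that you do not need a separate per-terminal ball-growing family at all --- the Bourgain cuts alone carry both the congestion bound and the pair-separation bound.
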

\begin{proof}
Let $\cC$ be the collection of cuts in $G$ guaranteed by Bourgain's embedding~\cite{Bou85,LLR95} that has the following two guarantees: (i) Every edge is cut by $\beta=O(\log{k})$ cuts in $\cC$ and (ii) for every $u\neq v\in V$, the pair is separated by at least $d_G(u,v)$ cuts in $\cC$.

Using the constraint~\eqref{eq:lpmfc-cons} over all cuts in $\cC$, we get that for the optimal solution $(x_e)_{e\in E}$ for $\lpmfc(G,K)$ (we use property (ii) of $\cC$ in the third inequality)
\begin{align*}
\sum_{C\in\cC} \sum_{e\in \delta(C)} x_e &\ge \sum_{C\in\cC} \min(|C|,|K\setminus C|)\\
&\ge \sum_{C\in\cC} \frac{|C|\cdot|K\setminus C|}{k}\\
&= \sum_{C\in\cC} \frac{\left| \{ (u,v)|u\neq v\in K, C\text{ separates } (u,v)\}\right|}{k}\\
&= \frac{1}{k}\cdot \sum_{u\neq v\in K} \left|\{C\in\cC| C\text{ separates } (u,v)\}\right|\\
&\ge \frac{1}{k}\cdot \sum_{u\neq v\in K} d_G(u,v)\\
&= \frac{1}{k}\sum_{u\in K}\sum_{v\in K, v\neq u} d_G(u,v)\\
&= \frac{1}{k}\sum_{u\in K} \sigma_K(u)\\
&\ge \sigma_K(G).
\end{align*}
Finally by property (i) of $\cC$ we have that $\sum_{C\in \cC} \sum_{e\in \delta(C)} x_e \le \beta\cdot \sum_{e\in E} x_e$, which with the above inequality implies that $\sum_{e\in E} x_e\ge \sigma_K(G)/\beta$, as desired.
\end{proof}

\cite{CRR14} also considered another graph parameter. Given the graph $G=(V,E)$, the subset of even number of terminals $K$ and a partition $M$ of $K$ into sets of size exactly two, define $d(G,M)=\sum_{(u,v)\in M} d_G(u,v)$.
The quantity $d(G,M)$ is related to the following LP, which we will dub $\lpmtch(G,K,M)$:
\[\min \sum_{e\in E} x_e\]
subject to
\begin{align}
\label{eq:lpmtch-cons}
\sum_{e\text{ crosses } C} x_e&\ge m(C,M) &\text{ for every cut } C\\
x_e&\ge 0&\text{ for every }e\in E \notag,
\end{align}
where $m(C,M)$ is the number of pairs in $M$ separated by $C$.
By abuse of notation let $\lpmtch(G,K,M)$ also denote the objective value of the above LP.
The following result was implicitly argued in~\cite{CRR14}. For the sake of completeness we present a proof.
\begin{thm}
\label{thm:mtch-lp}
\[ \lpmtch(G,K,M) \ge \Omega\left(\frac{d(G,M)}{\log{k}}\right).\]
\end{thm}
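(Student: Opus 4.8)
The plan is to follow exactly the same Bourgain-embedding argument that was used to prove Theorem~\ref{thm:mfc-lp}, only replacing the bound $\min(|C|,|K\setminus C|)$ by the matching-separation count $m(C,M)$. Concretely, let $\cC$ be the collection of cuts in $G$ guaranteed by Bourgain's embedding~\cite{Bou85,LLR95}, with the two properties: (i) every edge of $G$ is cut by at most $\beta = O(\log k)$ cuts in $\cC$, and (ii) for every $u\neq v\in V$, the pair $(u,v)$ is separated by at least $d_G(u,v)$ cuts in $\cC$. Let $(x_e)_{e\in E}$ be an optimal solution to $\lpmtch(G,K,M)$.

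The key computation is to sum the constraint~\eqref{eq:lpmtch-cons} over all $C\in\cC$ and then swap the order of summation, using the definition of $m(C,M)$ as the number of pairs of $M$ that $C$ separates:
\begin{align*}
\sum_{C\in\cC}\sum_{e\in\delta(C)} x_e
&\ge \sum_{C\in\cC} m(C,M)\\
&= \sum_{C\in\cC}\left|\{(u,v)\in M : C\text{ separates }(u,v)\}\right|\\
&= \sum_{(u,v)\in M}\left|\{C\in\cC : C\text{ separates }(u,v)\}\right|\\
&\ge \sum_{(u,v)\in M} d_G(u,v)\\
&= d(G,M),
\end{align*}
where the last inequality is property (ii) of $\cC$ applied to each matched pair. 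Note that, unlike the $\sigma_K$ case, here no averaging step (the $|C|\cdot|K\setminus C|/k$ manipulation) is needed, since $m(C,M)$ already directly counts separated pairs; so the argument is in fact slightly simpler than the one for Theorem~\ref{thm:mfc-lp}.

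To finish, I would invoke property (i): each edge $e$ lies in $\delta(C)$ for at most $\beta$ cuts $C\in\cC$, hence $\sum_{C\in\cC}\sum_{e\in\delta(C)} x_e \le \beta \cdot \sum_{e\in E} x_e$. Combining with the chain of inequalities above gives $\sum_{e\in E} x_e \ge d(G,M)/\beta = \Omega(d(G,M)/\log k)$, which is precisely $\lpmtch(G,K,M)\ge \Omega(d(G,M)/\log k)$ as claimed.

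I do not anticipate a genuine obstacle here: this is a routine adaptation of the proof of Theorem~\ref{thm:mfc-lp}, and the only thing to be mildly careful about is that $m(C,M)$ is defined with respect to pairs of $M$ (not arbitrary terminal pairs), so property (ii) of the Bourgain cuts must be applied only to the $k/2$ matched pairs — which is fine, since property (ii) holds for every pair of vertices. If one wanted the matching direction of the equivalence (an upper bound $\lpmtch(G,K,M)\le O(d(G,M))$, analogous to Theorem~\ref{thm:st-lp}) that would require exhibiting a feasible fractional solution of small cost, e.g. routing a unit of flow along each shortest $u$–$v$ path for $(u,v)\in M$; but the statement as written only asks for the lower bound, so the Bourgain argument above suffices.
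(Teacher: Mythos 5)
Your proposal is correct and is essentially identical to the paper's own proof: both sum the constraint over the Bourgain cut collection, swap the order of summation to count separating cuts per matched pair, lower-bound that count by $d_G(u,v)$ via property (ii), and divide by the $O(\log k)$ edge-multiplicity from property (i). Your observation that the averaging step from Theorem~\ref{thm:mfc-lp} is unnecessary here matches what the paper does.
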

\begin{proof}
Let $\cC$ be the collection of cuts in $G$ guaranteed by Bourgain's embedding~\cite{Bou85,LLR95} that has the following two guarantees: (i) Every edge is cut by $\beta=O(\log{k})$ cuts in $\cC$ and (ii) for every $u\neq v\in V$, the pair is separated by at least $d_G(u,v)$ cuts in $\cC$.

Using the constraint~\eqref{eq:lpmtch-cons} over all cuts in $\cC$, we get that for the optimal solution $(x_e)_{e\in E}$ for $\lpmtch(G,K,M)$ (we use property (ii) of $\cC$ in the last inequality)
\begin{align*}
\sum_{C\in\cC} \sum_{e\in \delta(C)} x_e &\ge \sum_{C\in\cC} \left| \{ (u,v)\in M| C\text{ separates } (u,v)\}\right|\\
&=  \sum_{(u,v)\in M} \left|\{C\in\cC| C\text{ separates } (u,v)\}\right|\\
&\ge \sum_{(u,v)\in M} d_G(u,v)\\
&= d(G,M).
\end{align*}
Finally by property (i) of $\cC$ we have that $\sum_{C\in \cC} \sum_{e\in \delta(C)} x_e \le \beta\cdot \sum_{e\in E} x_e$, which with the above inequality implies that $\sum_{e\in E} x_e\ge d(G,M)/\beta$, as desired.
\end{proof}

Finally, the quantities $\sigma_K(G)$ and the worst-case $d(G,M)$ are within a factor of $2$ of each other:
\begin{lemma}[\cite{CRR14}]
\label{lem:worst-match}
Let $K$ be a set of even number of terminals and let $\cM(K)$ denote the set of all disjoint pairings in $K$. Then
\[\frac{1}{2}\cdot \sigma_K(G)\le \max_{M\in\cM(K)}d(G,M)\le \sigma_K(G).\]
\end{lemma}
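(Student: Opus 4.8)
### Proof Proposal for Lemma~\ref{lem:worst-match}

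The plan is to establish the two inequalities separately. The upper bound $\max_{M\in\cM(K)} d(G,M) \le \sigma_K(G)$ is the easier direction. First I would fix any vertex $u^\star \in V$ achieving the minimum status, so that $\sigma_K(G) = \sigma_K(u^\star) = \sum_{w\in K} d_G(u^\star,w)$. For an arbitrary perfect matching $M \in \cM(K)$, apply the triangle inequality to each pair: $d_G(u,v) \le d_G(u,u^\star) + d_G(u^\star,v)$. Summing over all $(u,v)\in M$, and observing that each terminal $w\in K$ appears in exactly one pair of $M$, the right-hand side telescopes into $\sum_{w\in K} d_G(u^\star,w) = \sigma_K(u^\star) = \sigma_K(G)$. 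Since this holds for every $M$, it holds for the maximizing $M$, giving the upper bound.

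For the lower bound $\tfrac12 \sigma_K(G) \le \max_{M\in\cM(K)} d(G,M)$, the natural approach is an averaging argument over a suitable distribution on matchings. I would consider the uniform distribution over all perfect matchings of the complete graph on vertex set $K$ (recall $|K|$ is even). By symmetry, for any fixed ordered pair $(u,v)$ with $u\neq v\in K$, the probability that $\{u,v\}$ is an edge of a uniformly random perfect matching is $\tfrac{1}{|K|-1}$. Hence
\begin{align*}
\av{d(G,M)}{M} &= \sum_{\{u,v\}\subseteq K} \frac{d_G(u,v)}{|K|-1} = \frac{1}{2(|K|-1)} \sum_{u\in K}\sum_{v\in K\setminus\{u\}} d_G(u,v) = \frac{1}{2(|K|-1)} \sum_{u\in K} \sigma_K(u).
\end{align*}
Now I would lower-bound $\sum_{u\in K}\sigma_K(u)$. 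Since $u^\star$ has minimal status over \emph{all} of $V$, in particular $\sigma_K(u) \ge \sigma_K(u^\star) = \sigma_K(G)$ for every $u\in K$, so $\sum_{u\in K}\sigma_K(u) \ge |K|\cdot\sigma_K(G)$. Therefore $\av{d(G,M)}{M} \ge \tfrac{|K|}{2(|K|-1)}\sigma_K(G) \ge \tfrac12\sigma_K(G)$, and since the maximum is at least the average, some matching $M$ achieves $d(G,M)\ge \tfrac12\sigma_K(G)$, as desired.

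The main subtlety — though it is minor — is the computation of the marginal probability that a specific pair lies in a random perfect matching; one must be careful that this is $1/(|K|-1)$ and not $1/|K|$, which is what makes the constant come out to exactly $\tfrac12$ (in fact one gets the slightly stronger $\tfrac{|K|}{2(|K|-1)}$). An alternative to the explicit probabilistic computation, if one prefers to avoid counting matchings, is to decompose the complete graph $K_{|K|}$ into $|K|-1$ perfect matchings via a proper edge-coloring (a round-robin tournament schedule), note that the sum of $d(G,M)$ over these $|K|-1$ matchings equals $\sum_{\{u,v\}\subseteq K} d_G(u,v) = \tfrac12\sum_{u\in K}\sigma_K(u) \ge \tfrac12|K|\sigma_K(G)$, and conclude that the best of the $|K|-1$ matchings has cost at least $\tfrac{|K|}{2(|K|-1)}\sigma_K(G) \ge \tfrac12\sigma_K(G)$. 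Either route is routine; the content of the lemma is entirely in pairing the triangle inequality with an averaging/decomposition argument.
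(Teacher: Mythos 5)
Your proof is correct. Note that the paper itself states Lemma~\ref{lem:worst-match} without proof, simply citing \cite{CRR14}, so there is no in-paper argument to compare against; but both of your directions check out. The upper bound via the triangle inequality through the median $u^\star$ is exactly right, using that each terminal occurs in exactly one pair of the matching so the bound collapses to $\sigma_K(u^\star)=\sigma_K(G)$. For the lower bound, the marginal probability $1/(|K|-1)$ for a fixed pair under a uniformly random perfect matching is computed correctly (the count is $(k-3)!!/(k-1)!!$), and the step $\sigma_K(u)\ge\sigma_K(G)$ for $u\in K$ is valid because the median minimizes status over all of $V\supseteq K$; this yields the slightly stronger constant $\tfrac{|K|}{2(|K|-1)}$, which indeed implies the claimed $\tfrac12$. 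The round-robin edge-coloring alternative you sketch is an equally valid derandomization of the same averaging step. No gaps.
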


\section{Sub-additivity Property}
\label{app:sub-add}

We briefly argue that the two main families of functions that we consider in this paper lead to LPs that do satisfy the sub-additive property:
\begin{itemize}
\item Steiner Tree constraints. There are sets of terminals $T_i\subseteq V$ (for $i\in [\ell]$) and $b^i(C)=1$ if $C$ separates $T_i$ and $0$ otherwise. Note that with these constraints $\lpo(G)$ and $\lpt(G)$ are the same as $\lpst^L(G,K)$ and $\lpst^L(G,K)$ that we saw in the introduction.
\item Multi-commodity flow constraints. We have a set of demands $D_i$ (for $i\in [\ell]$)  and $b^i(C)$ is the number of demand pairs in $D_i$ that are separated by $C$. Note that with these constraints $\lpt(G)$ is essentially the sum of  $\lpmfc(G,K_i)$ where $D_i$ consists of all pairs in $K_i$.
\end{itemize}

\section{Applications}
\label{sec:app}

In this section, we apply the general techniques we have developed so far to obtain lower bounds for specific functions. However, we begin with our lower bound for all functions.

\subsection{Lower Bound For Every Function}

We prove here that every function needs $\Omega\left(\ST(K,G)\right)$ bits to be computed by any protocol.

\begin{thm}  \label{thm:steiner-tree-minimal}
Let $f:\Sigma^k \to \{0,1\}$ be any function that depends on all of its input symbols. Then, 
\[R\left(f,G,K,\Sigma\right) \geq \Omega(\ST\left(G,K)\right).\]
\end{thm}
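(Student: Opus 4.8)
The plan is to reduce to the two-party lower bound across cuts, exactly in the style of Appendix~\ref{app:lp-cc}, but with a degenerate two-party bound: across \emph{any} cut $C$ that separates $K$, the induced two-party problem still depends on at least one input held by Alice and at least one held by Bob, so its randomized communication complexity is at least $1$. First I would fix a protocol $\Pi$ that computes $f$ with error $1/3$ on all inputs. I would pick a hard distribution $\mu$ — here it should suffice to take $\mu$ to be the single-input distribution supported on a worst-case input $z$, or more robustly a distribution under which, across every $K$-separating cut $C$, the induced distribution forces $\Omega(1)$ expected communication between the two sides. The key observation is that because $f$ depends on \emph{all} of its symbols, for any cut $C$ separating $K$ into $K^A = K\cap V^A$ and $K^B = K\cap V^B$ (both nonempty), the induced two-party function $f^C$ is not constant in Alice's input nor in Bob's, hence $R_{1/3}(f^C) \ge 1$, and one can choose $\mu$ so that the distributional version $R_{1/3,\mu_C}(f^C) = \Omega(1)$.

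Next, following the LP framework, I would set $x_e$ to be the expected number of bits $\Pi$ sends across edge $e$ under $\mu$. The simulation argument of Appendix~\ref{app:lp-cc} shows that for every $K$-separating cut $C$, $\sum_{e \in \delta(C)} x_e \ge R_{1/3,\mu_C}(f^C) = \Omega(1)$ (up to the constant-factor loss from the two-party-to-distributional passage and the constant in the $\Omega(1)$ bound, which can be folded into a single absolute constant). Therefore $(x_e/\alpha)_{e\in E}$ is feasible for $\lpst(G,K)$ for a suitable absolute constant $\alpha > 0$, so $\sum_e x_e \ge \alpha \cdot \lpst(G,K)$. By Theorem~\ref{thm:st-lp}, $\lpst(G,K) \ge \ST(G,K)/2$, and since $R(f,G,K,\Sigma) \ge R_{1/3,\mu}(\Pi) = \sum_e x_e$, we conclude $R(f,G,K,\Sigma) \ge \Omega(\ST(G,K))$, as claimed.

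The one subtlety — and the step I expect to need the most care — is producing a \emph{single global} distribution $\mu$ that simultaneously witnesses $\Omega(1)$ distributional complexity across \emph{every} $K$-separating cut. A worst-case single input does not obviously do this, since across a particular cut the restriction of that input need not be a ``hard'' instance for $f^C$. The clean fix is the one already flagged in the excerpt's footnote about silence: since a legal protocol is not allowed to be silent on an edge where one endpoint is expecting a message, and since $f$ depends on every symbol, \emph{some} communication must flow along any $K$-separating cut on \emph{every} input — indeed information from each terminal must reach the output node, and each such path crosses $\delta(C)$. Concretely, on any input, for $f$'s output to be correct with probability $> 2/3$ the output node must (with constant probability) receive information that distinguishes, for each terminal $w$, at least two of its possible inputs; routing this across $C$ costs $\Omega(1)$ expected bits across $\delta(C)$ on a constant fraction of inputs, hence $\Omega(1)$ in expectation under a uniform-over-a-sensitive-pair distribution. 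I would formalize this by, for each terminal $w$, fixing a pair of inputs $a_w \ne b_w$ and all other coordinates at generic values on which $f$ is sensitive in the $w$-th coordinate; letting $\mu$ be the uniform mixture over $w \in K$ of these sensitive-pair distributions. For any cut $C$ and any terminal $w$ on, say, Alice's side, the induced two-party problem is genuinely sensitive to Alice's bit, forcing $\Omega(1)$ expected cross-communication, which gives the needed cut constraint. Assembling these constraints into $\lpst(G,K)$ and invoking Theorem~\ref{thm:st-lp} completes the argument.
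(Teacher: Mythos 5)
Your overall skeleton (reduce to a per-cut constraint, feed it into $\lpst(G,K)$, and invoke Theorem~\ref{thm:st-lp}) matches the paper, and you correctly isolate the crux: producing a single witness that forces communication across \emph{every} $K$-separating cut. But your concrete fix has a genuine gap. The claim that ``for any cut $C$ and any terminal $w$ on Alice's side, the induced two-party problem is genuinely sensitive to Alice's bit, forcing $\Omega(1)$ expected cross-communication'' is false when the designated output node $u$ also lies on Alice's side: under the $w$-component of your mixture all of Bob's inputs are fixed constants, so $f^C$ is a function of Alice's input alone and she outputs the answer with zero communication. For a cut isolating a single terminal $w\neq u$, only the $w$-component of your uniform mixture is nontrivial, so the cut constraint you actually certify is $b(C)=\Omega(1/k)$, and the final bound degrades to $\Omega(\ST(G,K)/k)$. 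No purely distributional argument of this shape can avoid this, because on a typical input no information needs to flow toward the output node across such a cut.

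The paper closes exactly this gap with a worst-case, every-input argument that uses the no-deadlock property of legal protocols --- the property you mention but then do not actually exploit. Fix a $K$-separating cut with $u\in V^A$. Since $f$ depends on all its symbols, there is an assignment $a$ to $K^A$ and $b,b'$ on $K^B$ with $f(a,b)\neq f(a,b')$; under $a$, the $V^B$ side must send at least one bit across $\delta(C)$ for $u$ to answer correctly. For any other assignment to $K^A$, the nodes in $V^B$ cannot distinguish it from $a$ unless they first receive a bit from $V^A$; since silence is not allowed where a message is expected, either $V^B$ speaks (as it must under $a$) or $V^A$ speaks first. Hence at least one bit crosses $\delta(C)$ on \emph{every} input, giving $b(C)=1$ for every $K$-separating cut, and Lemma~\ref{lem:lpo-cc-lb} together with Theorem~\ref{thm:st-lp} yields $\Omega(\ST(G,K))$. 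Replacing your sensitive-pair mixture by this indistinguishability argument repairs the proof.
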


\begin{proof}
Let $\Pi$ be any protocol in which the node $u$ in $V(G)$ computes the output of $f$. Take any cut $C$ of $G$ that partitions $V(G)$ into $V^A,\, V^B$ and separates the set $K$ of terminals into $K^A$ and $K^B$, each of which is non-empty. We argue that at least one bit is communicated in total across the edges of $\delta(C)$. This will be sufficient to establish our theorem, using Lemma~\ref{lem:lpo-cc-lb} and Theorem~\ref{thm:st-lp}.

WLOG, assume $u \in V^A$ is the designated terminal that needs to know the final output bit. As $f$ depends on all its input symbols, there is an assignment $a\in \Sigma^{K^A}$ to terminals in $K^A$ such that $f$ is determined by the assignment to terminals in $K^B$, i.e. there exists $b,b' \in \Sigma^{K^B}$ such that $f(a,b) \ne f(a,b')$. Hence, when $a$ is the assignment to $K^A$, there is at least 1 bit of communication across $\delta(C)$ from $V^B$ to $V^A$ for $u$ to output the answer correctly on all inputs with probability greater than 1/2. Otherwise, if no communication is expected by nodes in $V^A$, then the answer they give is independent of inputs to nodes in $K^B$. In this case, at least for one of the assignments $ab$ and $ab'$, protocol $\Pi$ errs with probability at least $1/2$.

For every other assignment to $K^A$, as long as there is no communication from $V^A$ to $V^B$, there is no way for processors in $V^B$ to know that the assignment to $K^A$ is not $a$. Hence, if they do not communicate in this case, they will also not communicate when $K^A$ is assigned $a$, which we argued is not possible. Thus, in every case, at least 1 bit of communication occurs on $\delta(C)$.
\end{proof}

We note that we only use the property of a valid protocol that it cannot have a deadlock (i.e. if one end point of an edge is expecting to receive some communication then the other end point has to communicate something) in the proof above. The rest of our proofs do not use this property explicitly.

Theorem~\ref{thm:steiner-tree-minimal} also has the following interesting consequence. Recall that in our model there is one designated terminal that needs to know the output bit. However, since the output bit can be transmitted to all the terminals with $\ST(G,K)$ amounts of additional communication, our model is equivalent (up to constant factors) to a related model where {\em all} the terminals need to know the output bit at the termination of the protocol.



\subsection{Bounds for $\SD$}
\label{sec:disj}

We next prove bounds for one of the most well-studied functions in classical communication complexity: the set disjointness function ($\SD$).

We first note that for a given set of terminals $K$, where each terminal gets a subset of $[n]$ (as a vector in $\{0,1\}^n$), one can compute their intersection by computing the running intersection from the leaves of the minimum Steiner tree on $K$ in $G$ to its root. Each edge only carries at most $n$ bits, which leads to the following result:

\begin{prop}
\label{prop:sd-ub}
\[R(\SD,G,K,\{0,1\}^n)\le O(\ST(G,K)\cdot n).\]
\end{prop}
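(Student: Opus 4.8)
The plan is to exhibit an explicit protocol for $\SD$ on $(G,K,\{0,1\}^n)$ whose expected (indeed worst-case) cost is $O(\ST(G,K)\cdot n)$, which establishes the stated upper bound since $R$ is the worst-case expected cost of the best protocol. Since $\SD$ is essentially the complement of the bit-wise $\AND$ of the $k$ input vectors (the sets intersect iff some coordinate $j\in[n]$ has all $k$ bits equal to $1$), it suffices to compute the bit-wise $\AND$ of all $k$ input strings and then test whether the resulting $n$-bit vector is nonzero; the designated output node can do this final test locally at no communication cost.

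First I would fix a minimum Steiner tree $T$ in $G$ spanning $K$ (with possibly some non-terminal Steiner nodes), whose total edge weight is $\ST(G,K)$, and root it at an arbitrary node $r$. The protocol proceeds bottom-up along $T$: every node $v$ that is a leaf of $T$ (necessarily a terminal, after pruning) sends its $n$-bit input string up to its parent; every internal node $u$ waits to receive one $n$-bit vector from each of its children in $T$, computes the bit-wise $\AND$ of these received vectors together with its own input vector if $u\in K$ (if $u$ is a Steiner non-terminal it just $\AND$s the children's vectors), and forwards this single $n$-bit partial-$\AND$ vector to its parent. By induction on the tree, the vector that $u$ sends to its parent is exactly the bit-wise $\AND$ of the input strings of all terminals in the subtree of $T$ rooted at $u$; hence the root $r$ ends up knowing the bit-wise $\AND$ of all $k$ strings, and $\SD$ is the OR of the complemented coordinates, computed locally at $r$. (If the designated output node is not $r$ one routes the one-bit answer to it along $T$, costing an additional $O(\ST(G,K))$ bits, which is absorbed into the bound; alternatively root $T$ at the output node.)

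The cost analysis is immediate: along every edge of $T$ exactly one $n$-bit message is transmitted (from child to parent), so the total communication is $n$ times the number of edges of $T$ counted with multiplicity — i.e., at most $n\cdot|E(T)|\le n\cdot\ST(G,K)$ when edge weights are integral, and more generally $O(\ST(G,K)\cdot n)$ in the weighted setting since each edge of weight $w$ carries $n$ bits over a channel realizing that distance. One should also observe the protocol is legal in the sense of the model: there are no deadlocks because the message flow is a fixed acyclic schedule up the rooted tree, and every node knows from $T$ (which is hard-wired into the protocol, being a function only of $G$ and $K$) exactly which children to wait for.

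There is essentially no hard part here — this is the routine Steiner-tree aggregation argument already sketched in the introduction for Equality and $\xor_n$. The only mild subtlety worth a sentence is that $T$ must be pruned so that all its leaves are terminals (any Steiner leaf can be deleted without increasing the weight), so that the ``leaves send their inputs'' step is well-defined; and that when edges have non-unit lengths ``$n$ bits across an edge'' means $n$ bits across the corresponding path/channel of that length, which is already accounted for in the definition of $\ST(G,K)$ and the cost measure. This is exactly why the statement is phrased as a Proposition rather than a Theorem.
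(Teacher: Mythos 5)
Your proof is correct and matches the paper's argument exactly: the paper also computes the running intersection (bit-wise $\AND$) bottom-up along a minimum Steiner tree for $K$, with each edge carrying at most $n$ bits. Your additional remarks about pruning Steiner leaves, deadlock-freeness, and routing the answer to the output node are fine elaborations of details the paper leaves implicit.
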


Next, we argue that the bound above is nearly tight. Towards this end, we claim that
\begin{lemma}
\label{lem:sd-maxhard}
Let $h$ be the function $h(M)=\lceil \log{M}\rceil$. Then $\SD$ is $h$-\maxhard.
\end{lemma}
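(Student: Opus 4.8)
The claim is that $\SD$ (on $n$-bit inputs, so $|\Sigma| = 2^n$ and $h(|\Sigma|) = \lceil \log 2^n \rceil = n$) is $h$-\maxhard{} with $h(M) = \lceil \log M\rceil$. Unpacking the definition of $h$-\maxhard{}: given an arbitrary multicut $C$ of $K$ with explicit sets $C_1,\dots,C_r$ and implicit set $S = K\setminus\bigcup_i C_i$, I must exhibit a distribution $\mu_C^{\SD}$ over inputs such that on the star graph where leaf $j$ holds all terminals of $C_j$, no two leaves share an explicit set, and the center holds the terminals of $S$, every correct protocol costs $\Omega(|C|\cdot n) = \Omega(r\cdot n)$ in expectation under $\mu_C^{\SD}$. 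Here $|C| = r$ is the number of explicit sets.

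The plan is to reduce from the known $k$-party star lower bound for set disjointness of Braverman et al.~\cite{BEOPV13}, specialized to $r$ parties: on the $r$-leaf star (one terminal per leaf, no center terminals), computing $\SD$ on $n$-bit inputs costs $\Omega(rn)$, and moreover there is a hard distribution $\nu$ witnessing this. First I would observe that on the star in our multicut setting, each leaf $j$ holds the terminals of $C_j$, but since all those terminals sit on the same node, the effective input that leaf $j$ contributes to $\SD$ is just the bitwise-$\AND$ of the strings of the terminals in $C_j$ — a single $n$-bit string — so the problem the star must solve is exactly $r$-party (plus center) set disjointness over these $r$ AND-folded strings intersected with whatever the center's terminals contribute. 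To build $\mu_C^{\SD}$: draw $(Y_1,\dots,Y_r)\sim \nu$, the $r$-party hard distribution of~\cite{BEOPV13}; assign to one designated terminal in $C_j$ the string $Y_j$ and to all other terminals in $C_j$ the all-ones string $\vone$ (so the AND-fold at leaf $j$ equals $Y_j$); and assign the all-ones string to every terminal in the implicit set $S$ (so the center contributes nothing to the intersection). Under this distribution the star is solving precisely $r$-party $\SD$ on $(Y_1,\dots,Y_r)$, and since any protocol correct on all inputs is in particular correct on the support of $\mu_C^{\SD}$, its expected cost is at least the $\nu$-distributional complexity of $r$-party $\SD$, which is $\Omega(rn) = \Omega(|C|\cdot h(|\Sigma|))$.

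The one subtlety I would need to handle carefully is the interface between our communication model (point-to-point on a star, total-communication cost, possibly with internal non-terminal structure — but here the graph literally is a star, so the center plays the role of the coordinator) and the model in which~\cite{BEOPV13} is stated. I would note that~\cite{BEOPV13} works in the coordinator/message-passing model on the star, which is exactly our model restricted to the star graph, so this is essentially a definitional match rather than a real reduction; the total-communication lower bound transfers directly. A second minor point: the definition of $h$-\maxhard{} requires the bound $\Omega(|C|\cdot h(|\Sigma|))$ uniformly over \emph{all} multicuts $C$, including those with small $r$; for $r=1$ the bound $\Omega(n)$ is immediate since the single leaf must transmit enough to determine the (input-dependent) answer, and for $r \ge 2$ the $\Omega(rn)$ bound of~\cite{BEOPV13} applies. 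I do not anticipate a genuine obstacle here — the work is entirely in correctly massaging definitions so that the multicut/star picture lines up with the off-the-shelf $r$-party set-disjointness lower bound, and in checking that folding duplicate terminals within an explicit set by AND (and neutralizing the implicit set with $\vone$) faithfully embeds the $r$-party instance. The main thing to get right is that the cost is $\Omega(|C|\cdot h(|\Sigma|))$ with $|C|$ being the number of \emph{explicit} sets, matching how Theorem~\ref{thm:sd-st} will later sum these contributions over a family of multicuts.
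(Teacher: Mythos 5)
Your proposal is correct and follows essentially the same route as the paper: reduce to the $s$-party star lower bound of Braverman et al.\ by sampling the explicit sets' effective inputs from their hard distribution and neutralizing the implicit set with the all-ones vector. The only (immaterial) difference is that the paper gives every terminal in the $i$th explicit set the same string $X_i$, whereas you give one designated terminal $Y_j$ and pad the rest with $\vone$; both make the leaf's AND-folded contribution equal to the sampled string.
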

\begin{proof}
Let $\nu_s$ be the hard distribution for $\SD$ on an $s$-star from Braverman et al.~\cite{BEOPV13}. Using standard tools of information theory, it follows from that work that for such hard distribution the set disjointness problem needs $\Omega(sn)$ expected communication over an $s$-star. 

Now consider any multicut $C$ of $K$ with $|C|=s$. Now define $\mu_C^{\SD}$ as follows: let $(X_1,\dots,X_s)$ be a sample from $\mu_s$. Then the terminals in the $i$th explicit set in $C$ all get $X_i$. Finally, all the terminals in the implicit set get the all ones vector. It is easy to check that $\mu_C^{\SD}$ has the required property.
\end{proof}

The above lemma along with Theorem~\ref{thm:sd-st} immediately proves Theorem~\ref{thm:sd-lb}.


\subsection{Composed Functions}

Finally we prove bounds for some composed functions.

\subsubsection{Bounds for $\ED\circ\xor_n$}   \label{sec:ed-xor}

In this section, we consider the composed function $\ED\circ\xor_n$. For the sake of completenes, we recall its definition.
Let  $G=(V,E)$ be the graph and given $t$ subsets $K_1,\dots,K_t\subseteq V$ (which need not all be disjoint or distinct), we have  $\cK=\{K_1,\dots,K_t\}$.
Given $k_i\stackrel{def}{=} |K_i|$ $n$-bit vectors $X_1^i,\dots,X_{k_i}^i\in\{0,1\}^n$ for every $i\in [t]$, define:
\[\ED\circ\xor_n\left( X^1_1,\dots,X^1_{k_1},\dots,X^t_1,\dots,X_{k_t}^t\right)=\ED\left(\xor_n\left(X^1_1,\dots,X^1_{k_1}\right),\dots,\xor_n\left(X^t_1,\dots,X_{k_t}\right)\right).\]


We now state the obvious upper bound for solving the $\ED\circ\xor_n$ function. For notational convenience, define $\sigma_{K_1,\dots,K_t}(G)$ to be the minimum of $\sigma_K(G)$ for every choice of $K$ that has exactly one terminal from $K_i$ for every $i\in [t]$. Then we have the following upper bound.

\begin{prop}
\label{prop:ed-xor-ub}
Let $k=\sum_{i=1}^t k_i$. Then
\[R(\ED\circ\xor_n,G,\cK,\{0,1\}^n)\le O\left(\sigma_{K_1,\dots,K_t}(G)\cdot \log{k}+ \sum_{i=1}^t \ST(G,K_i)\cdot \log{k}\right).\]
\end{prop}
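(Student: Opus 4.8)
The plan is to exhibit an explicit two-stage protocol and bound its expected cost. First I would fix, for each $i\in[t]$, a minimum Steiner tree $T_i$ connecting the terminals in $K_i$ inside $G$, and also fix a choice of representatives $u_i\in K_i$ (one per group) that attains, or nearly attains, $\sigma_{K_1,\dots,K_t}(G)$; let $K^\ast=\{u_1,\dots,u_t\}$ so that $\sigma_{K^\ast}(G)=\sigma_{K_1,\dots,K_t}(G)$ and let $v$ be a $1$-median for $K^\ast$, i.e.\ a node with $\sigma_{K^\ast}(v)=\sigma_{K^\ast}(G)$.

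Stage one computes the inner $\xor_n$'s. The key observation (already flagged in the footnote in the excerpt) is that to later run a randomized $\ED$ protocol we do not need the full $n$-bit value $\xor_n(X_1^i,\dots,X_{k_i}^i)$ but only an $O(\log k)$-bit linear hash of it; since the hash is linear, the hash of the $\xor$ equals the $\xor$ of the hashes. So each terminal in $K_i$ computes its own $O(\log k)$-bit hash, and these are combined by sending partial $\xor$'s up the rooted tree $T_i$ toward $u_i$. This costs $O(\ST(G,K_i)\cdot\log k)$ bits for group $i$, using that each edge of $T_i$ carries one $O(\log k)$-bit message; summing over $i$ gives the $\sum_{i=1}^t \ST(G,K_i)\cdot\log k$ term. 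At the end of stage one, each representative $u_i$ holds the $O(\log k)$-bit hash $h_i$ of the $i$-th inner value.

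Stage two runs the optimal $1$-median-type protocol for $\ED$ on the $t$ "virtual terminals" $u_1,\dots,u_t$ holding $h_1,\dots,h_t$. Concretely (this is the protocol underlying the $\ED$ upper bound in \cite{CRR14}), every $u_i$ sends $h_i$ toward the median $v$ along a shortest path, and $v$ then tests element-distinctness of the received values; by a standard hashing argument $O(\log k)$-bit hashes suffice to make the probability of a collision-induced error small (a union bound over the $\binom{t}{2}$ pairs, after boosting each pairwise test, costs only another $O(\log\log k)$ or $O(\log t)$ factor absorbed into the $\log k$). The cost of this stage is $O(\sigma_{K^\ast}(G)\cdot\log k)=O(\sigma_{K_1,\dots,K_t}(G)\cdot\log k)$. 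Adding the two stages gives the claimed bound $O\left(\sigma_{K_1,\dots,K_t}(G)\cdot\log k+\sum_{i=1}^t \ST(G,K_i)\cdot\log k\right)$.

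I do not expect a serious obstacle here: this is an upper bound proved by direct protocol construction, so the only things to be careful about are (i) confirming that linear hashing genuinely lets stage one carry only $O(\log k)$ bits per edge rather than $n$ bits, (ii) checking the error analysis — boosting each inner hash and each $\ED$ comparison so that a union bound over all groups and all pairs keeps total error below $1/3$, which only inflates the bit-length by a logarithmic factor already hidden in the stated bound's suppression of poly-log terms — and (iii) noting that the two stages can be run sequentially so costs simply add. The mild subtlety worth stating carefully is the interface between the stages: the $\ED$-protocol of \cite{CRR14} expects terminals to hold the inputs, and we are feeding it the hash values computed in stage one held at the representatives $u_i$, which is legitimate because those hashes are exactly (deterministic, with the shared public hash function) functions of the true inner $\xor$ values.
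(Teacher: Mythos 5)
Your proposal is correct and matches the paper's proof essentially verbatim: aggregate $O(\log k)$-bit linear hashes of the inner $\xor$'s along minimum Steiner trees for each $K_i$ (using linearity so that the hash of the $\xor$ is the $\xor$ of the hashes), then run the $\ED$ upper-bound protocol of \cite{CRR14} on representatives $u_1,\dots,u_t$ achieving $\sigma_{K_1,\dots,K_t}(G)$. The only cosmetic difference is that the paper delivers the group hash to every terminal of $K_i$ rather than only to $u_i$, which does not change the asymptotic cost.
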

\begin{proof}
Note that with $O(\ST(G,K_i)\cdot \log{k})$ amounts of communication, every terminal in $K_i$ will know the hash\footnote{In particular, here the hash is the inner product of $O(\log{k})$ random vectors with the input. The random vectors are generated using public randomness.} of $\xor_n(X^i_1,\dots,X^i_{k_i})$. Doing this for every $i\in [t]$ gives the second term in the claimed bound.

Let $u_1,\dots,u_t$ be such that $u_i\in K_i$ for every $i\in [t]$ and $\sigma_{K_1,\dots,K_t}(G)=\sigma_{\{u_1,\dots,u_t\}}(G)$. Then run the upper bound protocol for $\ED$ using the hashes at  the terminals in the  set $\{u_1,\dots,u_t\}$. This latter part accounts for the first term in the claimed bound. This completes the proof.
\end{proof}


We will now prove Theorem~\ref{thm:ed-xor-lb}, which is an almost matching lower bound to the upper bound in Proposition~\ref{prop:ed-xor-ub}. We will do so by proving two lower bounds separately: one each for the two terms in the upper bound. Note that this immediately implies a lower bound that is the sum of the two terms (up to a factor of $1/2$) as desired.

The first term follows immediately from existing results~\cite{CRR14}:
\begin{lemma}
\label{lem:ed-xor-mfc-lb}
\[R(\ED\circ\xor_n,G,\cK,\{0,1\}^n)\ge \Omega\left(\frac{\sigma_{K_1,\dots,K_t}(G)}{\log{t}}\right).\]
\end{lemma}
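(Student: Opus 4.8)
The bound should follow from a padding/restriction reduction that collapses the inner $\xor_n$'s and leaves a single instance of $\ED$ on one representative per group, combined with the $\ED$ lower bound of~\cite{CRR14}. Concretely, fix any representatives $u_i\in K_i$ for $i\in[t]$ and restrict $\ED\circ\xor_n$ to those inputs in which every terminal (input slot) of $K_i$ other than $u_i$ receives the all-zeros string $\vzero$. On this sub-family $\xor_n(X^i_1,\dots,X^i_{k_i})$ equals the string held at $u_i$, so the composed function restricted here is exactly $\ED$ evaluated on the $t$ strings held at $u_1,\dots,u_t$. Hence any protocol computing $\ED\circ\xor_n$ on $(G,\cK,\{0,1\}^n)$, fed such inputs, solves $\ED$ on $(G,\{u_1,\dots,u_t\},\{0,1\}^n)$ with the same error, and therefore costs at least $R(\ED,G,\{u_1,\dots,u_t\},\{0,1\}^n)$ on that sub-family.

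First I would recall from~\cite{CRR14} (recovered via the LP route in Theorem~\ref{thm:ed-lb}, and underlying Theorem~\ref{thm:mfc-lp}) that $R(\ED,G,K,\{0,1\}^n)\ge \Omega(\sigma_K(G)/\log|K|)$ for any terminal set $K$: the hard distribution for $\ED$ induces, across every cut $C$, a two-party instance with expected communication $\Omega(\min(|K\cap C|,|K\setminus C|))$, so the vector of per-edge expected communications is feasible for $\lpmfc(G,K)$, whose value is $\Omega(\sigma_K(G)/\log|K|)$. Applying this with $K=\{u_1,\dots,u_t\}$ (so $|K|=t$) yields $R(\ED\circ\xor_n,G,\cK,\{0,1\}^n)\ge \Omega(\sigma_{\{u_1,\dots,u_t\}}(G)/\log t)$ for every choice of representatives. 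Since $\sigma_{K_1,\dots,K_t}(G)$ is by definition the minimum of $\sigma_{\{u_1,\dots,u_t\}}(G)$ over all such choices, it is in particular at most the value attained by the minimizing representatives, so this already gives the claimed bound $\Omega(\sigma_{K_1,\dots,K_t}(G)/\log t)$.

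The step needing care, rather than a real obstacle, is bookkeeping for the multiset setting: when the $K_i$ overlap (a vertex may carry inputs in several groups, or several slots within one group), one must check that "zero out all non-representative slots" is a consistent assignment — it is, since the model assigns inputs per input slot, not per vertex — and that the representative set $\{u_1,\dots,u_t\}$, viewed as a multiset of vertices if necessary, is still one to which the $\ED$ lower bound of~\cite{CRR14} applies. One also confirms the direction of the inequalities: restricting the input family can only decrease worst-case cost, while the $\ED$ bound is proved against a fixed hard distribution supported inside the restricted family, so the chain of inequalities goes the right way. Everything else is a direct invocation of prior results, so I expect no substantive difficulty.
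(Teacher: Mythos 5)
Your proposal is correct and follows essentially the same route as the paper's proof: fix one representative $u_i$ per group, place the hard distribution for $\ED$ from~\cite{CRR14} on the representatives, fix the remaining inputs so that the composition collapses to $\ED$ on $\{u_1,\dots,u_t\}$, and invoke the $\Omega(\sigma_{\{u_1,\dots,u_t\}}(G)/\log t)$ bound. The only (cosmetic) difference is how the non-representative slots are fixed — you zero them out so each group's $\xor_n$ equals the representative's string, which is in fact the cleaner choice for an $\xor_n$ inner function, whereas the paper assigns them distinct strings of disjoint support; and you note that any choice of representatives suffices since $\sigma_{K_1,\dots,K_t}(G)$ is the minimum over such choices, while the paper simply picks the minimizing one.
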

\begin{proof}
Let $u_i\in K_i$ for every $i\in [t]$ be such that $\sigma_{K_1,\dots,K_t}(G)=\sigma_{\{u_1,\dots,u_t\}}(G)$. Let $\mu_t$ be the hard distribution from~\cite{CRR14} for $\ED$ on $t$ terminals. Assign the $t$ inputs from $\mu_t$ to each $u_i$ and all the other terminals in $\cup_{i=1}^t K_i$ get inputs that are distinct from each other and have a support disjoint from the support in $\mu_t$. Then the lower bound for $\mu_t$ from~\cite{CRR14} implies the claimed bound.
\end{proof}

\begin{rem}
\label{rem:max-sigma}
We note that the proof can also be extended to replace $\sigma_{K_1,\dots,K_t}(G)$ by the {\em maximum} $\sigma_{K'}(G)$, where $K'$ contains exactly one terminal from $K_1,\dots,K_t$. However, this does not lead to any contradiction since it is easy to check that $\sigma_{K'}(G)\le \sigma_{K_1,\dots,K_t}(G) +\sum_{i=1}^t \ST(G,K_i)$ and hence even if we use the stronger bound for above, the total lower bound does not exceed the upper bound.
\end{rem}

Next, we will prove a lower bound matching the second term in the upper bound in Proposition~\ref{prop:ed-xor-ub} up to poly-log factors. Before that we consider a specific problem that will be useful in the proof of our lower bound.

\begin{lemma}
\label{lem:ed-cor-2-party-lb}
Alice and Bob get $t$ inputs $A_1,\dots,A_t\in\{0,1\}^n$ and $B_1,\dots,B_t\in\{0,1\}^n$. They want to compute $\ED(\xor_n(A_1,B_1),\dots,\xor_n(A_t,B_t))$. Consider the distribution $\nu_t$ where each $A_i$ and $B_j$ are picked uniformly and independently at random. Then for $n\ge 3\log{t}$ and any protocol with bounded error that computes $\ED(\xor_n(A_1,B_1),\dots,\xor_n(A_t,B_t))$ correctly on all inputs has expected cost (under $\nu_t$) of $\Omega(t)$.
\end{lemma}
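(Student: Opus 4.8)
The plan is to reduce the two-party problem $\ED(\xor_n(A_1,B_1),\dots,\xor_n(A_t,B_t))$ under $\nu_t$ to the standard two-party $\ED$ problem on $t$ inputs under its hard distribution, and then invoke the known $\Omega(t)$ lower bound for the latter (which is exactly the $t=k$, star-with-one-edge case of the $\ED$ bound from~\cite{CRR14}, i.e. $\sigma_K(G)/\log t = t/\log t$ on a single edge, or more precisely the two-party $\ED$ bound used there). The key observation is that for uniform independent $A_i,B_i$, the value $Z_i \stackrel{def}{=} \xor_n(A_i,B_i)$ is itself uniform on $\{0,1\}^n$, and the collection $(Z_1,\dots,Z_t)$ has exactly the distribution where each coordinate is an independent uniform $n$-bit string. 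This is (up to the tiny-probability event of a collision) the hard distribution $\mu_t$ for $\ED$ on $t$ terminals — or can be massaged into it — so the composed problem is at least as hard as plain $\ED$ on $t$ inputs.

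\textbf{Key steps.} First I would set up the reduction direction carefully: given a bounded-error protocol $\Pi$ for the composed problem, I want to produce a two-party protocol for plain $\ED$ on inputs $(Z_1,\dots,Z_t)$ (Alice holds nothing extra, Bob holds nothing extra — the subtlety is that in plain $\ED$ each party would normally hold disjoint subsets of the $t$ coordinates, whereas here each $Z_i$ is split between Alice and Bob via XOR). The clean way: start from the hard instance of $\ED$ where Alice and Bob must distinguish "all $t$ strings distinct" from "one planted collision"; have Alice and Bob use public randomness to generate $A_i$ uniformly, and set $B_i = A_i \oplus Z_i$ where $Z_i$ is the (virtual) $i$-th coordinate of the $\ED$ instance — but this requires the $\ED$ instance itself to be split, so instead I would simply observe directly: under $\nu_t$, $(Z_1,\dots,Z_t)$ is distributed as $t$ i.i.d.\ uniform strings, and run the $\ED$ lower-bound argument of~\cite{CRR14}/\cite{BEOPV13} natively on this distribution. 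Concretely: (1) Note $n \ge 3\log t$ guarantees that with probability $1 - o(1)$ all $Z_i$ are distinct, so $\ED(\vec Z)=1$ with high probability under $\nu_t$; (2) The hard distribution for two-party $\ED$ on $t$ strings (used in~\cite{CRR14}) is a distribution over $(\vec Z)$ together with a split of each string's bits; its marginal on $\vec Z$ is close to $t$ i.i.d.\ uniform strings; (3) Because $A_i \mapsto (A_i, A_i\oplus Z_i)$ is a bijection for each fixed $Z_i$ and $A_i$ is public-coin uniform, any protocol solving the composed problem under $\nu_t$ yields a protocol solving $\ED$ under its hard distribution with the same cost; (4) Apply the $\Omega(t)$ lower bound.

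\textbf{Main obstacle.} The delicate point is matching distributions: the standard hard distribution for two-party $\ED$ is not literally "$t$ i.i.d.\ uniform strings" — it is an equal mixture of a distinct-tuple distribution and a one-collision distribution, with a specific way of splitting each $n$-bit string between Alice and Bob so that neither party alone can detect the collision. I need to argue that feeding the composed protocol $\Pi$ inputs drawn so that $\xor_n(A_i,B_i)$ realizes this hard distribution is achievable \emph{and} that the resulting input distribution is a valid product-type distribution on Alice's and Bob's sides to which $\Pi$'s correctness applies. Since $\Pi$ is required to be correct on \emph{all} inputs (not just on $\nu_t$), correctness transfers for free; the only real work is the averaging argument showing the \emph{expected cost} under $\nu_t$ is at least a constant times the expected cost of the induced $\ED$ protocol under its hard distribution, which follows because the map from $\ED$-hard-distribution inputs to composed-problem inputs pushes forward (via the per-coordinate XOR bijection and public randomness) to a sub-distribution of $\nu_t$ — or to $\nu_t$ exactly in the distinct case — up to the $o(1)$ collision correction that $n \ge 3\log t$ controls. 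I expect this distribution-matching bookkeeping, rather than any new idea, to be the bulk of the formal proof.
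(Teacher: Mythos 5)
Your core instinct --- reduce from a genuinely two-party-hard problem by using public coins to secret-share each coordinate --- is the right one and is essentially how the paper proceeds. But the construction you actually commit to has a gap that kills the reduction as written. You generate every $A_i$ from public randomness and set $B_i = A_i \oplus Z_i$; for Bob to form his input $B_i$ he must know $Z_i$, so this forces the entire source $\ED$ instance $(Z_1,\dots,Z_t)$ onto Bob's side. That instance has zero two-party communication complexity, so the reduction yields no lower bound. The fix, which is the heart of the paper's proof, is to split the $t$ coordinates in half and plant a two-party Set-Disjointness instance: Alice holds $X_1,\dots,X_{t/2}$ and Bob holds $Y_1,\dots,Y_{t/2}$ drawn from the random-sets distribution $\mu_{t/2}$ of~\cite{CRR14}; for $i\le t/2$ one sets $B_i=Z_i$ (public) and $A_i=X_i\oplus Z_i$, and for $i>t/2$ symmetrically $A_i=Z_i$ and $B_i=Y_{i-t/2}\oplus Z_i$. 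Then $\xor_n(A_i,B_i)$ equals $X_i$ on the first half and $Y_{i-t/2}$ on the second, so $\ED$ of the XORs is $1$ iff the two sets are disjoint (the hypothesis $n\ge 3\log t$ makes within-party collisions negligible), the induced distribution on $(A_i,B_i)_{i\in[t]}$ is exactly $\nu_t$, and the $\Omega(t)$ bound for disjointness under $\mu_{t/2}$ transfers.

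Your fallback of ``running the $\ED$ lower-bound argument natively'' on the distribution where every $Z_i$ is fully secret-shared is not a valid substitute: the two-party $\ED$/disjointness lower bounds you invoke are proved for a fixed partition of whole strings between Alice and Bob, and they say nothing about an input in which neither party holds any coordinate in the clear. Relatedly, the ``main obstacle'' you identify (matching a planted-collision mixture distribution) is a red herring: in the correct reduction the induced distribution is exactly $\nu_t$, by bijectivity of XOR with a fresh public uniform string, and no distributional surgery is needed. The real obstacle is choosing, per coordinate, \emph{which} share is public, and that choice is precisely what encodes a nontrivial two-party problem into the composed function.
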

\begin{proof}
We will use the fact that the set disjointness problem where Alice and Bob get two sets of size $t$ where each set is picked by picking $t$ uniformly random elements (with replacement) from $\{0,1\}^n$ has expected communication complexity lower bound of $\Omega(t)$: see e.g.~\cite{CRR14}.\footnote{Technically in~\cite{CRR14} the hard distribution for set disjointness, the elements in the sets for Alice and Bob are chosen without replacement. However, the probability that either Alice or Bob have a set of size strictly less than $t$ or have an intersection is at most $\frac{\binom{2t}{2}}{2^n}$, which by our lower bound on $n$ is negligible.} Let us call his hard distribution $\mu_t$.

Now for the sake of contradiction assume that there exists a protocol $\Pi$ that computes $\ED(\xor_n(A_1,B_1),\dots,\xor_n(A_t,B_t))$ correctly on all inputs with expected cost (under $\nu_t$) $o(t)$. We will use this to obtain a protocol that solves the set disjointness problem above for sets of size $t/2$ with expected cost $o(t)$ under $\mu_{t/2}$, which will lead to a contradiction. Let us assume that Alice gets $\{X_1,\dots,X_{t/2}\}$ and Bob gets $\{Y_1,\dots,Y_{t/2}\}$ from the distribution $\mu_{t/2}$. Alice and Bob construct the sets $\{A_1,\dots,A_t\}$ and $\{B_1,\dots,B_t\}$ as follows. Using shared randomness Alice and Bob both pick uniformly random elements $Z_1,\dots,Z_t\in \{0,1\}^n$ and compute their sets as follows:
\[A_i=\left\{
\begin{array}{ll}
\xor_n(X_i,Z_i) & \text{if }i\le t/2\\
Z_i&\text{otherwise}
\end{array}
\right.
\]
and
\[B_i=\left\{
\begin{array}{ll}
Z_i & \text{if }i\le t/2\\
\xor_n(Y_{i-t/2},Z_i)&\text{otherwise}
\end{array}
\right.
.
\]
Note that the induced distribution on $\{A_1,\dots,A_t\}$ and $\{B_1,\dots,B_t\}$ is exactly the same as $\nu_t$. Further, we have $\ED(\xor_n(A_1,B_1),\dots,\xor_n(A_t,B_t))=1$ if and only if $\{X_1,\dots,X_{t/2}\}$ and $\{Y_1,\dots,Y_{t/2}\}$ are disjoint. Thus, if Alice and Bob run $\Pi$ on the inputs $A_1,\dots,A_t$ and $B_1,\dots,B_t$ as above, then they can solve the disjointness problem on inputs under the distribution $\mu_{t/2}$ with $o(t)$ expected cost, as desired.

\end{proof}

We are now ready to prove a matching lower bound for the second term in the upper bound in Proposition~\ref{prop:ed-xor-ub}.
\begin{lemma}
\label{lem:ed-xor-lb}
\[R(\ED\circ\xor_n,G,\cK,\{0,1\}^n)\ge\Omega\left(\frac{\sum_{i=1}^t \ST(G,K_i)}{\distrbnd}\right).\]
\end{lemma}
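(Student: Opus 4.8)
The plan is to reduce the general-graph lower bound to a collection of two-party lower bounds, one across each cut that separates some $K_i$, exactly in the style of the $\xor_n$ argument from Appendix~\ref{app:lp-cc}, and then invoke the LP machinery of Theorem~\ref{cor:lp1-lp2}. Concretely, I would fix the hard distribution $\mu$ that assigns to every terminal in every $K_i$ an independent uniformly random vector in $\{0,1\}^n$ (with $n \ge 3\log t$ so that Lemma~\ref{lem:ed-cor-2-party-lb} applies), and fix an arbitrary bounded-error protocol $\Pi$ for $\ED\circ\xor_n$ on $(G,\cK)$. For each cut $C$ of $G$, the induced two-party problem has Alice holding the terminal-inputs on one side and Bob those on the other; for each $i$ with $K_i$ split by $C$, Alice and Bob between them hold the XOR-shares of the $i$th block, so collapsing each side's contribution they are exactly in the situation of Lemma~\ref{lem:ed-cor-2-party-lb} on the $v(C,\cK)$ indices whose $K_i$ is separated (the blocks not separated by $C$ have their XOR fully determined on one side, so they only make the problem easier — one has to check this carefully, but it follows because a fixed value on one side can be hard-wired). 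Hence the expected communication of $\Pi$ across $\delta(C)$, under $\mu_C$, is $\Omega(v(C,\cK))$.

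Next I would set $x_e$ to be the expected total communication of $\Pi$ across edge $e$ under $\mu$, scaled down by the absolute constant hidden in the $\Omega$ above. By linearity of expectation, $(x_e)_{e\in E}$ is feasible for $\lpst^L(G,\cK)$ — the LP with constraint $\sum_{e \text{ crosses } C} x_e \ge v(C,\cK)$ for every cut $C$ — so $R_{\mu}(\ED\circ\xor_n) = \Omega(\lpst^L(G,\cK))$, and therefore $R(\ED\circ\xor_n) = \Omega(\lpst^L(G,\cK))$. Now the Steiner-tree constraints $b^i(C) = \mathbf{1}[C \text{ separates } K_i]$ satisfy the sub-additive property (Appendix~\ref{app:sub-add}), and with these constraints $\lpo(G) = \lpst^L(G,\cK)$ while $\lpt(G) = \lpst^U(G,\cK) = \sum_{i=1}^t \lpst(G,K_i) = \Theta\big(\sum_{i=1}^t \ST(G,K_i)\big)$ by Theorem~\ref{thm:st-lp}. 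Plugging into Theorem~\ref{cor:lp1-lp2} gives $\lpst^L(G,\cK) \ge \Omega\big(\frac{1}{\distrbnd}\big) \cdot \sum_{i=1}^t \ST(G,K_i)$, which is exactly the claimed bound.

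The step I expect to be the main obstacle is the careful verification of the induced two-party lower bound across a fixed cut $C$, namely that $\Pi$ restricted to $\delta(C)$ must communicate $\Omega(v(C,\cK))$ bits in expectation under $\mu_C$. Two subtleties need attention: first, the blocks $K_i$ that are \emph{not} split by $C$ contribute a coordinate to $\ED$ whose XOR-value is computed entirely on one side — I need to argue that conditioning/hard-wiring these does not help, and in particular that the reduction from Lemma~\ref{lem:ed-cor-2-party-lb} composes correctly (the adversary in Lemma~\ref{lem:ed-cor-2-party-lb} can simulate the extra fixed coordinates by appending distinct out-of-support values, or by noting $\ED$ only gets harder with more fixed-distinct coordinates). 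Second, one must handle the case where a single terminal carries inputs from several $K_i$'s (the multi-set $\cK$ setting) or where some side $K^A$ of the cut is empty for a given $i$ — in the latter case that block simply is not separated and contributes nothing, which is consistent with the definition of $v(C,\cK)$. Once these reductions are pinned down, the rest is a direct assembly of already-established results (Theorem~\ref{cor:lp1-lp2}, Theorem~\ref{thm:st-lp}, sub-additivity), so the bulk of the work — and the part worth writing out in full — is this cut-wise two-party bound and its linear-programming packaging.
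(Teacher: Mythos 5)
Your proposal is correct and follows essentially the same route as the paper: the same product-uniform hard distribution, the same cut-wise reduction of the induced two-party problem to Lemma~\ref{lem:ed-cor-2-party-lb}, and the same assembly via Corollary~\ref{cor:lpo-cc-lb}, sub-additivity, Theorem~\ref{cor:lp1-lp2}, and Theorem~\ref{thm:st-lp}. The subtlety you flag about blocks not separated by the cut (whose XOR is determined on one side) is real but handled exactly as you suggest by conditioning/hard-wiring; the paper's own write-up in fact glosses over it.
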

\begin{proof}
Consider the hard distribution $\mu$, where each of the inputs in $\cup_{i=1}^t K_i$ is chosen uniformly and independently at random from $\{0,1\}^n$. Now consider any cut $C$ in the graph $G$. Let $\ED\circ\xor_n(C)$ denote the induced two-party problem. We claim that this problems needs $\Omega(t')$ amounts of expected communication where $t'$ is the number of sets $K_i$ that are separated by $C$. Assuming this claim, note that by Corollary~\ref{cor:lpo-cc-lb} the effective lower bound for the entire problem is $\Omega(\lpo(G))$ where $\ell=t$ and $b^j(C)=1$ if $K_j$ is cut by $C$ and $0$ otherwise (for any $j\in [t]$). 
 Further, note that the values $b^j(C)$ are sub-additive. 
 By Theorem~\ref{cor:lp1-lp2}, we have a lower bound of
\[\Omega\left(\frac{\lpt(G)}{\distrbnd}\right).\]
To get the claimed lower bound, observe that the objective of $\lpt(G)$ is just the sum of $\lpst(G,K_i)$ for $i\in [t]$. 
Finally, since we can minimize the objective of $\lpt(G)$ by separately minimizing each instance of the Steiner tree LP, Theorem~\ref{thm:st-lp} implies that we have $\lpt(G)\ge \Omega\left(\sum_{i=1}^t \ST(G,K_i)\right)$, which implies the claimed lower bound.

We complete the proof by arguing the claimed lower bound on the two party function $\ED\circ\xor_n(C)$ for any cut $C$. WLOG assume that $C$ separates the sets $K_1,\dots,K_{t'}$. Then note that if Alice gets the inputs from one side of the cut $C$ and Bob gets the inputs from the other side then they are trying to solve $\ED(\xor_n(A_1,B_1),\dots,\xor_n(A_{t'},B_{t'}))$ where $A_i$ is the bit-wise $\xor_n$ of all inputs in $K_i$ that Alice gets and $B_i$ is the $\xor_n$ of the inputs from $K_i$ that Bob gets. Further, note that the distribution on $A_1,\dots,A_{t'}$ and $B_1,\dots,B_{t'}$ is the same as the hard distribution in Lemma~\ref{lem:ed-cor-2-party-lb}. Thus, Lemma~\ref{lem:ed-cor-2-party-lb} implies the claimed lower bound of $\Omega(t')$.
\end{proof}


\subsubsection{Bounds for $\xor\circ\IP$}
\label{sec:xor-ip}

There are multiple definitions of $\xor\circ\IP$ that make sense. In this subsection we will consider the version, which we dub $\xor\circ\IP_n$, that gives the cleanest bounds. Given the set of terminals $\cK$ divided into $t$ subsets of terminals $K_1,\dots,K_t$, let $M_i$ be a set of disjoint pairings of $K_i$ such that $d(G,M_i)=\Theta(\sigma_{K_i}(G))$ (by Lemma~\ref{lem:worst-match} such an $M_i$ exists). Given $k_i\stackrel{def}{=} |K_i|$ $n$-bit vectors $X_1^i,\dots,X_{k_i}^i\in\{0,1\}^n$ for every $i\in [t]$, define:
\[\xor\circ\IP_n\left( X^1_1,\dots,X^1_{k_1},\dots,X^t_1,\dots,X_{k_t}^t\right)=\xor_1\left(\IP_{M_1}\left(X^1_1,\dots,X^1_{k_1}\right),\dots,\IP_{M_t}\left(X^t_1,\dots,X_{k_t}\right)\right),\]
where $\xor_1$ denotes the function that first applied $\xor_n$ on the $t$ vectors and then takes the xor of the resulting $n$ bits and we consider the following version of the inner product function. Given a set of disjoint pairings $M$ of $K$, define $\IP_{M}:
\left(\{0,1\}^n\right)^K\to \{0,1\}^n$ as follows. Given inputs $X^i=(X^i_1,\dots,X^i_n)\in\{0,1\}^n$ for every $i\in K$,
$\IP_M\left((X^i)_{i\in K}\right) =  \left( \left(\bigoplus_{(u,v)\in M} \left(X^i_u\bigwedge X^i_v\right)\right)_{j=1}^n\right)$.

Now consider the obvious protocol to solve the $\xor\circ\IP_n$: first compute all the $\IP_{M_i}\left(X^i_1,\dots,X^i_{k_i}\right)$ using the trivial $\sigma_{K_i}(G)\cdot n$ protocol and then store the xor of the resulting $n$ bits at say $u_i\in K_i$. At this point with $O\left(\sum_{i=1}^t \sigma_{K_i}(G)\cdot n\right)$ bits of communication we have $t$ bits at $u_i\in K_i$. Then we compute the final desired output bits by using the Steiner tree on $\{u_1,\dots,u_t\}$. This implies an overall upper bound of

\begin{prop}
\label{prop:xor-ip-ub}
Let $k=\sum_{i=1}^t k_i$. Then
\[R(\xor\circ\IP_n,G,\cK,\{0,1\}^n)\le O\left(\ST(G,\{u_1,\dots,u_t\})+ \sum_{i=1}^t \sigma_{K_i}(G)\cdot n\right).\]
\end{prop}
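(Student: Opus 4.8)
The plan is to exhibit the explicit three-phase protocol described informally just before the statement and bound the cost of each phase. Concretely, fix a choice of representatives $u_i \in K_i$ for each $i \in [t]$. The protocol proceeds in stages: \textbf{(Phase 1)} For each $i \in [t]$ independently and in parallel, run the trivial protocol that computes $\IP_{M_i}(X^i_1,\dots,X^i_{k_i}) \in \{0,1\}^n$ by routing all the relevant inputs toward $u_i$ along a minimum Steiner tree on $K_i$; note this requires each terminal's $n$-bit string to move, costing $O(\sigma_{K_i}(G)\cdot n)$ bits (actually we only need a Steiner-tree-type aggregation, but bounding it by $\sigma_{K_i}(G)\cdot n$ suffices and matches the stated bound). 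Then locally at $u_i$ take the XOR of the $n$ resulting bits to obtain a single bit $z_i \in \{0,1\}$. \textbf{(Phase 2)} Now $t$ single bits $z_1,\dots,z_t$ sit at $u_1,\dots,u_t$; compute $\bigoplus_{i=1}^t z_i$ by aggregating XORs up a minimum Steiner tree connecting $\{u_1,\dots,u_t\}$, which costs $O(\ST(G,\{u_1,\dots,u_t\}))$ bits since only one bit is passed along each tree edge (with partial XORs accumulated toward the root). \textbf{(Phase 3)} Output the resulting bit at the root.

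The key steps, in order, are: (1) argue correctness — $\bigoplus_{i=1}^t z_i$ equals $\xor_1(\IP_{M_1}(\cdot),\dots,\IP_{M_t}(\cdot))$ by definition of $\xor_1$ (apply $\xor_n$ on the $t$ vectors, then XOR the resulting $n$ bits, which is exactly the XOR over all $i$ and all $n$ coordinates of the $\IP_{M_i}$ outputs, i.e. the XOR of the $z_i$); (2) bound Phase 1 by $O\!\left(\sum_{i=1}^t \sigma_{K_i}(G)\cdot n\right)$, using that the trivial routing of $n$-bit inputs to a median-like hub (or more cheaply along a Steiner tree) on $K_i$ costs at most $\sigma_{K_i}(G)\cdot n$, summed over the $t$ groups; (3) bound Phase 2 by $O(\ST(G,\{u_1,\dots,u_t\}))$ via Theorem~\ref{thm:st-lp} and the standard "accumulate-XOR-up-the-tree'' argument where each edge carries a single bit; (4) combine. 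Since the protocol is deterministic and exact, the bound holds for $R(\xor\circ\IP_n,G,\cK,\{0,1\}^n)$ (and in particular for $R_{1/3}$).

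I expect no serious obstacle here — this is a straightforward upper-bound construction. The only point requiring a little care is making sure the $n$-bit aggregation in Phase 1 is correctly charged: computing $\IP_{M_i}$ requires, for each pair $(u,v) \in M_i$, forming $X^i_u \wedge X^i_v$ bitwise and XOR-ing these across pairs, so one cannot simply accumulate partial results along a Steiner tree the way one does for plain XOR; instead one routes each terminal's $n$-bit string to the hub $u_i$, which is exactly what the quantity $\sigma_{K_i}(G)\cdot n$ pays for (or one uses the pairing structure of $M_i$ and routes each pair's two strings together — either way the cost is $O(\sigma_{K_i}(G)\cdot n)$). Once that accounting is pinned down, the rest is immediate and the proof is essentially "describe the protocol, add the costs of the three phases, cite Theorem~\ref{thm:st-lp}.''
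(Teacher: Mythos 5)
Your proposal is correct and follows essentially the same route as the paper, which proves this proposition by exactly the informal three-phase protocol you describe (compute each $\IP_{M_i}$ with the trivial $\sigma_{K_i}(G)\cdot n$ routing, store the single XORed bit at $u_i$, then aggregate along a Steiner tree on $\{u_1,\dots,u_t\}$). The one point worth keeping explicit is the one you already flag: the $n$-bit aggregation should be charged to the median of $K_i$ (or to the pairing $M_i$ via $d(G,M_i)=\Theta(\sigma_{K_i}(G))$), with only the final single bit shipped to $u_i$, so that the cost stays $O(\sigma_{K_i}(G)\cdot n)$ rather than $\sigma_{K_i}(u_i)\cdot n$.
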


Recall that Corollary~\ref{cor:xor-ip-lb} shows a nearly matching lower bound. One can easily show a matching lower bound for the first term in the sum above (e.g. by the argument for $\xor_n$ for $n=1$ from the introduction). We can also prove a nearly matching lower bound for the second term:
\begin{lemma}
\label{lem:xor-ip-lb}
\[R(\xor\circ\IP_n,G,\cK,\{0,1\}^n)\ge\Omega\left(\frac{\sum_{i=1}^t \sigma_{K_i}(G)\cdot n}{\log{k}}\right).\]
\end{lemma}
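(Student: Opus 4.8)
The plan is to reduce the composed function on $(G,\cK)$ to a family of two-party problems, one across each cut $C$ of $G$, show that the induced two-party problem has expected communication $\Omega(n\cdot m(C,M_i))$ summed appropriately, and then invoke the LP machinery of Section~\ref{sec:gen} together with Theorem~\ref{thm:mfc-lp} / Theorem~\ref{thm:mtch-lp}. Concretely, I would fix the hard distribution $\mu$ in which every input vector in $\cup_{i=1}^t K_i$ is chosen uniformly and independently from $\{0,1\}^n$. Fix any protocol $\Pi$ that computes $\xor\circ\IP_n$ correctly on all inputs, and let $x_e$ be the expected (under $\mu$) number of bits sent across edge $e$. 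For a cut $C$ separating $V$ into $V^A,V^B$, the induced two-party problem is: Alice holds the inputs of terminals in $V^A$, Bob those in $V^B$, and they must compute $\xor_1$ over $i\in[t]$ of $\IP_{M_i}$ of their combined inputs.

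The key observation is a direct-sum-style lower bound for this induced two-party problem: I claim it requires $\Omega\!\big(n\cdot\sum_{i=1}^t m(C,M_i)\big)$ expected communication, where $m(C,M_i)$ is the number of pairs in $M_i$ separated by $C$. The reason is that for each pair $(u,v)\in M_i$ split by $C$, the corresponding coordinate-wise term $X^i_u\wedge X^i_v$ is a genuine two-party inner-product-type computation on $n$ uniformly random bit-vectors, and these contributions XOR into the final answer; fixing the other pairs' inputs appropriately (and the inputs of the unseparated pairs and other $K_j$'s) one can embed an instance of $\IP_n$ (or a direct sum of $\sum_i m(C,M_i)$ copies of one-bit $\IP$) whose randomized/distributional complexity under the uniform distribution is $\Omega(n)$ each. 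Summing over the separated pairs via a direct-sum argument for $\IP$ under the uniform product distribution (which is known to have the required $\Omega$ behaviour for $\IP_n$ on $n$ bits, hence $\Omega(n)$ per block) gives the claimed bound. By linearity of expectation over the cut edges, $(x_e)$ is then a feasible solution (after scaling by $\Theta(n)$) to the LP $\lpmtch(G,K_i,M_i)$ summed over $i$, i.e. to $\lpo(G)$ with $b^i(C)=n\cdot m(C,M_i)$.

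Next I would apply Theorem~\ref{cor:lp1-lp2}: the constraints $b^i(C)=n\cdot m(C,M_i)$ are of multi-commodity-flow type and hence sub-additive (Appendix~\ref{app:sub-add}), so $R(\xor\circ\IP_n)\ge \Omega(\lpo(G)) \ge \Omega\big(\frac{1}{\distrbnd}\big)\cdot \lpt(G)$; but here $\lpt(G)$ decomposes as $n\sum_{i=1}^t \lpmtch(G,K_i,M_i)$, and by Theorem~\ref{thm:mtch-lp} each term is $\Omega(d(G,M_i)/\log k)=\Omega(\sigma_{K_i}(G)/\log k)$ by our choice of $M_i$ via Lemma~\ref{lem:worst-match}. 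Actually, since here we only need the $\lpo\ge$ direction applied to each commodity separately, it is cleaner to bypass Theorem~\ref{cor:lp1-lp2} entirely: apply Theorem~\ref{thm:mtch-lp} directly to the feasible solution $(x_e)$, summing the Bourgain-cut inequality $\sum_{C\in\cC}\sum_{e\in\delta(C)}x_e \ge \sum_i n\cdot d(G,M_i)$ against the bound $\sum_{C\in\cC}\sum_{e\in\delta(C)}x_e\le O(\log k)\sum_e x_e$, which yields $\sum_e x_e \ge \Omega\big(n\sum_i d(G,M_i)/\log k\big)=\Omega\big(n\sum_i\sigma_{K_i}(G)/\log k\big)$, as desired.

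The main obstacle I expect is the two-party lower bound of $\Omega\big(n\sum_i m(C,M_i)\big)$ for the induced function: one must be careful that (i) the other pairs and other blocks $K_j$ can indeed be fixed so that the residual function is exactly a clean direct sum of $\IP_n$'s whose outputs XOR together (the XOR-of-XORs structure of $\xor_1$ makes this work, since fixing a coordinate's other contributions to a constant leaves the target as the remaining sum), and (ii) the distributional complexity under the uniform product distribution of this direct sum is $\Omega$ of the sum of the individual $\Omega(n)$ bounds — this needs either a direct-sum theorem for $\IP$ under product distributions or, more simply, the observation that $\IP_n$ on a single separated pair already gives $\Omega(n)$ and that one can reduce a single $\IP_{n\cdot r}$ instance to $r$ separated pairs, giving $\Omega(nr)$ in one shot. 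I would pursue the latter route to keep the argument self-contained, mirroring the reduction used in Lemma~\ref{lem:ed-cor-2-party-lb}.
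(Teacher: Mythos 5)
Your proposal is correct and follows essentially the same route as the paper: the same uniform product hard distribution, the same identification of the induced two-party problem across a cut $C$ (after conditioning on the unseparated pairs) with a single inner-product instance on $n\cdot\sum_i m(C,M_i)$ uniform bits, whose $\Omega(n\cdot\sum_i m(C,M_i))$ distributional bound the paper takes from Chor--Goldreich, and the same final step of feeding the constraints $b^i(C)=n\cdot m(C,M_i)$ into the matching LP and applying the Bourgain-cut argument of Theorem~\ref{thm:mtch-lp} (packaged in the paper as Lemma~\ref{lem:lp1-lp2-mtch}) to get the $1/\log k$ loss rather than $1/(\distrbnd)$. Your instinct to bypass Theorem~\ref{cor:lp1-lp2} is exactly what the paper does.
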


To prove this we will need the following result (the proof appears in Appendix~\ref{app:lem:lp1-lp2-mtch}):
\begin{lemma}
\label{lem:lp1-lp2-mtch}
Let $K_1,\dots,K_{\ell}$ and $M_1,\dots,M_{\ell}$ be defined as above. For any $j\in [\ell]$ define $b^j(C)$ is defined to be the number of pairs in $M_j$ separated by the cut $C$. Then for $k=\sum_{i=1}^{\ell} |K_i|$,
\[\Omega\left(\frac{\sum_{i=1}^{\ell} \sigma_{K_i}(G)}{\log{k}}\right)\le \lpo(G)\le \lpt(G)\le O\left(\sum_{i=1}^{\ell} \sigma_{K_i}(G)\right).\]
\end{lemma}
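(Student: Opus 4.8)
The plan is to deduce this lemma by combining three ingredients that are all already in hand. First, observe that the constraints here are exactly the multi-commodity flow constraints from Appendix~\ref{app:sub-add}: with $b^j(C)$ the number of pairs of $M_j$ separated by $C$, the LP $\lpt(G)$ decomposes as a sum over $j\in[\ell]$ of the single-commodity matching LPs $\lpmtch(G,K_j,M_j)$, since minimizing $\sum_j\sum_e x_{j,e}$ subject to the $j$-th block of constraints can be done block by block. Hence $\lpt(G)=\sum_{j=1}^\ell \lpmtch(G,K_j,M_j)$. Second, I need the sub-additivity of these $b^j(C)$ so that Theorem~\ref{cor:lp1-lp2} applies; this is checked exactly as in Appendix~\ref{app:sub-add}: if $C_1\cup C_2=C_3$ then any pair of $M_j$ separated by $C_3$ is separated by $C_1$ or by $C_2$ (one side of $C_3$ is the union of one side of $C_1$ and one side of $C_2$, so a pair split by $C_3$ must have its two endpoints land on opposite sides of at least one of $C_1,C_2$), giving $b^j(C_3)\le b^j(C_1)+b^j(C_2)$.

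With these two observations the chain of inequalities falls out. The middle inequality $\lpo(G)\le\lpt(G)$ is immediate (any feasible solution to $\lpt(G)$ yields a feasible solution to $\lpo(G)$ by setting $x_e=\sum_i x_{i,e}$, since $\sum_{e\text{ crosses }C}\sum_i x_{i,e}\ge\sum_i b^i(C)$). The rightmost inequality uses the block decomposition together with Theorem~\ref{thm:mtch-lp}, which states $\lpmtch(G,K_j,M_j)\ge\Omega(d(G,M_j)/\log k)$ — wait, that is a lower bound; for the upper bound $\lpt(G)\le O(\sum_i\sigma_{K_i}(G))$ I instead use the feasible integral solution: for each $j$, route one unit of flow along a shortest path in $G$ for every pair $(u,v)\in M_j$, which is feasible for $\lpmtch(G,K_j,M_j)$ and has cost $d(G,M_j)$, so $\lpmtch(G,K_j,M_j)\le d(G,M_j)\le\sigma_{K_j}(G)$ by Lemma~\ref{lem:worst-match} and the choice of $M_j$; summing over $j$ gives $\lpt(G)\le\sum_i\sigma_{K_i}(G)$. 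For the leftmost inequality, apply Theorem~\ref{cor:lp1-lp2} to get $\lpo(G)\ge\Omega(1/\distrbnd)\cdot\lpt(G)$, but this loses a $\log\log|V|$ factor and a $\log|V|$ rather than $\log k$ factor, which is weaker than claimed. So instead, for the lower bound on $\lpo(G)$ directly, I observe that $\lpo(G)$ with the multi-cut-counting constraints is bounded below via a Bourgain-embedding argument identical to the proof of Theorem~\ref{thm:mtch-lp}: take the cut collection $\cC$ with every edge cut $O(\log k)$ times and every pair $u\ne v$ separated $\ge d_G(u,v)$ times, sum the constraint $\sum_{e\text{ crosses }C}x_e\ge\sum_i b^i(C)$ over $C\in\cC$, and get $O(\log k)\sum_e x_e\ge\sum_{C\in\cC}\sum_i b^i(C)=\sum_i\sum_{(u,v)\in M_i}|\{C: C\text{ separates }(u,v)\}|\ge\sum_i d(G,M_i)=\Omega(\sum_i\sigma_{K_i}(G))$, hence $\lpo(G)\ge\Omega(\sum_i\sigma_{K_i}(G)/\log k)$.

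The main obstacle is getting the clean $\log k$ factor rather than the $\distrbnd$ factor that a black-box application of Theorem~\ref{cor:lp1-lp2} would give; the resolution, as sketched above, is that for the matching/1-median-type constraints one does not need the tree-embedding machinery at all — Bourgain's $L_1$ embedding gives a sharper bound directly, exactly as in the proofs of Theorems~\ref{thm:mfc-lp} and~\ref{thm:mtch-lp}. The only mild care needed is to make sure the $\lpt(G)\le O(\sum_i\sigma_{K_i}(G))$ upper bound and the $\lpo(G)\ge\Omega(\sum_i\sigma_{K_i}(G)/\log k)$ lower bound refer to the same quantity $\sum_i\sigma_{K_i}(G)$, which they do once we fix $M_i$ with $d(G,M_i)=\Theta(\sigma_{K_i}(G))$ as in the hypothesis.
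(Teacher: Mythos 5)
Your proposal is correct and matches the paper's proof in all essentials: the trivial middle inequality, the shortest-path routing giving the feasible solution of cost $\sum_i d(G,M_i)\le O\bigl(\sum_i \sigma_{K_i}(G)\bigr)$ for $\lpt(G)$, and the lower bound on $\lpo(G)$ via the Bourgain cut collection rather than via Theorem~\ref{cor:lp1-lp2} (the paper phrases this last step as observing that $\lpo(G)=\lpmtch(G,\cK,M)$ for the multi-set union $M=\cup_i M_i$ and invoking Theorem~\ref{thm:mtch-lp}, which is exactly the inline computation you carry out). Your correct diagnosis that the tree-embedding route would only give a $\distrbnd$ loss, and that the $L_1$-embedding argument is what yields the sharper $\log k$ factor, is precisely the point of stating this lemma separately.
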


\begin{proof}[of Lemma~\ref{lem:xor-ip-lb}] Let $\mu$ be the distribution where the $k=\sum_{i=1}^t k_i$ vectors are picked uniformly and independently at random from $\{0,1\}^n$. Let $C$ be an arbitrary cut of $G$ and let $k'_i$ be the number of pairs in $M_i$ that are cut by $C$. Then note that the induced two-party problem is essentially trying to solve the two-party inner product function on $(\sum_{i=1}^t k'_i)\cdot n$ bits. Further, conditioned on all valid fixings of inputs corresponding to pairs that are not separated by $C$, the remaining inner product problem mentioned above corresponds to Alice (who receives all the vectors on one side of $C$) receiving a uniform vector with $(\sum_{i=1}^t k'_i)\cdot n$ uniform bits. Similarly for Bob. It is well-known \cite{CG88} that for this induced distribution the two party lower bound on the expected cost is $\Omega\left((\sum_{i=1}^t k'_i)\cdot n\right)$ bits of communication. 


Note that by Corollary~\ref{cor:lpo-cc-lb} the effective lower bound for the entire problem is $\Omega(\lpo(G))$ where $\ell=t$ and $b^j(C)= k'_j\cdot n$.
Lemma~\ref{lem:lp1-lp2-mtch} completes the proof.
\end{proof}

\subsubsection{Bounds for $\xor\circ \ED$}  \label{sec:xor-ed}





In this section, we consider in some sense the ``reverse'' of the $\ED\circ\xor_n$ function. The function $\xor_1\circ\ED:\left(\{0,1\}^n\right)^{\cK}\to \{0,1\}$ is defined as follows. Let the set of terminals $\cK$ be divided into $t$ subsets of terminals $K_1,\dots,K_t$. Given $k_i\stackrel{def}{=} |K_i|$ $n$-bit vectors $X_1^i,\dots,X_{k_i}^i\in\{0,1\}^n$ for every $i\in [t]$, define:
\[\xor_1\circ\ED\left( X^1_1,\dots,X^1_{k_1},\dots,X^t_1,\dots,X_{k_t}^t\right) =\bigoplus_{i=1}^t \ED\left(X^i_1,\dots,X^i_{k_i}\right).\]

Now consider the trivial two-step protocol that results in the following upper bound:
\begin{lemma}
\label{lem:xor-ed-ub}
Choose $t$ terminals $u_i\in K_i$ for every $i\in [t]$. Then
\[R(\xor_1\circ\ED,G,\cK,\{0,1\}^n)\le O\left(\ST(G,\{u_1,\dots,u_t\})+\sum_{i=1}^t \sigma_{K_i}(G)\cdot\log{k}\right).\]
\end{lemma}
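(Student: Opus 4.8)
The plan is to exhibit the two-stage protocol explicitly and analyze its cost, exactly as the text preceding the statement anticipates. First I would describe Stage~1: for each $i\in[t]$, run the optimal $1$-median protocol for $\ED$ on the terminals in $K_i$, so that the designated terminal $u_i$ learns the bit $\ED(X^i_1,\dots,X^i_{k_i})$. Since $\ED$ is of $1$-median type (shown in \cite{CRR14}), computing it on $K_i$ costs $O(\sigma_{K_i}(G)\cdot\log k)$ bits --- the $\log k$ factor being the poly-log overhead hidden in the upper bound for $\ED$, coming from the $O(\log k)$-bit hashes used to distinguish elements of $\{0,1\}^n$. Summing over $i\in[t]$ gives the second term $\sum_{i=1}^t \sigma_{K_i}(G)\cdot\log k$.

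Next I would describe Stage~2: now the single bits $\ED(X^i_1,\dots,X^i_{k_i})$ sit at the $t$ terminals $u_1,\dots,u_t$, and we must compute their parity $\bigoplus_{i=1}^t \ED(X^i_1,\dots,X^i_{k_i})$ at the designated output node. This is just computing $\xor_1$ on $t$ single bits held at $\{u_1,\dots,u_t\}$. As described in the introduction (and Appendix~\ref{app:lp-cc}) for the $\xor_n$ problem with $n=1$, one roots a minimum Steiner tree on $\{u_1,\dots,u_t\}$ in $G$ at an arbitrary node and sends partial parities up toward the root; each edge of the tree carries at most one bit, so the total cost is $O(\ST(G,\{u_1,\dots,u_t\}))$, the first term. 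Adding the costs of the two stages gives the claimed bound.

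The only subtlety --- and the part I would be most careful about --- is making sure Stage~1 really produces the bit $\ED(X^i_1,\dots,X^i_{k_i})$ correctly with small error, not merely a hash-based estimate, and that the union bound over the $t$ groups keeps the total error probability bounded; a standard constant-factor success amplification inside each $\ED$ instance handles this, and since we only compose $t\le k$ instances, an $O(\log k)$ repetition suffices (which is already absorbed in the $\log k$ factor). A minor point is that Stage~2 is deterministic given the Stage~1 outputs, so it contributes no additional error. I would also remark that the two stages are run sequentially and their costs simply add by linearity, so there is no interaction to worry about.
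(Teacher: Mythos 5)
Your proposal is correct and matches the paper's proof essentially verbatim: the paper also runs the $\ED$ upper-bound protocol (via the argument of Proposition~\ref{prop:ed-xor-ub}) on each $K_i$ for a cost of $O\left(\sum_{i=1}^t \sigma_{K_i}(G)\cdot\log k\right)$ so that each $u_i$ learns its $\ED$ bit, and then aggregates the parity along a minimum Steiner tree on $\{u_1,\dots,u_t\}$ for an additional $O(\ST(G,\{u_1,\dots,u_t\}))$. Your extra care about driving the per-instance error down to $O(1/t)$ so the union bound over the $t$ groups goes through is a reasonable (and correct) refinement that the paper leaves implicit.
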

\begin{proof}
Using the argument in proof of Proposition~\ref{prop:ed-xor-ub}, with $O\left(\sum_{i=1}^t \sigma_{K_i}(G)\cdot \log{k}\right)$ bits of communication, every $u_i$ knows the value of $ \ED\left(X^i_1,\dots,X^i_{k_i}\right)$. Then the resulting $\xor_1$ can be computed with $O(\ST(G,\{u_1,\dots,u_t\})$ bits of communication by progressively computing the $\xor_1$ along the corresponding Steiner tree.
\end{proof}

Recall that Theorem~\ref{thm:xor-ed-lb} shows a nearly matching lower bound.
We can have matching lower bound term for the first term in the sum above from Theorem~\ref{thm:steiner-tree-minimal}.\footnote{Technically, we get a lower bound of $\Omega(\ST(G,\cK))$, which of course implies a lower bound of $\Omega(\ST(G,\{u_1,\dots,u_t\}))$.} The more interesting part is to prove matching lower bound for the second term. Towards that end, we will need a result on classical 2-party Set-Disjointness: let $\UDISJ_n$ the unique-set disjointness problem on $2n$ bits that has the following promise. Alice and Bob get $n$-bit strings such that they have at most one occurrence of an all-one column in their inputs, i.e. their sets have at most one element in common. They want to find out if their sets intersect. Pair the input bits of Alice and Bob as $(X_1,Y_1),\ldots,(X_n,Y_n)$. Each pair $(X_i,Y_i)$ is sampled independently from a distribution $\mu$ that we describe next. To draw a sample $(U,V)$ from $\mu$, we first throw a uniformly random coin $D$. If $D=0$, $U$ is fixed to 0 and $V$ is drawn uniformly at random from $\{0,1\}$. If $D=1$, the roles of $U$ and $V$ are reversed. The following result was observed by \cite{CRR14}, using the seminal work of Bar-Yossef et al.~\cite{BJKS}:

\begin{thm}  \label{thm:info-set-disj}
Let $\Pi$ be any 2-party randomized protocol solving $\UDISJ_n$ with bounded error $\epsilon$ < 1/2. Then, its expected communication cost w.r.t. input distribution $\mu^n$ is at least $\big(1-2\sqrt{\epsilon}\big)(n/4)$.
\end{thm}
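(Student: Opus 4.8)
The plan is to follow the conditional information complexity framework of Bar-Yossef, Jayram, Kumar and Sivakumar~\cite{BJKS}, as invoked by~\cite{CRR14}. Write $X=(X_1,\dots,X_n)$ and $Y=(Y_1,\dots,Y_n)$ for Alice's and Bob's inputs, let $D=(D_1,\dots,D_n)$ be the vector of ``mixing'' coins used to draw $(X,Y)$ from $\mu^n$, and let $\Pi$ also denote the (public-coin) transcript of the protocol. First I would reduce the communication cost to a conditional information quantity: since conditioning does not increase entropy, the expected length of $\Pi$ is at least $H(\Pi)\ge H(\Pi\mid D)\ge I(XY:\Pi\mid D)$. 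Hence it suffices to show $I(XY:\Pi\mid D)\ge \big(1-2\sqrt{\epsilon}\big)(n/4)$.

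Second, a direct-sum decomposition. Conditioned on $D$, the columns $(X_j,Y_j)$ are mutually independent, and mutual information is superadditive over independent coordinates for any transcript (a deterministic function of the inputs and of the coins), so $I(XY:\Pi\mid D)\ge \sum_{j=1}^n I(X_jY_j:\Pi\mid D)$. The key structural point is that $\mu$ is supported on $\{(0,1),(1,0)\}$, i.e.\ every column drawn from $\mu$ has $\mathrm{AND}$-value $0$. Therefore, after fixing $D_{-j}$ (hence the other columns to an input with no all-ones column), the induced single-column protocol is a bounded-error protocol for $\mathrm{AND}$ on column $j$, the value of $\UDISJ_n$ being determined by that column, and the unique-intersection promise is automatically met. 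Thus $I(X_jY_j:\Pi\mid D)$ is an average over $D_{-j}$ of conditional information costs of single-column $\mathrm{AND}$ protocols with error $\le\epsilon$, and it remains to bound each from below by $(1-2\sqrt{\epsilon})/4$.

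Third, the single-column bound via Hellinger distance $h(\cdot,\cdot)$. For fixed $D_{-j}$, let $\Pi_{ab}$ be the transcript distribution when column $j$ equals $(a,b)$. Since $D_j$ is a fair coin choosing between ($X_j=0$, $Y_j$ uniform) and ($Y_j=0$, $X_j$ uniform), and since $I(U:\Pi)\ge h^2(\Pi_0,\Pi_1)$ for a uniform bit $U$ (Jensen--Shannon divergence dominates squared Hellinger distance), we get $I(X_jY_j:\Pi\mid D_j)\ge \tfrac12\big(h^2(\Pi_{00},\Pi_{01})+h^2(\Pi_{00},\Pi_{10})\big)$. By the cut-and-paste property of randomized protocols, $h(\Pi_{01},\Pi_{10})=h(\Pi_{00},\Pi_{11})$; combined with the triangle inequality for $h$ and $(a+b)^2\le 2(a^2+b^2)$, this yields $h^2(\Pi_{00},\Pi_{11})\le 2\big(h^2(\Pi_{00},\Pi_{01})+h^2(\Pi_{00},\Pi_{10})\big)\le 4\,I(X_jY_j:\Pi\mid D_j)$. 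Finally, column $j$ equal to $(0,0)$ gives a NO instance and $(1,1)$ a YES instance of $\UDISJ_n$, so error $\le\epsilon$ forces statistical distance $\ge 1-2\epsilon$ between $\Pi_{00}$ and $\Pi_{11}$, and the standard comparison gives $h^2(\Pi_{00},\Pi_{11})\ge 1-\sqrt{1-(1-2\epsilon)^2}=1-2\sqrt{\epsilon(1-\epsilon)}\ge 1-2\sqrt{\epsilon}$. Hence $I(X_jY_j:\Pi\mid D_j)\ge\tfrac14(1-2\sqrt{\epsilon})$; averaging over $D_{-j}$ and summing over $j$ gives $I(XY:\Pi\mid D)\ge \tfrac{n}{4}(1-2\sqrt{\epsilon})$, which with the first step completes the proof.

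I expect the main obstacle to be purely the information-theoretic bookkeeping: establishing (or citing precisely) the cut-and-paste lemma, the bound $I(U:\Pi)\ge h^2(\Pi_0,\Pi_1)$, and the statistical-distance-versus-Hellinger comparison, and—most delicately—making the direct-sum decomposition fully rigorous, since it genuinely relies on $\mu$ being a collapsing distribution for $\mathrm{AND}$ (i.e.\ concentrated on $\mathrm{AND}^{-1}(0)$). All of these are exactly the lemmas of~\cite{BJKS}; the only thing to verify carefully is that the constants line up to the stated $\big(1-2\sqrt{\epsilon}\big)(n/4)$.
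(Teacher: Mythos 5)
Your proof is correct and is exactly the argument the paper relies on: the paper offers no proof of Theorem~\ref{thm:info-set-disj}, deferring to \cite{CRR14} and the conditional information complexity machinery of \cite{BJKS}, which is precisely the chain you carry out (expected length $\ge$ conditional information cost, direct sum over independent columns given $D$, and the single-coordinate Hellinger/cut-and-paste bound), with constants matching the stated $\big(1-2\sqrt{\epsilon}\big)(n/4)$. One cosmetic slip: $\mu$ is supported on $\{(0,0),(0,1),(1,0)\}$, not on $\{(0,1),(1,0)\}$, but the property you actually use --- that $(1,1)$ is outside the support, so $\mu$ collapses $\mathrm{AND}$ to $0$ and the remaining columns never violate the uniqueness promise --- is correct.
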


We are now ready to prove a nearly tight lower bound for the second term in Lemma~\ref{lem:xor-ed-ub}:

\begin{lemma}
\label{lem:xor-ed-lb} For $n\ge \log{k} + 2$,
\[R(\xor_1\circ\ED,G,\cK,\{0,1\}^n)\ge \Omega\left(\frac{\sum_{i=1}^t \sigma_{K_i}(G)}{\log{k}}\right).\]
\end{lemma}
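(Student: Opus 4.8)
The plan is to mimic the proof of Lemma~\ref{lem:ed-xor-lb}, but using the multi-commodity flow LP in place of the Steiner tree LP, and invoking Lemma~\ref{lem:lp1-lp2-mtch} (and Theorem~\ref{thm:mfc-lp}) in place of Theorem~\ref{cor:lp1-lp2} (and Theorem~\ref{thm:st-lp}). Concretely, let $\mu = \nu^{\cK}$ where $\nu$ is uniform on $\{0,1\}^n$, so every one of the $k = \sum_i k_i$ input vectors is independent and uniform. With probability $\ge 1 - \binom{k}{2}/2^n \ge 1-1/k$ (using $n \ge \log k + 2$) all $k$ vectors are distinct, and in that event each inner $\ED(X^i_1,\dots,X^i_{k_i})$ evaluates to $1$; this near-triviality of the function under $\mu$ is exactly what makes the reduction to two-party lower bounds clean. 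Fix any correct protocol $\Pi$ and any cut $C$; let $k'_i$ be the number of terminals of $K_i$ on Alice's side of $C$. I would show the induced two-party problem $\xor_1\circ\ED(C)$ under $\mu_C$ requires expected communication $\Omega\big(\sum_{i=1}^t \min(k'_i, k_i - k'_i)\big)$, and then set $x_e$ to be the expected communication of $\Pi$ across edge $e$ under $\mu$; by linearity of expectation $(x_e)$ is (a scaled) feasible solution to $\lpo(G)$ with $\ell = t$ and $b^j(C) = \min(k'_j, k_j - k'_j)$ (which is precisely the multi-commodity flow demand for $K_j$, and is sub-additive).

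For the two-party lower bound across a fixed cut $C$: Alice holds sets $A_i \subseteq K_i$ of size $k'_i$ of uniform vectors, Bob holds $B_i$ of size $k_i - k'_i$, and (conditioned on distinctness) $\xor_1\circ\ED(C) = \bigoplus_{i=1}^t \SD(A_i, B_i)$, i.e. the two-party set-disjointness function XOR-composed over the $t$ blocks, with each block an instance on vectors from $\{0,1\}^n$. The key subtlety is that I only want to pay a $\log k$ loss, not the $\log(nk)\log^2 k$ loss of the commented-out Lemma~\ref{lem:xor-ed-lb0}, so I should \emph{not} route this through the Barak et al.\ XOR lemma (Theorem~\ref{thm:barak-etal}). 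Instead I would use the information-theoretic lower bound for $\UDISJ$ directly, via Theorem~\ref{thm:info-set-disj}: the crucial observation is that the XOR of several $\SD$ instances, under the product distribution $\mu^n$ on each, still has an information cost that is \emph{additive} — each coordinate $(X_\ell, Y_\ell)$ contributes $\Omega(1)$ to the information revealed about it regardless of the other blocks, because the distribution is a product and conditioning on the other coordinates/blocks cannot reduce the information the transcript carries about a given coordinate (this is the standard direct-sum-via-information-complexity argument of Bar-Yossef et al.). Summing over the $\le \min(k'_i, k_i - k'_i) \cdot$ (roughly $n$, but really we only need one all-ones-capable coordinate per pairing — actually here each block contributes $\Omega(\min(k'_i,k_i-k'_i))$ since a block with $a$ Alice-vectors and $b$ Bob-vectors embeds $\min(a,b)$ disjoint $\UDISJ$-type subproblems) yields the claimed $\Omega(\sum_i \min(k'_i, k_i - k'_i))$ bound with no logarithmic loss at this stage.

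Given the per-cut bound, apply Corollary~\ref{cor:lpo-cc-lb}: $R(\xor_1\circ\ED, G, \cK, \{0,1\}^n) \ge \Omega(\lpo(G))$ with the demands $b^j(C) = \min(k'_j, k_j - k'_j)$. By Lemma~\ref{lem:lp1-lp2-mtch} — wait, that lemma is stated for matching demands $d(G,M_j)$; here I instead want the multi-commodity flow version, so I would either invoke Theorem~\ref{thm:mfc-lp} after noting $\lpt(G)$ decouples into $\sum_i \lpmfc(G, K_i)$ and applying the sub-additivity machinery of Theorem~\ref{cor:lp1-lp2} (which costs the $\distrbnd$ factor — too lossy), or, to keep the loss to $\log k$, argue directly as in the proof of Theorem~\ref{thm:mfc-lp}: take the Bourgain cut collection $\cC$ where every edge is cut $O(\log k)$ times and every pair $u,v$ is separated $\ge d_G(u,v)$ times, sum the cut constraints $\sum_{e \in \delta(C)} x_e \ge \sum_i \min(k'_i(C), k_i - k'_i(C)) \ge \sum_i |C \cap K_i|\cdot|K_i \setminus C|/k_i$ over $C \in \cC$, and read off $\sum_{C \in \cC}\sum_{e \in \delta(C)} x_e \ge \frac{1}{k}\sum_i \sum_{u \ne v \in K_i} d_G(u,v) \ge \frac{1}{k}\sum_i\sum_{u \in K_i}\sigma_{K_i}(u) \ge \sum_i \sigma_{K_i}(G)$, hence $\sum_e x_e \ge \Omega(\sum_i \sigma_{K_i}(G)/\log k)$. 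The main obstacle is making the two-party XOR-of-disjointness lower bound across each cut work with only the $\log k$ loss (equivalently, getting genuine additivity of information cost for the XOR composition rather than the multiplicative-plus-log loss of a black-box XOR lemma); once that per-cut bound is in hand, the LP/Bourgain part is routine and follows the template of Theorem~\ref{thm:mfc-lp}.
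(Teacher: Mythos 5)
There is a genuine gap, and it sits exactly where you flag "the main obstacle": the per-cut lower bound of $\Omega\bigl(\sum_{i=1}^t \min(k'_i,k_i-k'_i)\bigr)$ for $\bigoplus_i \SD(A_i,B_i)$ under the uniform product distribution. Your justification --- that information cost is "additive" for the XOR composition because the distribution is a product and conditioning cannot reduce the information the transcript carries about a coordinate --- is not an argument. The product-distribution direct sum of Bar-Yossef et al.\ requires, for each embedded coordinate, a fixing of the remaining coordinates under which the composed function still *depends non-trivially* on that coordinate in the right way; for $\OR$-type outer functions (DISJ itself) this is the all-zeros fixing, and even then the per-coordinate $\Omega(1)$ information bound is the hard technical core of that paper. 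For an XOR of DISJ blocks you would have to redo that analysis for the composed function, which is precisely what a black-box XOR lemma (Theorem~\ref{thm:barak-etal}, used in the commented-out Lemma~\ref{lem:xor-ed-lb0}) buys you at the cost of the $\binlosslogk$ factors you are trying to avoid. What you *can* get cheaply by conditioning on all blocks other than $i$ is that the total expected communication is $\Omega(\min(k'_i,k_i-k'_i))$ for each $i$ separately, i.e.\ a $\max_i$ bound, not the sum. So as written the per-cut bound is unproved, and the rest of the argument (which is otherwise fine --- the Bourgain-cut LP step with demands $\min(k'_j,k_j-k'_j)$ does only lose $\log k$) has nothing to stand on.

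The paper sidesteps this entirely by not using the uniform distribution. It fixes a near-worst-case pairing $M_i$ of each $K_i$ with $d(G,M_i)\ge \tfrac12\sigma_{K_i}(G)$, gives each matched pair a distinct $\log k$-bit prefix (this is where $n\ge\log k+2$ is used) so that collisions can only occur *within* a pair, and then puts the Bar-Yossef et al.\ unique-disjointness distribution $\mu$ on each pair via three fixed suffix strings $s_x^0,s_y^0,s^1$. Across a cut $C$, after conditioning on the unseparated pairs, the entire composed function collapses to a \emph{single} instance of $\UDISJ_{m'}$ under $\mu^{m'}$, where $m'$ is the number of separated matching pairs: a unique all-one column flips exactly one block's $\ED$ value and hence the XOR. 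Theorem~\ref{thm:info-set-disj} then gives $\Omega(m')$ directly --- the additivity over coordinates is already packaged inside that theorem rather than re-derived for an XOR composition. The demands become the matching demands $b^j(C)=\abs{M_j\cap\delta\text{-separated pairs}}$, and Lemma~\ref{lem:lp1-lp2-mtch} finishes with only a $\log k$ loss. If you want to salvage your route, you would need to actually prove an information-cost lower bound for XOR-of-DISJ under your distribution; the paper's design of the hard distribution is the device that makes that unnecessary.
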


\begin{proof}
We assume for convenience that each $|K_i|$ is even. Consider pairing $M_i$ of nodes in $K_i$, for each $i$ such that $d(G,M_i) \ge (1/2)\cdot \sigma_{K_i}\big(G\big)$ (such an $M_i$ exists thanks to Lemma~\ref{lem:worst-match}). Let $M$ be the multi-set union  $\cup_{i=1}^{\ell}M_i$, with $|M|= k/2 \equiv m$. Now for ease of description, we notate the inputs at the pairs of terminals in $M$ as $(X_1,Y_1),\ldots,(X_m,Y_m)$. We fix the first $\log k$ bits of each of the pair of terminals $X_j,Y_j$ to a string $a_j \in \{0,1\}^{\log k}$ such that $a_j \ne a_i$ for $i\ne j$. We call $a_j$ the prefix string of its pair. In the ensuing discussion we look at only restricted inputs, where the first $\log k$ bits of the inputs of each terminal are fixed to its respective prefix string. To keep notation simple, we still notate the unfixed bits of the $i$th  pair in $M$ as $(X_i,Y_i)$.  

We now describe the remaining input distribution: Let $\{s_x^0,s_y^0,s^1\}$ be three distinct strings in $\{0,1\}^{n'}$ where $n'= n-\log{k}$. Such three strings exist because of our assumed bound on $n$. Define auxiliary random variables $D_1,\ldots,D_m$ that are i.i.d and each takes value in $\{0,1\}$ uniformly at random. Then if $D_i = 0$, set $X_i=s_x^0$ and $Y_i$ takes uniformly at random a value in $\{s_y^0,\,s^1\}$. If $D_i=1$, then $Y_i = s_y^0$ and $X_i$ at random takes value in $\{s_x^0,\,s^1\}$. This completes the description of our input distribution that we denote by $\cD$.

We will show that we can invoke Lemma~\ref{lem:lp1-lp2-mtch} using distribution $\cD$. To do so, we analyze the expected communication cost of any protocol $\Pi$ solving $\xor_1 \circ \ED$ on $G$, across a cut $C$. Let number of pairs of $M_i$ cut by $C$ be $m'_i$ and $m' \equiv  m'_1 + \dots + m'_{\ell}$. Let $\cK_C$ denote the set of terminals whose mate in $M$ is separated by $C$. Let $\overline{\cK_C} \equiv \cK \setminus \cK_C$. Consider any assignment $\alpha$ to the terminals in $\overline{\cK_C}$ that is supported by $\cD$ and let the induced protocol be denoted by $\Pi_{\alpha}$.  We claim that we can solve unique (2-party) set-disjointness over $m'$ bits using $\Pi_{\alpha}$ as follows: Alice and Bob associate each of their co-ordinates with a separated pair in $M_C$.  Alice and Bob both replace their 1's by the string $s^1$. Alice replaces her 0's by $s_x^0$ and Bob replaces his by $s_y^0$. Then they simulate $\Pi_{\alpha}$ and communicate to each other whenever and whatever $\Pi_{\alpha}$ communicates across $C$. It is simple to verify that this way Alice and Bob can solve unique Set-Disjointness: if there is no all-1 column in their input, $\Pi_{\alpha}$ outputs  $r\mod{2}$ w.h.p, where $r$ is the number of $M_i$'s that are separated by $C$. If they do have a (unique) all-1 column,  $\Pi_{\alpha}$ outputs $(r-1)\mod{2}$ w.h.p. Further, the distribution induced on inputs of terminals in $\cK_C$, when Alice and Bob's input distribution is sampled from $\mu^{m'}$ (recall definition of $\mu$ from Theorem~\ref{thm:info-set-disj}), is precisely the distribution $\cD$ induces on $\cK_C$, conditioned on $\alpha$ assigned to inputs in $\overline{\cK_C}$. 

Thus, by Theorem~\ref{thm:info-set-disj}, the expected communication of $\Pi_{\alpha}$ over the cut edges of $C$ is $\Omega(m')$, for any $\alpha$. Hence, expected communication of $\Pi$ over $C$ is $\Omega(m')$. 
Corollary~~\ref{cor:lpo-cc-lb} and Lemma~\ref{lem:lp1-lp2-mtch} complete the proof.

\end{proof}

\section{Omitted Proofs from Section~\ref{sec:tree}}

\subsection{Proof of Theorem~\ref{cor:lp1-lp2}}
 
We first state a simple property of sub-additive values:
\begin{lemma}
\label{lem:sub-add-sum}
Let $G=(V,E)$ be a tree and let $b^i(C)$ for $i\in [\ell]$ be the constraint values for $\lpt(G)$ that satisfy the sub-additive property. For any edge $e\in E$, let $C_e$ denote the cut formed by removing $e$ from $G$. Then for any cut of $G$ we have
\[\sum_{e\in \delta(C)} b^i(C_e) \ge b^i(C).\]
\end{lemma}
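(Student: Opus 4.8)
The plan is to induct on $r=\abs{\cutE{C}}$, the number of edges crossing $C$. We may assume $\cutE{C}\neq\emptyset$: otherwise $C$ is the trivial cut, the right-hand side is an empty sum, and $b^i$ of the trivial cut is $0$ in every instance we consider (for the Steiner-tree and multi-commodity-flow constraints of Appendix~\ref{app:sub-add} the trivial cut separates nothing). For the base case $r=1$, write $\cutE{C}=\{e\}$; since $G$ is a tree, deleting $e$ disconnects it into exactly the two sides of $C$ (a vertex on a given side reaches that side's endpoint of $e$ along a path not using $e$), so $C=C_e$ and the inequality holds with equality.

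For the inductive step, let $r\ge 2$ and let $P_1,\dots,P_{r+1}$ be the connected components of the forest obtained by deleting the $r$ crossing edges from $G$. Each $P_j$ is \emph{monochromatic}, i.e.\ contained in one side of $C$, because a connected subgraph containing no crossing edge cannot straddle $C$. Contracting each $P_j$ to a single vertex turns $G$ into a tree on $r+1$ vertices whose $r$ edges are exactly the crossing edges; this tree has a leaf, so there is a component $P$ incident to exactly one crossing edge $e_P=(a,b)$ with $a\in P$, $b\notin P$. Since $e_P$ is the only edge of $G$ leaving $P$, the set $P$ is precisely one of the two sides of the cut $C_{e_P}$.

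Now let $S,\bar S$ be the two sides of $C$, say $P\subseteq S$ (the case $P\subseteq\bar S$ is symmetric, swapping $S$ and $\bar S$ throughout), and let $C'$ be the cut with sides $S\setminus P$ and $\bar S\cup P$. Using once more that $e_P$ is the unique crossing edge incident to $P$ and that every edge leaving $P$ is a crossing edge of $C$, one checks $\cutE{C'}=\cutE{C}\setminus\{e_P\}$; in particular $\abs{\cutE{C'}}=r-1$ (and $S\setminus P\neq\emptyset$, since $P=S$ would force $P$ to meet all $r\ge 2$ crossing edges), so the induction hypothesis applies to $C'$. Finally, $S=P\cup(S\setminus P)$ realizes one side of $C$ as the union of a side of $C_{e_P}$ (namely $P$) and a side of $C'$ (namely $S\setminus P$), so the sub-additive property gives $b^i(C)\le b^i(C_{e_P})+b^i(C')$. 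Combining,
\[
b^i(C)\;\le\; b^i(C_{e_P})+b^i(C')\;\le\; b^i(C_{e_P})+\sum_{e\in\cutE{C}\setminus\{e_P\}}b^i(C_e)\;=\;\sum_{e\in\cutE{C}}b^i(C_e),
\]
which closes the induction.

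I expect the main obstacle to be conceptual rather than computational: in contrast to the $r=1$ case, deleting an \emph{arbitrary} crossing edge of $C$ need not split $G$ into the two sides of $C$, since each of the two resulting components can contain vertices from both sides, so one cannot argue edge by edge directly. The resolution is to peel off a component $P$ that is pendant in the contracted tree — such a $P$ is guaranteed to coincide with one side of $C_{e_P}$ — and to be careful about which side of $C$ one expresses as a union of sides (the $S$-side when $P\subseteq S$, the $\bar S$-side when $P\subseteq\bar S$) when invoking the sub-additive property.
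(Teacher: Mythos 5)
Your proof is correct and follows the same underlying idea as the paper's (one-line) proof, namely that $C$ decomposes as the union of the single-edge cuts $C_e$ over $e\in\delta(C)$, to which sub-additivity is applied. Your induction on $|\delta(C)|$, peeling off a pendant component of the contracted tree at each step, is simply a careful elaboration of the iteration of the pairwise sub-additive property that the paper leaves implicit.
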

\begin{proof}
This follows from the fact that $C=\cup_{e\in \delta(C)} C_e$ (since $G$ is a tree)  and the definition of the sub-additive property.
\end{proof}

In the rest of this subsection, we will prove Theorem~\ref{cor:lp1-lp2}.
We begin with the upper bound in Theorem~\ref{cor:lp1-lp2}, which is trivial. 
\begin{lemma}
\label{lem:trivial-ineq-G}
For any graph $G$,
\[\lpo(G)\le \lpt(G).\]
\end{lemma}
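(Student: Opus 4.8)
The statement to prove is $\lpo(G) \le \lpt(G)$ for any graph $G$, where these are the optimal objective values of the two linear programs introduced in Section~\ref{sec:tree}. The plan is to exhibit a feasible solution to $\lpo(G)$ out of any feasible solution to $\lpt(G)$ with no larger objective value; in fact the natural aggregation works.

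First I would take an optimal (or arbitrary feasible) solution $(x_{i,e})_{i\in[\ell], e\in E}$ to $\lpt(G)$, and define $x_e := \sum_{i=1}^{\ell} x_{i,e}$ for each $e \in E$. The objective of $\lpo(G)$ evaluated at $(x_e)_e$ is $\sum_{e\in E} x_e = \sum_{e\in E}\sum_{i=1}^\ell x_{i,e} = \sum_{i=1}^\ell \sum_{e\in E} x_{i,e}$, which is exactly the objective of $\lpt(G)$ at the chosen solution. So the objective values match for this particular solution.

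Next I would check feasibility of $(x_e)_e$ for $\lpo(G)$. Nonnegativity is immediate since each $x_{i,e} \ge 0$. For the cut constraints: fix any cut $C$. Summing the $\lpt(G)$ constraints $\sum_{e\text{ crosses }C} x_{i,e} \ge b^i(C)$ over $i \in [\ell]$ gives $\sum_{e\text{ crosses }C}\sum_{i=1}^\ell x_{i,e} \ge \sum_{i=1}^\ell b^i(C)$, i.e. $\sum_{e\text{ crosses }C} x_e \ge \sum_{i=1}^\ell b^i(C)$, which is precisely the $\lpo(G)$ constraint for $C$. Hence $(x_e)_e$ is feasible for $\lpo(G)$, so $\lpo(G) \le \sum_{e\in E} x_e$; taking $(x_{i,e})$ to be optimal for $\lpt(G)$ yields $\lpo(G) \le \lpt(G)$.

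There is no real obstacle here — this direction of Theorem~\ref{cor:lp1-lp2} is genuinely trivial (the subadditivity hypothesis is not even needed; it is only required for the reverse, lossy inequality). The only thing to be slightly careful about is that the LP objectives and constraints line up under the collapse $x_e = \sum_i x_{i,e}$, which they do by linearity. I would write this up in three or four sentences: define the aggregated solution, note the objective is preserved, verify the constraints by summing over $i$, and conclude.

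\medskip
\noindent\emph{Proof.} Let $(x_{i,e})_{i\in[\ell],\,e\in E}$ be an optimal solution to $\lpt(G)$, and set $x_e := \sum_{i=1}^{\ell} x_{i,e}$ for each $e\in E$. Since each $x_{i,e}\ge 0$, we have $x_e\ge 0$. For any cut $C$, summing the constraints $\sum_{e\text{ crosses }C} x_{i,e}\ge b^i(C)$ over $i\in[\ell]$ gives
\[
\sum_{e\text{ crosses }C} x_e \;=\; \sum_{e\text{ crosses }C}\sum_{i=1}^{\ell} x_{i,e} \;\ge\; \sum_{i=1}^{\ell} b^i(C),
\]
so $(x_e)_{e\in E}$ is feasible for $\lpo(G)$. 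Therefore
\[
\lpo(G)\;\le\;\sum_{e\in E} x_e \;=\;\sum_{e\in E}\sum_{i=1}^{\ell} x_{i,e}\;=\;\sum_{i=1}^{\ell}\sum_{e\in E} x_{i,e}\;=\;\lpt(G). \qed
\]
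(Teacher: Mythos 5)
Your proof is correct and is exactly the paper's argument: aggregate a feasible solution of $\lpt(G)$ via $x_e=\sum_{i=1}^{\ell}x_{i,e}$, verify the cut constraints of $\lpo(G)$ by summing the per-$i$ constraints, and note the objective is preserved. You are also right that sub-additivity is not needed for this direction.
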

\begin{proof}
Consider any feasible solution $\{\vx_i\}_{i=1}^{\ell}$ for $\lpt(G)$, where $\vx_i=(x_{i,e})_{e\in E}$. Then note that the vector $\vx=(x_e)_{e\in E}$ defined as
\[x_e=\sum_{i=1}^{\ell} x_{i,e}\]
is also a feasible solution for $\lpo(G)$.
\end{proof}

To complete the proof, we now focus on proving the lower bound in Theorem~\ref{cor:lp1-lp2}.
We first begin by observing that the two LPs are essentially the same when $G$ is a tree:
\begin{lemma}
\label{lem:eq-tree}
For any tree $T=(V,E)$ (and values $b^i(C)$ for any $i\in [\ell]$ and cut $C$ with the sub-additive property), we have
\[\lpo(T)=\lpt(T).\]
\end{lemma}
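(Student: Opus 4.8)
The plan is to prove the two inequalities $\lpo(T) \le \lpt(T)$ and $\lpo(T) \ge \lpt(T)$ separately; the first is already handled by Lemma~\ref{lem:trivial-ineq-G}, so the real work is the reverse direction $\lpo(T) \ge \lpt(T)$. Here I would exploit the crucial structural fact about trees: removing a single edge $e$ induces a unique cut $C_e$, and every cut $C$ of the tree is the ``union'' of the cuts $C_e$ over $e \in \delta(C)$. This is exactly what Lemma~\ref{lem:sub-add-sum} gives: for each $i \in [\ell]$ and each cut $C$, $\sum_{e \in \delta(C)} b^i(C_e) \ge b^i(C)$.

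First I would take an optimal solution $(x_e)_{e \in E}$ of $\lpo(T)$. I claim that in a tree the cut constraints of $\lpo(T)$ are equivalent to just the single-edge constraints: for every edge $e$, $x_e \ge \sum_{i=1}^{\ell} b^i(C_e)$ (the general cut constraint $\sum_{e \in \delta(C)} x_e \ge \sum_i b^i(C)$ then follows by summing the per-edge bounds and applying Lemma~\ref{lem:sub-add-sum}). So WLOG an optimal solution of $\lpo(T)$ simply sets $x_e = \sum_{i=1}^{\ell} b^i(C_e)$ for each $e$, giving $\lpo(T) = \sum_{e \in E} \sum_{i=1}^{\ell} b^i(C_e)$.

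Next I would construct a feasible solution for $\lpt(T)$ of the same cost by ``splitting'' $x_e$ across the $\ell$ commodities in the obvious way: set $x_{i,e} = b^i(C_e)$ for each $i$ and $e$. Feasibility is immediate — for any cut $C$ and any $i$, $\sum_{e \in \delta(C)} x_{i,e} = \sum_{e \in \delta(C)} b^i(C_e) \ge b^i(C)$ by Lemma~\ref{lem:sub-add-sum}, and nonnegativity of $b^i(C_e)$ gives $x_{i,e} \ge 0$. The objective value is $\sum_{i=1}^{\ell} \sum_{e \in E} x_{i,e} = \sum_{e \in E} \sum_{i=1}^{\ell} b^i(C_e) = \lpo(T)$. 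Hence $\lpt(T) \le \lpo(T)$, and combined with Lemma~\ref{lem:trivial-ineq-G} we get equality.

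The only subtle point — and the step I would be most careful about — is justifying that an optimal solution of $\lpo(T)$ really can be taken to be $x_e = \sum_i b^i(C_e)$: I need that this setting is feasible for $\lpo(T)$ (again via Lemma~\ref{lem:sub-add-sum}, summed over $i$) and that no feasible solution can do better, which follows because each single-edge cut $C_e$ forces $x_e \ge \sum_i b^i(C_e)$ and the objective is the sum of the $x_e$'s with nonnegative contributions. Everything else is bookkeeping. I should also double-check the degenerate cases (a cut that is empty or the trivial cut, isolated edges) but these contribute nothing or are subsumed by the single-edge constraints.
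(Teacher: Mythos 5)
Your proposal is correct and follows essentially the same route as the paper: lower-bound $\lpo(T)$ by the single-edge cut constraints to get $\lpo(T)\ge\sum_{e\in E}\sum_{i=1}^{\ell}b^i(C_e)$, then exhibit $x_{i,e}=b^i(C_e)$ as a feasible solution of $\lpt(T)$ of that same cost using Lemma~\ref{lem:sub-add-sum}, and combine with Lemma~\ref{lem:trivial-ineq-G}. The only difference is cosmetic: you additionally argue that the optimal $\lpo(T)$ solution can be taken to be exactly $x_e=\sum_i b^i(C_e)$, which is true but not needed, since the one-sided inequality already suffices.
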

\begin{proof}
The proof basically follows by noting that for a tree $T$, we only need to consider some special cuts. In particular, for every edge $e\in E$, let $C_e$ denote the cut that only cuts the edge $e$ (In other words, the two sides of the cut are formed by the two subgraphs obtained by removing $e$ from $T$).

We first claim that
\[\lpo(T)\ge \sum_{e\in E} \sum_{i=1}^{\ell} b^i(C_e).\] 
To see this consider any feasible solution $\vx\in \R^{E}$ for $\lpo(T)$. We have from the constraint on $C_e$ for every $e\in E$ that
\[x_e\ge \sum_{i=1}^{\ell} b^i(C_e).\]
Summing the above over all $e\in E$ completes the claim.

Finally, we argue that
\[ \lpt(T)\le \sum_{e\in E}\sum_{i=1}^{\ell} b^i(C_e),\]
which with Lemma~\ref{lem:trivial-ineq-G} will complete the proof. Consider the specific vector $\{\vx_i\}_{i\in [\ell]}$ such that for every $i\in [\ell]$ and $e\in E$, we have
\[x_{i,e}=b^i(C_e).\]
Note that the proof will be complete if we can show that the above vector is a feasible solution for $\lpt(T)$. Notice that by the fact that $T$ is a tree the above vector indeed does satisfy all the constraints corresponding to the cuts $C_e$ for every $e\in E$. Now consider an arbitrary cut $C$. Indeed we have for every $i\in [\ell]$:
\[\sum_{e\in \delta(C)} x_{i,e} = \sum_{e\in \delta(C)} b^i(C_e) \ge b^i(C),\]
where the inequality follows from Lemma~\ref{lem:sub-add-sum}. 
\end{proof}


Thus, we are done for the case when $G$ is a tree. For the more general case of a connected graph $G$, we will just embed $G$ into one of its sub-tree with a low distortion. This basically follows a similar trick used in~\cite{AA97}. We say a graph $G$ embeds with a distortion $\alpha$ on to (a distribution $\cD$ on) its subtrees such that for every $(u,v)\in V$, we have $\alpha\cdot d_G(u,v)\ge \av{d_T(u,v)}{T\gets \cD}$. (Note that for every sub-tree $T$ of $G$, we have $d_T(u,v)\ge d_G(u,v)$.)

We will now prove the following result:
\begin{lemma}
\label{lem:embedding}
Let $G=(V,E)$ embed into its subtrees under distribution $\cD$ with distortion $\alpha$. Then we have
\[\lpo(G)\ge \frac{1}{\alpha} \cdot \lpt(G).\]
\end{lemma}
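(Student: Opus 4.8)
The plan is to find a single spanning subtree $T$ in the support of $\cD$ on which $\lpt$ essentially coincides with $\lpo$, while keeping the cost of $\lpt(T)$ within a factor $\alpha$ of $\lpo(G)$; an averaging argument over $\cD$ produces such a $T$. Two facts will be used as black boxes. First, by (the proof of) Lemma~\ref{lem:eq-tree}, for every subtree $T$ of $G$ one has $\lpt(T)=\sum_{e\in E(T)}\sum_{i=1}^{\ell}b^i(C_e)$, where $C_e$ denotes the fundamental cut of $e$ in $T$ (the bipartition of $V$ obtained by deleting $e$); the sub-additive hypothesis enters exactly here, through the feasibility of the solution $x_{i,e}=b^i(C_e)$ (Lemma~\ref{lem:sub-add-sum}). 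Second, since $T$ is a subgraph of $G$, any feasible solution for $\lpt(T)$ extends to a feasible solution for $\lpt(G)$ of the same objective value by setting $x_{i,e}=0$ on $E(G)\setminus E(T)$ (the crossing-edge set of a cut in $T$ is contained in that in $G$, so the constraints only get relaxed); hence $\lpt(G)\le\lpt(T)$ for every subtree $T$, and in particular $\lpt(G)\le\av{\lpt(T)}{T\gets\cD}$.

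The heart of the proof is then to bound $\av{\sum_{e\in E(T)}\sum_i b^i(C_e)}{T\gets\cD}$ by $\alpha\cdot\lpo(G)$. Fix an \emph{optimal} solution $\vx^{*}=(x^{*}_f)_{f\in E(G)}$ for $\lpo(G)$, so $\sum_{f\in E(G)}x^{*}_f=\lpo(G)$. For a fixed subtree $T$ and each $e\in E(T)$, feasibility of $\vx^{*}$ applied to the cut $C_e$ gives $\sum_i b^i(C_e)\le\sum_{f\in E(G):\,f\text{ crosses }C_e}x^{*}_f$. Summing over $e\in E(T)$ and exchanging the order of summation, and using that an edge $f=(u,v)$ of $G$ crosses the fundamental cut $C_e$ precisely when $e$ lies on the (unique) path from $u$ to $v$ in $T$ — the number of such edges $e$ being $d_T(u,v)$ — we get
\[\sum_{e\in E(T)}\sum_i b^i(C_e)\ \le\ \sum_{f=(u,v)\in E(G)}x^{*}_f\cdot\bigl|\{e\in E(T):\,f\text{ crosses }C_e\}\bigr|\ =\ \sum_{f=(u,v)\in E(G)}x^{*}_f\cdot d_T(u,v).\]
Taking expectation over $T\gets\cD$, invoking the distortion bound $\av{d_T(u,v)}{T\gets\cD}\le\alpha\,d_G(u,v)$, and finally using $d_G(u,v)=1$ for an edge $(u,v)$ of $G$,
\[\av{\sum_{e\in E(T)}\sum_i b^i(C_e)}{T\gets\cD}\ \le\ \alpha\sum_{f=(u,v)\in E(G)}x^{*}_f\cdot d_G(u,v)\ =\ \alpha\sum_{f\in E(G)}x^{*}_f\ =\ \alpha\cdot\lpo(G).\]
Chaining with $\lpt(G)\le\av{\lpt(T)}{T\gets\cD}=\av{\sum_{e\in E(T)}\sum_i b^i(C_e)}{T\gets\cD}$ yields $\lpt(G)\le\alpha\cdot\lpo(G)$, i.e. $\lpo(G)\ge\frac1\alpha\lpt(G)$.

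I expect the only delicate point to be the bookkeeping behind the two black-box facts of the first paragraph: that $\lpt(G)\le\lpt(T)$ relies on $E(T)\subseteq E(G)$, so "subtree" must be read as "spanning subtree" (which is forced anyway, since $d_T$ must be defined on all pairs), and that the closed form $\lpt(T)=\sum_{e\in E(T)}\sum_i b^i(C_e)$ is exactly Lemma~\ref{lem:eq-tree} and thus silently reuses sub-additivity. The charging step itself is routine once one notes the identity "$(u,v)$ crosses $C_e$ iff $e$ is on the $T$-path between $u$ and $v$", which is precisely what converts the per-tree double sum into $\sum_f x^{*}_f\,d_T(u,v)$ and lets the distortion guarantee close the argument.
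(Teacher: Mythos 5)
Your proof is correct and follows essentially the same route as the paper's: both hinge on the identity that a $G$-edge $(u,v)$ crosses the fundamental cut of exactly $d_T(u,v)$ tree edges, which turns the charging of the optimal $\lpo(G)$ solution into $\sum_{f=(u,v)}x^*_f\, d_T(u,v)$ and lets the distortion bound close the argument. The only cosmetic difference is that the paper phrases the charging as constructing a feasible solution for $\lpo(T)$ (then invokes $\lpo(T)=\lpt(T)$ and Markov to extract a single good subtree), whereas you work directly with the closed form of $\lpt(T)$ from Lemma~\ref{lem:eq-tree} and bound $\lpt(G)$ by the expectation over $\cD$.
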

\begin{proof}
Using the embedding trick of~\cite{AA97}, we will show that there exists a subtree $T$ of $G$ such that
\[\lpo(T)\le \alpha\cdot \lpo(G) \text{ and } \lpt(G)\le \lpt(T).\]
Note that the above along with Lemma~\ref{lem:eq-tree} completes the proof. Further, note that the second inequality in the above just follows from the fact that $T$ is a sub-tree of $G$. Hence, to complete the proof we only need to prove the first inequality.

A word of clarification. When we talk about the constraints in $\lpo(T)$ and $\lpo(G)$, we have the same $b^i(C)$ value for each cut. However, note that the set $\delta(C)$ could be different for $G$ and $T$.

Towards this end, consider an optimal solution $\vx\in\R^E$ for $\lpo(G)$. From this, we will construct a feasible solution $\vx'\in\R^E$ for $\lpo(T)$ whose expected cost is bounded, i.e.
\begin{equation}
\label{eq:embed-ineq}
\av{\sum_{e\in E(T)} x'_e}{T\gets \cD}\le \alpha\cdot \sum_{e\in E(G)} x_e.
\end{equation}
Markov's inequality will then complete the proof.

Finally, we define the solution $\vx'$ for $\lpo(T)$. Consider the following algorithm (for any given $T$):
\begin{enumerate}
\item Initialize $x'_e\gets 0$ for every $e\in E$.
\item For every $e=(u,v)\in E$ such that $x_e>0$ do the following
\begin{enumerate}
\item For the unique path $P_{u,v}$ that connects $u$ and $v$ in $T$ do the following
\begin{itemize}
\item For every $e'\in P_{u,v}$, do $x'_{e'}\gets x'_{e'}+x_e$.
\end{itemize}
\end{enumerate}
\end{enumerate}

We first argue that the vector $\vx'$ computed by the algorithm above is a feasible solution to $\lpo(T)$. Consider an arbitrary cut $C$ in $G$ and consider any $e=(u,v)\in \delta(C)$ such that $x_e>0$. 
Now consider the same cut $C$ in $T$. Note that in this case there has to be at least one edge $e'\in P_{u,v}$ such that $e'\in\delta(C)$ in $T$. Thus, we have
\[\sum_{e'\in\delta_T (C)} x'_e\ge \sum_{e\in \delta_G(C):x_e>0} x_e\ge \sum_{i=1}^{\ell} b^i(C),\]
where the last inequality follows since $\vx$ is a feasible solution for $\lpo(G)$. Thus, we have shown that $\vx'$ is a feasible solution.

Finally, we prove \eqref{eq:embed-ineq}. Note that by the algorithm above, we have 
\[\sum_{e\in E(T)} x'_e = \sum_{e=(u,v)\in E(G)} d_T(u,v)\cdot x_e.\]
Now \eqref{eq:embed-ineq} follows from the above, linearity of expectation and the fact that $G$ embeds with a distortion of $\alpha$ under $\cD$.
\end{proof}

It is known that any graph $G=(V,E)$ can be embedded into a distribution of its subtrees with distortion $O\left(\distrbnd\right)$ (see e.g.\cite{AN12,ABN08}), which in turn proves Theorem~\ref{cor:lp1-lp2}.\footnote{The result in~\cite{AN12} is not stated as distribution over sub-trees but rather the paper presents a deterministic algorithm to compute a tree $T$ that has low weighted average stretch. In our application, this means that the algorithm can compute a tree $T$ such that given weight $x_e$ for $e\in E(G)$, it is true that $\sum_{e=(u,v)\in E(G)} d_T(u,v) x_e\le \alpha \cdot\sum_{e\in E(G)} x_e$, which is enough for the rest of our proof to go through.}

\begin{rem}
\label{rem:just-trees}
It is natural to wonder if one can use embedding of a graph into a distribution of trees (instead of sub-trees as we do) and not lose the extra $\log\log{|V|}$ factor (since for trees one can get a distortion of $O(\log{|V|})$~\cite{FRT04}). We do not see how to use this result: in particular, in our proof of Lemma~\ref{lem:embedding} we do not see how to guarantee that the vector $(x'_e)_{e\in E(T)}$ satisfies the corresponding $\lpo(T)$ constraint. In short, this is because the edges in $T$ for the result in~\cite{FRT04} have weights  (say $w_e$ for every $e\in E(T)$) so we can no longer prove the (stronger) inequality $\sum_{e'\in\delta_T (C)} w_e\cdot x'_e\ge \sum_{e\in \delta_G(C):x_e>0} x_e$.
\end{rem}

\yell{ARKADEV: two questions. One: Am I missing something in the above? Two: should we keep the above remark? I think the question is natural enough and it might be worth it to explain why just tree embeddings do not work but am not sure how useful the above is. --Atri}
\ayell{At a quick glance, it seems you are right. I think it is worth keeping it: --Arkadev}

\subsection{Proof of Lemma~\ref{lem:lp1-lp2-mtch}}
\label{app:lem:lp1-lp2-mtch}

\begin{proof}[Proof of Lemma~\ref{lem:lp1-lp2-mtch}]
The inequality $\lpo(G)\le \lpt(G)$ follows from Lemma~\ref{lem:trivial-ineq-G}.

We begin with the last inequality. Towards this end we present a feasible solution for $\lpt(G)$. Fix an $i\in [\ell]$. Now consider the following algorithm to compute $x_{i,e}$ for $e\in E$:
\begin{itemize}
\item $x_{i,e}\gets 0$ for every $e\in E$.
\item For every $(u,v)\in M_i$, let $P_{u,v}$ be a shortest path from $u$ to $v$ in $G$. For every $e\in P_{u,v}$, do $x_{i,e}\gets x_{i,e}+1$.
\end{itemize}
It is easy to check that the vector computed above satisfies $\sum_{e\in E} x_{i,e}=d(G,M_i)\le O(\sigma_{K_i}(G))$ (where the inequality follows from our choice of $M_i$). Now consider any cut $C$. For every pair $(u,v)\in M_i$ that is cut by $C$, the chosen path $P_{u,v}$ will cross $C$ at least once. This implies that the vector $(x_{i,e})$ satisfies all the relevant constraints. This implies the claimed upper bound of $\lpt(G)\le O\left(\sum_{i=1}^{\ell} \sigma_{K_i}(G)\right)$.

We finally, argue the first inequality. We first note that $\lpo(G)$ is exactly the same as $\lpmtch(G,\cK,M)$ (where $M$ is the (multi-set) union of $M_1,\dots,M_{\ell}$). Thus, we have
\[\lpo(G)=\lpmtch(G,\cK,M) \ge \Omega\left(\frac{d(G,M)}{\log{k}}\right) = \Omega\left(\frac{\sum_{i=1}^{\ell} d(G,M_i)}{\log{k}}\right) \ge \Omega\left(\frac{\sum_{i=1}^{\ell} \sigma_{K_i}(G)}{\log{k}}\right),\]
where the first inequality follows from  Lemma~\ref{thm:mtch-lp} (and noting that its proof also works for the case when $M$ is a multi-set), the second equality follows from the fact that $M$ is the multi-set union of $M_1,\dots,M_t$ and the last inequality follows from our choices of $M_i$. This completes the proof.
\end{proof}

\subsection{Relating Communication Complexity Lower Bounds to $\lpo(G)$}

We now make the straightforward connection between two party lower bounds and $\lpo(G)$. In what follows consider a problem $p=(f,G,\cK,\Sigma)$. Further for any cut $C$ in graph $G$, we will denote by $f_C$ the two-party problem induced by the cut: i.e. Alice gets all the inputs from terminals in $\cK$ that are on one side of the cut and Bob gets the rest of the inputs. Finally, for a distribution $\mu$ over $\Sigma^{\cK}$ let $\mu_C$ be the induced distribution on the inputs on two sides of the cut.

\begin{lemma}
\label{lem:lpo-cc-lb}
Let $p=(f,G,\cK,\Sigma)$ be a problem and $\mu$ be a distribution on $\Sigma^{\cK}$ such that the following holds for every cut $C$ in $G$
\begin{equation}
\label{eq:2-party-lb}
D_{1/3,\mu_C}(f_C)\ge \sum_{i=1}^{\ell} b^i(C),\footnote{For a two party function $f$ and a distribution $\mu$ on the inputs of $f$, we will use $D_{\eps,\mu}(f)$ to denote the minimum expected communication cost over the distribution $\mu$ for the worst-case inputs over all protocols that compute $f$ with probability at least $1-\eps$ on every input.}
\end{equation}
then the following lower bound holds
\[R(p)\ge \lpo(G).\]
\end{lemma}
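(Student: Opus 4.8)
The plan is to extract a feasible point of $\lpo(G)$ from the edgewise communication of an optimal protocol, measured under the distribution $\mu$ supplied by the hypothesis.

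First I would fix a randomized protocol $\Pi$ that computes $f$ on $(G,\cK,\Sigma)$ with error at most $1/3$ on every input and whose worst-case expected cost equals $R(p)$; in particular its expected cost when the inputs are drawn from $\mu$ is at most $R(p)$. For each edge $e\in E$ let $x_e$ be the expected number of bits communicated across $e$ by $\Pi$ when the input is sampled from $\mu$. Trivially $x_e\ge 0$, and by linearity of expectation $\sum_{e\in E} x_e$ is exactly the expected total communication of $\Pi$ under $\mu$, hence at most $R(p)$.

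The key step is to check that $(x_e)_{e\in E}$ is feasible for $\lpo(G)$, i.e.\ that it satisfies $\sum_{e\text{ crosses }C} x_e\ge\sum_{i=1}^{\ell} b^i(C)$ for every cut $C=(V^A,V^B)$. Here I would invoke the cut-simulation already described in Appendix~\ref{app:lp-cc}: $\Pi$ induces a two-party protocol $\Pi_C$ for the induced problem $f_C$ in which Alice holds the inputs of $\cK\cap V^A$, Bob holds those of $\cK\cap V^B$, both simulate their side internally, and whenever $\Pi$ sends a bit along an edge of $\delta(C)$ they exchange exactly that bit under a fixed encoding. Since $f_C(x^A,x^B)=f(x^A,x^B)$ by definition of the induced problem, $\Pi_C$ errs with probability at most $1/3$ on every input of $f_C$, and its expected cost when the input is drawn from $\mu_C$ is precisely $\sum_{e\in\delta(C)} x_e$. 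Therefore, using the definition of $D_{1/3,\mu_C}$ and the hypothesis \eqref{eq:2-party-lb},
\[
\sum_{e\text{ crosses }C} x_e \;=\; \av{\text{cost of }\Pi_C}{\mu_C} \;\ge\; D_{1/3,\mu_C}(f_C) \;\ge\; \sum_{i=1}^{\ell} b^i(C).
\]
Finally, since $(x_e)_{e\in E}$ is a feasible solution of the minimization $\lpo(G)$, its objective value $\sum_{e\in E} x_e$ is at least the optimum $\lpo(G)$; combining with $\sum_{e\in E} x_e\le R(p)$ yields $R(p)\ge\lpo(G)$.

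I do not expect a genuine obstacle here — the argument is essentially bookkeeping. The only point that needs care is the exactness of the cut simulation, namely that the convention ``silence counts as one transmitted bit'' (and that a legal protocol cannot deadlock) lets Alice and Bob reproduce $\Pi$'s behaviour across $\delta(C)$ with no overhead and no ambiguity about who speaks when; this is exactly the model setup discussed in Appendix~\ref{app:lp-cc}, so nothing new is required.
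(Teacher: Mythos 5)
Your proposal is correct and follows essentially the same route as the paper's proof: define $x_e$ as the expected communication over edge $e$ under $\mu$, verify feasibility for $\lpo(G)$ via the cut-induced two-party simulation, and conclude by comparing objective values. No substantive differences.
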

\begin{proof}
Let $\Pi$ be an arbitrary protocol that correctly solves the problem $p=(f,G,\cK,\Sigma)$ with error at most $\eps=1/3$. For a given input $Y\in  \Sigma^{\cK}$, let $c_e(Y,\Pi)$ denote the total amount of bits communicated over the edge $e\in E(G)$ for the input $Y$. For every $e\in E(G)$, define
\[x_e=\av{c_e(Y,\Pi)}{Y\gets \mu}.\]
Note that by linearity of expectation $\sum_{e\in E(G)} x_e$ denotes the expected cost of $\Pi$ on $p$. Further, if we can show that the vector $\vx=(x_e)_{e\in E(G)}$ as defined above is a feasible solution for $\lpo(G)$, then the expected communication cost of $\Pi$ will be lower bounded by $\lpo(G)$. The claim then follows since we chose $\Pi$ arbitrarily.

To complete the proof, we need to show that $\vx$ satisfies all the constraints. It follows from definition that $x_e\ge 0$ for every $e\in E(G)$. Thus, to complete the proof we need to show that for every cut $C$
\begin{equation}
\label{eq:constraint-sat}
\sum_{e\in \delta(C)} x_e\ge \sum_{i=1}^{\ell} b^i(C).
\end{equation}
Towards this end fix an arbitrary cut $C$ and consider the following protocol $\Pi_C$ for the induced two-party function $f_C$. Alice runs $\Pi$ by herself as long as $\Pi$ only uses messages on edges on Alice's side of the cut $C$. If $\Pi$ needs to send a message over $\delta(C)$, then Alice sends the corresponding message to Bob. Bob then takes over and does the same. $\Pi_C$ terminates when $\Pi$ terminates. It is easy to check that $\Pi_C$ is a correct protocol for $f_C$ and errs with probability at most $\eps$. Further, the total communication for $\Pi_C$ for an input $Y\in\Sigma^{\cK}$ is exactly
\[\sum_{e\in\delta(C)} c_e(Y,\Pi).\]
Thus, by linearity of expectation, the expected cost of $\Pi_C$ under $\mu_C$ is $\sum_{e\in \delta(C)} x_e$. This along with~\eqref{eq:2-party-lb} proves~\eqref{eq:constraint-sat}, as desired.
\end{proof}

The above immediately implies the following corollary:
\begin{cor}
\label{cor:lpo-cc-lb}
Let $p=(f,G,\cK,\Sigma)$ be a problem and $\mu$ be a distribution on $\Sigma^{\cK}$ such that the following holds for every cut $C$ in $G$
\[D_{1/3,\mu_C}(f_C)\ge \alpha\cdot \left(\sum_{i=1}^{\ell} b^i(C)\right),\]
for some value $\alpha>0$
then the following lower bound holds
\[R(p)\ge \alpha\cdot \lpo(G).\]
\end{cor}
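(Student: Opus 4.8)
The plan is to deduce Corollary~\ref{cor:lpo-cc-lb} from Lemma~\ref{lem:lpo-cc-lb} by a simple rescaling, since the only difference between the two statements is the multiplicative factor $\alpha$ appearing on the right-hand side of the two-party lower-bound hypothesis.

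First I would observe that the proof of Lemma~\ref{lem:lpo-cc-lb} makes no structural assumption on the family of values $b^i(C)$ (in particular, no sub-additivity is used there); it only needs that for every cut $C$ the induced two-party distributional bound $D_{1/3,\mu_C}(f_C)$ dominates $\sum_{i=1}^{\ell} b^i(C)$. Hence I would apply Lemma~\ref{lem:lpo-cc-lb} to the new family of values $\widetilde b^{\,i}(C) := \alpha\cdot b^i(C)$. The hypothesis of the corollary says exactly that
\[
D_{1/3,\mu_C}(f_C)\ \ge\ \alpha\left(\sum_{i=1}^{\ell} b^i(C)\right)\ =\ \sum_{i=1}^{\ell}\widetilde b^{\,i}(C)
\]
for every cut $C$, which is precisely the premise Lemma~\ref{lem:lpo-cc-lb} requires for the family $\widetilde b^{\,i}$. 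Therefore $R(p)\ge$ (objective value of the LP $\lpo(G)$ in which each constraint is $\sum_{e\text{ crosses }C}x_e\ge\sum_{i=1}^{\ell}\widetilde b^{\,i}(C)$).

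Next I would identify this scaled LP's optimum with $\alpha\cdot\lpo(G)$. This is immediate from linear scaling: because $\alpha>0$, the map $(x_e)_{e\in E}\mapsto(\alpha x_e)_{e\in E}$ is a bijection from the feasible region of $\lpo(G)$ onto the feasible region of the scaled LP — it preserves nonnegativity, and $\sum_{e\text{ crosses }C}\alpha x_e\ge\sum_{i}\widetilde b^{\,i}(C)$ holds iff $\sum_{e\text{ crosses }C}x_e\ge\sum_i b^i(C)$ — and it multiplies the objective by $\alpha$. Combining with the previous paragraph gives $R(p)\ge\alpha\cdot\lpo(G)$, as claimed.

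There is essentially no obstacle here; the one point worth a sentence of care is that the argument is valid for every $\alpha>0$, including $\alpha<1$, precisely because positive scaling is a bijection of the feasible polyhedron. Alternatively, one could avoid re-invoking the lemma and instead repeat its proof verbatim: with $x_e$ defined as the $\mu$-expected communication of a protocol $\Pi$ on edge $e$, the chain $\sum_{e\in\delta(C)}x_e=(\text{expected cost of }\Pi_C\text{ under }\mu_C)\ge D_{1/3,\mu_C}(f_C)\ge\alpha\sum_i b^i(C)$ shows $(x_e/\alpha)_{e\in E}$ is feasible for $\lpo(G)$, hence $\sum_e x_e\ge\alpha\cdot\lpo(G)$; but the rescaling route above is the cleanest write-up.
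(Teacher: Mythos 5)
Your proposal is correct and matches the paper, which simply states that the corollary follows immediately from Lemma~\ref{lem:lpo-cc-lb}; the rescaling argument (apply the lemma to $\alpha\cdot b^i(C)$ and note that positively scaling all right-hand sides scales the LP optimum by $\alpha$) is exactly the intended one-line justification. No issues.
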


Next, we show we can use Corollary~\ref{cor:lpo-cc-lb} to reprove the following lower bound on the $\ED$ function:
\begin{thm}[\cite{CRR14}]
\label{thm:ed-lb}
\[R(\ED,G,K,\{0,1\}^n) \ge \Omega\left(\frac{\sigma_K(G)}{\log{k}}\right).\]
\end{thm}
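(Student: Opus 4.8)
The plan is to derive Theorem~\ref{thm:ed-lb} as an application of Corollary~\ref{cor:lpo-cc-lb} combined with the LP bound in Theorem~\ref{thm:mfc-lp}. First I would recall the hard distribution $\mu$ for $\ED$ used in~\cite{CRR14}: roughly, with probability $1/2$ all $k$ terminals receive distinct symbols drawn from a large enough universe, and with probability $1/2$ a uniformly random pair of terminals is forced to share a common symbol while the rest stay distinct. The key two-party fact is that for any cut $C$ that splits $K$ into $K^A$ and $K^B$ with $\min(|K^A|,|K^B|)=s$, the induced two-party problem $\ED_C$ (Alice holds the inputs on one side, Bob on the other) has distributional communication complexity $D_{1/3,\mu_C}(\ED_C) \ge \Omega(s)$. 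This is because the induced problem contains, as a sub-problem, a set-disjointness-like instance on $\Theta(s)$ coordinates — the elements that are "private" to the minority side can each independently collide or not with a fixed element held on the other side — and the $\Omega(s)$ lower bound for (unique) set-disjointness under the appropriate product-ish distribution applies. I would cite~\cite{CRR14} for the precise statement rather than redo it.

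Next I would invoke Corollary~\ref{cor:lpo-cc-lb} with $\ell=1$ and $b^1(C) = \min(|C\cap K|,|K\setminus C|) = \min(|K^A|,|K^B|)$ and $\alpha = \Omega(1)$, so that the two-party bound above reads $D_{1/3,\mu_C}(\ED_C) \ge \alpha\cdot b^1(C)$ for every cut $C$. The corollary then gives $R(\ED,G,K,\{0,1\}^n) \ge \alpha \cdot \lpo(G)$, where $\lpo(G)$ with these constraint values is exactly the LP $\lpmfc(G,K)$ from Appendix~\ref{sec:graphs} (its constraint is $\sum_{e \text{ crosses } C} x_e \ge \min(|C|,|K\setminus C|)$ for every cut $C$). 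Finally, Theorem~\ref{thm:mfc-lp} states $\lpmfc(G,K) \ge \Omega(\sigma_K(G)/\log k)$, which chained with the previous inequality yields $R(\ED,G,K,\{0,1\}^n) \ge \Omega(\sigma_K(G)/\log k)$, as claimed.

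The main obstacle — or at least the step that needs the most care — is verifying the two-party lower bound $D_{1/3,\mu_C}(\ED_C) \ge \Omega(\min(|K^A|,|K^B|))$ uniformly over all cuts $C$, with a clean reduction to a canonical set-disjointness instance whose size is exactly the number of minority-side terminals, and checking that the induced distribution $\mu_C$ is precisely the hard product distribution needed for that set-disjointness bound. The bookkeeping is: restrict attention to the terminals on the minority side, condition on the inputs of all other terminals (which under $\mu$ are distinct and "generic"), and observe that whether each minority terminal's symbol collides with a distinguished symbol on the other side is an independent fair coin, matching the $\UDISJ$-style distribution from Theorem~\ref{thm:info-set-disj}. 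Everything else is routine: the passage from the per-cut bounds to the LP is handled verbatim by Corollary~\ref{cor:lpo-cc-lb}, and the LP-to-$\sigma_K(G)$ step is exactly Theorem~\ref{thm:mfc-lp}, whose proof via Bourgain's embedding is already given in the excerpt.
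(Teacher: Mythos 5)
Your proposal is correct and follows exactly the paper's own route: pick the hard distribution from~\cite{CRR14}, cite that work for the per-cut bound $D_{1/3,\mu_C}(\ED_C)\ge\Omega(\min(|K^A|,|K^B|))$, apply Corollary~\ref{cor:lpo-cc-lb} with $\ell=1$ so that $\lpo(G)$ coincides with $\lpmfc(G,K)$, and finish with Theorem~\ref{thm:mfc-lp}. The only cosmetic difference is your description of the hard distribution (the paper uses $k$ random vectors drawn without replacement, so all inputs are distinct and hardness comes from the protocol having to be correct on all inputs), but since both you and the paper defer to~\cite{CRR14} for the two-party statement, this does not affect the argument.
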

\begin{proof}
Let $\mu$ be the distribution that picks $k$ random vectors without replacement from $\{0,1\}^n$. It was shown in~\cite{CRR14} that for for every cut $C$ of $G$, we have $D_{1/3,\mu_C}(\ED_C)\ge\Omega\left(\min(|C|,|K\setminus C|\right)$. The claim then follows from Corollary~\ref{cor:lpo-cc-lb} (for $\ell=1$), Theorem~\ref{thm:mfc-lp} and noting that with the constraints above $\lpo(G)$ is the same as $\lpmfc(G,K)$.
\end{proof}

\section{Proof of Theorem~\ref{thm:sd-st}}

\subsection{A collection of multi-way cuts}

We will consider multi-way cuts of a graph $G=(V,E)$. For our purposes a multi-way cut $C$ of $G$ is a partition of $V$ into at least two sets. For notational convenience, we will list all but one set of a multi-way cut $C$: i.e. the ``missing" set will be implicitly defined by the set $V\setminus \cup_{S\in C} S$. Just for concreteness, we will call the sets explicitly mentioned in $C$ is {\em explicit sets} and the missing set to be the {\em implicit set}. (Note that this implies that the size of a multi-cut $|C|$ is the number of explicit sets in $C$.) Also $\cutE{C}$ denotes the set of {\em cut-edges} of $C$: i.e. the set of edges that have one end point in one explicit set of $C$ and the other end point in another set (explicit or implicit) of $C$.

Given two multi-way cuts $C$ and $C'$ of $G$, we say that $C$ is {\em contained} in $C'$ is every explicit set of $C'$ is the union of one or more explicit set of $C$ (and maybe some extra elements from the implicit set of $C$).

We now define a family of collection of multi-way cuts that will be useful in proving our lower bounds.

\begin{defn}
\label{def:multicut-family}
We call a family of collection of multi-way cuts $\cC_1,\dots,\cC_{\ell}$ to be $(\ell,\alpha)$-multicut family for $G$ if the following is true for every $i\in [\ell]$. (For every $i\in [\ell]$, let $\cC_i=\{C_i^{(1)},\dots,C_i^{(m_i)}\}$, where each $C_i^{(j)}$ is a multi-way cut for $G$.)
\begin{enumerate}
\item[(i)] ({\sf Containment property}) For every $1\le j < m_i$, $C_i^{(j)}$ is contained in $C_i^{(j+1)}$.
\item[(ii)] ({\sf Disjointness property}) For every $1\le j_1\neq j_2\le m_i$, $\cutE{C_i^{(j_1)}}$ and $\cutE{C_i^{(j_1)}}$ are disjoint.
\item[(iii)] ({\sf Singleton property}) Call an explicit set $S$ in $C_i^{(j)}$ for any $j\in [m_i]$ to be {\em singleton} if $S$ contains exactly one set from $C_i^{(1)}$. Then $C_i^{(m_i)}$ has at least $\alpha\cdot \abs{C_i^{(1)}}$ singleton explicit sets.
\end{enumerate}
\end{defn}

\subsection{Multicut family to a lower bound}

Next we show how an $(\ell,\alpha)$-multicut family implies a lower bound for certain functions. We begin with the specific class of functions.

Recall that $f:\Sigma^K\to \{0,1\}$ is $h$-{\em \maxhard} if the following holds for any multicut $C$ of $K$. There exists a distribution $\mu_C^f$ such that the expected cost (under $\mu_C^f$) of any protocol that correctly computes $f$ on any star graph where each of the leaves has terminals from an explicit set from $C$ (and the center contains the implicit set of $C$) is $\Omega(|C|\cdot h(|\Sigma|))$.

\begin{lemma}
\label{lem:multicut=>sd-lb}
Let $\cC$ be an $(\ell,\alpha)$-multicut family for $G$ such that every (explicit) set in $C_1^{(1)}$ has at least one terminal from $K$ in it and let $f:\Sigma^K\to \{0,1\}$ be an $h$-\maxhard\ function. Then
\[R(f,G,K,\Sigma)\ge \Omega\left(\frac{\alpha\cdot h(|\Sigma|)}{\ell\cdot \log{k}}\cdot \sum_{i=1}^{\ell} m_i\cdot\abs{C_i^{(1)}}\right).\]
\end{lemma}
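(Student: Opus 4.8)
The plan is to fix an arbitrary protocol $\Pi$ for $(f,G,K,\Sigma)$ with error $1/3$ and, for each $i\in[\ell]$ separately, to exhibit a single input distribution $\mu_i$ on $\Sigma^K$ under which $\Pi$ is forced in expectation to send $\Omega\!\left(\tfrac{\alpha\cdot h(|\Sigma|)}{\log k}\cdot m_i\cdot\abs{C_i^{(1)}}\right)$ bits. Since $R(f,G,K,\Sigma)$ is at least the expected cost of $\Pi$ under any distribution, this gives a bound of that shape for every $i$ at once; taking the best $i$ and using $\max_i(\cdot)\ge\frac1\ell\sum_i(\cdot)$ then produces exactly the claimed bound. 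So the whole argument reduces to a single‑collection statement, and the $1/\ell$ in the lemma is simply the price of this averaging over the $\ell$ collections (whose cut‑edge sets are not assumed disjoint across collections).

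Fix $i$ and write $\cC_i=\{C_i^{(1)},\dots,C_i^{(m_i)}\}$. By the Disjointness property the edge sets $\cutE{C_i^{(1)}},\dots,\cutE{C_i^{(m_i)}}$ are pairwise disjoint, so under any distribution the total expected communication of $\Pi$ is at least $\sum_{j=1}^{m_i}$ (expected bits of $\Pi$ that cross $\cutE{C_i^{(j)}}$). It therefore suffices to produce one $\mu_i$ for which, \emph{simultaneously} for every $j\in[m_i]$, the expected traffic across $\cutE{C_i^{(j)}}$ is $\Omega\!\left(\tfrac{\alpha\cdot h(|\Sigma|)}{\log k}\cdot\abs{C_i^{(1)}}\right)$. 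Each such per‑level bound will come from $h$-\maxhard ness applied after contraction: if $\tilde C$ is any multi‑way cut of $K$ that is a coarsening of $C_i^{(j)}$ restricted to $K$, then $\cutE{\tilde C}\subseteq\cutE{C_i^{(j)}}$, so $\Pi$ induces (by the cut‑simulation of the LP section, routing leaf‑to‑leaf messages through the centre at a constant‑factor overhead) a protocol for $f$ on the star associated with $\tilde C$ whose cost is at most $O(1)$ times the traffic of $\Pi$ across $\cutE{C_i^{(j)}}$; feeding it the hard distribution $\mu_{\tilde C}^f$ guaranteed by $h$-\maxhard ness forces that cost to be $\Omega(\abs{\tilde C}\cdot h(|\Sigma|))$.

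The crux is that one fixed coarsening $\tilde C$, hence one fixed $\mu_i$, must be hard across all the nested cuts $C_i^{(j)}$ at once. This is where the Containment and Singleton properties enter: by Containment the $C_i^{(j)}$ form a refinement chain, so an explicit set that is singleton in $C_i^{(m_i)}$ — it contains exactly one explicit set $A$ of $C_i^{(1)}$ — is represented at every level $j$ by an explicit set of $C_i^{(j)}$ that again contains only $A$, and in particular whose terminal content is always the same set $A\cap K$. Let $T$ be the union of these $A\cap K$ over the $\ge\alpha\abs{C_i^{(1)}}$ singletons of $C_i^{(m_i)}$, and for each $j$ let $\hat C^{(j)}$ be the multi‑way cut of $G$ whose explicit sets are exactly these representative sets of $C_i^{(j)}$, everything else going into the implicit side. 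Then $\hat C^{(j)}\cap K$ is one and the same multi‑way cut $\hat C$ of $K$ for all $j$, $\abs{\hat C}\ge\alpha\abs{C_i^{(1)}}$, and $\cutE{\hat C^{(j)}}\subseteq\cutE{C_i^{(j)}}$; so taking $\mu_i:=\mu_{\hat C}^f$ and running the reduction above at each level $j$ yields the per‑level bound, and summing over $j$ gives $\Omega(\alpha\cdot m_i\cdot\abs{C_i^{(1)}}\cdot h(|\Sigma|))$ for collection $i$. The main obstacle, and the place the $\log k$ factor is spent, is exactly this simultaneity argument — verifying that the refinement‑chain/singleton structure really does make one distribution hard across all $m_i$ cuts (if one cannot do this cleanly in one shot, one first buckets the $m_i$ levels by the value of $\abs{C_i^{(j)}}$, which ranges over $O(\log k)$ scales, and argues within a single bucket), together with the loss inherent in passing from a graph cut to a star lower bound. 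Once this is in place, the remainder is bookkeeping with linearity of expectation and the cut‑simulation reduction.
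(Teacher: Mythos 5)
Your proposal matches the paper's proof in all essential respects: a per-collection argument combined by averaging over the $\ell$ collections (the paper mixes the $\mu_i$'s with weight $1/\ell$; your $\max_i \ge \frac{1}{\ell}\sum_i$ is equivalent), disjointness of the cut-edge sets within $\cC_i$ to sum the per-level contributions, and the containment plus singleton properties to extract a fixed multicut of $K$ --- the $s\ge\alpha\cdot\abs{C_i^{(1)}}$ explicit sets $T_1,\dots,T_s$ of $C_i^{(1)}$ that persist as singletons through level $m_i$ --- whose hard distribution $\mu^f_{\{T_1,\dots,T_s\}}$ is then simultaneously hard across every level $j$. One correction: simulating $\Pi$ on the star associated with a level-$j$ cut is not a constant-overhead reduction, since relaying a crossing message through the centre requires an $O(\log k)$-bit address identifying the destination leaf; this $O(\log k)$ blow-up is the sole source of the $\log k$ in the bound, and the simultaneity goes through cleanly in one shot, so no bucketing of the levels by $\abs{C_i^{(j)}}$ is needed.
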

\begin{proof}

Fix an $i\in [\ell]$. We will define a hard distribution $\mu_i$ for terminals in $K$ such that the expected cost of communication over all the crossing edges in the multi-way cuts in $\cC_i$ for any correct protocol will be
\begin{equation}
\label{eq:one-level}
\Omega\left( \frac{1}{\log{k}}\cdot\alpha\cdot h(|\Sigma|) \cdot m_i\cdot\abs{C_i^{(1)}}\right).
\end{equation}
Note that by picking the final hard distribution $\mu=\frac{1}{\ell}\sum_{i=1}^{\ell} \mu_i$, will complete the proof.

To complete the proof, we argue~\eqref{eq:one-level}.
Let $s$ be the number of singleton sets in $C_i^{(m_i)}$. Then by the containment property of $\cC$, this implies that there exist explicit sets $T_1,\dots,T_s\in C_1^{(1)}$ such that for every $1<j\le [m_i]$, $C_i^{(j)}$ has $s$ singleton sets that contain $T_1,\dots,T_s$ respectively. We then let $\mu_i$ be $\mu_{\{T_1,\dots,T_s\}}^f$, where we think of $\{T_1,\dots,T_s\}$ as a multicut on $K$.

By the definition of $\mu_{\{T_1,\dots,T_s\}}^f$,
we get that the expected amount of communication on the cut edges $\cutE{C_i^{(j)}}$ (for any $j\in [m_i]$) is $\Omega(sh(|\Sigma|)/\log{k})$.\footnote{The definition implies a lower bound on a star but it is easy to see that any protocol on any connected graph can be simulated on a star graph with only a $O(\log{k})$ blowup in the total communication. In particular, consider the following simulation. When a message needs to be sent from one of the $k$ nodes $u$ to another $v$, the leaf corresponding to $u$ in the $k$-star uses $O(\log{k})$ bits to identify the  leaf corresponding to $v$ to the center so that the center can relay the original message from $u$ to $v$.} Since the cut edge sets are disjoint for any two cuts $C_i^{(j_1)}$ and $C_i^{(j_2)}$, by linearity of expectation, the expected cost over all edges in $\cup_{j=1}^{m_i} \cutE{C_i^{(j)}}$ is $\Omega(sm_ih(|\Sigma|)/\log{k})$. The proof is complete by noting that the Singleton property of $\cC$ implies that $s\ge \alpha\cdot \abs{C_i^{(1)}}$.
\end{proof}

\subsection{Constructing the multicut family}

The main result in this section is to show that we can construct a good multicut family.

\begin{lemma}
\label{lem:good-multicuts}
For any given instance $(G,K)$ there exists an $(\ell=O(\log{k}),\alpha= 1/3)$-multicut family for $G$ such that $C_1^{(1)}=\{\{i\}|i\in K\}$: i.e. all the explicit sets in $C_1^{(1)}$ just contain one terminal from $K$. Further, we have
\[\sum_{i=1}^{\ell} \sum_{j=1}^{m_i} \abs{C_i^{(j)}}\ge \Omega\left(\ST(G,K)\right).\]
\end{lemma}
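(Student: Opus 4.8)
The plan is to realise the family through a single ``ball-growing'' process on $G$ that mimics Bor\.{u}vka's/Kruskal's algorithm~\cite{boruvka} on the metric closure $H$ of $K$ (the complete graph on vertex set $K$ with $w(u,v)=d_G(u,v)$). For an integer radius $r\ge 0$ declare a vertex $u$ \emph{claimed at radius $r$} if $d_u:=\min_{v\in K}d_G(v,u)\le r$, assign it (once and for all) to a fixed nearest terminal $v_u\in K$, and merge the clusters of two terminals $v,v'$ as soon as their radius-$r$ balls meet, i.e. once $d_G(v,v')\le 2r$, taking transitive closure. Let $\mathcal{P}_r$ be the multi-way cut of $G$ whose explicit sets are the vertex sets of the current clusters and whose implicit set is the set of unclaimed vertices, and let $\kappa(r)$ be the number of clusters. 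By construction the cluster partition at radius $r$ is exactly the connected-component partition of $H_{2r}:=(K,\{e:w(e)\le 2r\})$, so by the standard analysis of Kruskal's algorithm, if $w_1\le\dots\le w_{k-1}$ are the MST weights of $H$ then $\kappa(r)=1+\abs{\{t:w_t>2r\}}$. Hence, writing $R=\lceil w_{k-1}/2\rceil$ for the first radius with $\kappa(r)=1$,
\[\sum_{r=0}^{R-1}\kappa(r)=R+\sum_{t=1}^{k-1}\lceil w_t/2\rceil=\Theta\Big(\sum_{t}w_t\Big)=\Theta\big(\mathrm{MST}(H)\big)=\Theta\big(\ST(G,K)\big),\]
the last step being the classical equivalence of the minimum Steiner tree and the MST of the metric closure~\cite[Chap.~2]{vazirani}. (I assume $G$ connected; otherwise restrict to the component containing $K$.)

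Next I would cut the radii $\{0,\dots,R-1\}$ (on which $\kappa\ge 2$) into consecutive blocks by a ``$\tfrac23$-shrink'' rule: set $r_0:=0$ and, given $r_{i-1}$ with $q_i:=\kappa(r_{i-1})$, let $r_i$ be the least radius $>r_{i-1}$ with $\kappa(r_i)\le\tfrac23 q_i$. Put $\cC_i:=\big(\mathcal{P}_{r_{i-1}},\mathcal{P}_{r_{i-1}+1},\dots,\mathcal{P}_{r_i-1}\big)$, so $C_i^{(1)}=\mathcal{P}_{r_{i-1}}$; in particular $C_1^{(1)}=\mathcal{P}_0=\{\{v\}:v\in K\}$ since distinct terminals are at $G$-distance $\ge 1$. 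Because $q_{i+1}\le\tfrac23 q_i$ there are $\ell=O(\log k)$ blocks, and they partition $\{0,\dots,R-1\}$, so $\sum_{i=1}^{\ell}\sum_{j=1}^{m_i}\abs{C_i^{(j)}}=\sum_{r=0}^{R-1}\kappa(r)=\Omega(\ST(G,K))$, as required. Containment inside $\cC_i$ is clear: as $r$ grows a cluster only absorbs previously implicit vertices and merges with other clusters, so every explicit set of $\mathcal{P}_{r}$ is a union of explicit sets of $\mathcal{P}_{r-1}$ plus some former implicit vertices. For the singleton property, call the $q_i$ clusters of $C_i^{(1)}$ the \emph{$i$-atoms}; every cluster of $C_i^{(m_i)}=\mathcal{P}_{r_i-1}$ is a union of $i$-atoms, so if $c$ of them are unions of $\ge 2$ $i$-atoms and $s$ are single $i$-atoms, then $2c+s\le q_i$ whereas $c+s=\kappa(r_i-1)>\tfrac23 q_i$; subtracting gives $c<\tfrac13 q_i$, hence $s>\tfrac13 q_i$, so the singleton property holds with $\alpha=1/3$.

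The crux is the disjointness property, and the reason one may snapshot at \emph{every} integer radius (rather than only at merges) is that it costs nothing: I would show that each edge $(x,y)\in E$ lies in $\cutE{\mathcal{P}_r}$ for at most one $r$ overall. Note $\abs{d_x-d_y}\le 1$. If at radius $r$ exactly one of $x,y$ is claimed, then $r=\min(d_x,d_y)$ and $\{d_x,d_y\}=\{r,r+1\}$ — a single value of $r$. If both $x,y$ are claimed at radius $r$ but lie in different clusters, then $r\ge\max(d_x,d_y)$ and their fixed nearest terminals $v_x,v_y$ are still unmerged; since $v_x,v_y$ are certainly merged once $2r\ge d_G(v_x,v_y)$ and $d_G(v_x,v_y)\le d_x+1+d_y$, being unmerged forces $2r\le d_x+d_y$, which together with $r\ge\max(d_x,d_y)$ forces $d_x=d_y=r$ — again a single value; moreover these two cases are mutually exclusive since the second needs $d_x=d_y$. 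Hence $(x,y)$ is a cut edge of at most one $\mathcal{P}_r$, so the sets $\cutE{C_i^{(j)}}$ are pairwise disjoint within each $\cC_i$. The only genuine work in the argument is this bookkeeping around ``frontier'' radii, together with the identification of $\sum_r\kappa(r)$ with the metric-closure MST; having those, the three properties and the size bound fall out, and $C_1^{(1)}$ being all singletons and $\ell=O(\log k)$ are immediate.
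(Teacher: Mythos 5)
Your construction is the same ball-growing process as the paper's (snapshots of the terminal clusters at every integer radius, then a constant-fraction shrink rule to carve the sequence into $O(\log k)$ blocks), and all the properties you verify do hold, so the proof is correct. Where you genuinely diverge is in how the two nontrivial facts are established. For the size bound, the paper proves $\sum_i |C_i| \ge \tfrac12 \ST(G,K)$ by an induction that simulates Bor\.{u}vka's algorithm, repeatedly collapsing merged components into super-nodes and tracking how edge weights change; you instead invoke the Kruskal threshold identity $\kappa(r)=1+\abs{\{t: w_t>2r\}}$ for the metric closure and sum over $r$ to get $R+\sum_t\lceil w_t/2\rceil=\Theta(\mathrm{MST}(H))$ directly --- a shorter and less error-prone route to the same $\Omega(\ST(G,K))$. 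For disjointness, the paper asserts it ``follows from construction,'' whereas you give an explicit argument that each edge $(x,y)$ is a cut edge of $\mathcal{P}_r$ for at most one $r$ (splitting into the one-endpoint-claimed case, forcing $r=\min(d_x,d_y)$ with $d_x\ne d_y$, and the both-claimed-unmerged case, forcing $d_x=d_y=r$); this actually proves a stronger global disjointness than the per-block property required by Definition~\ref{def:multicut-family}, and it is a worthwhile addition since this is precisely the point the paper leaves implicit. Your blocking rule (total cluster count drops to $\tfrac23 q_i$) also differs cosmetically from the paper's (singleton count drops below a third), but your double-counting step $2c+s\le q_i$, $c+s>\tfrac23 q_i$ cleanly recovers $\alpha=1/3$, so both yield the same parameters.
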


Note that Lemmas~\ref{lem:multicut=>sd-lb} and~\ref{lem:good-multicuts} prove Theorem~\ref{thm:sd-st}.

In the rest of the section, we prove Lemma~\ref{lem:good-multicuts}. We will in fact first define a collection of multi-way cuts $C_1,\dots,C_t$ for some $t\ge 1$ such that they satisfy the containment and disjointness properties in Definition~\ref{def:multicut-family} for $\ell=1$ (but not necessarily the singleton property). Further, these cuts satisfy the two extra properties needed in Lemma~\ref{lem:good-multicuts}. Finally, we will show how to divide the collection of multicuts into $O(\log{k})$ sub-collections so that the new family is actually an $(O(\log{k}),1/3)$-multicut family for $G$ (without losing the other desired properties).

We start with a notation that will help us define our multi-way cut family. For any non-empty subset $S\subseteq V$, let $\ball_G(S,r)$ denote the set of all vertices in $G$ with a (shortest path) distance of at most $r$ from some node in $S$. More precisely:
\[\ball_G(S,r)=\left\{u\in V| \text{ there exists a } w\in S\text{ such that } d_G(u,w)\le r\right\}.\]

We will define the multicuts $C_1,\dots,C_t$ by defining a partition of $K$ for each $i\in [t]$: let us call the $i$th partition $\cS_i$. 
Given the partition $\cS_i$, the definition of the multicut $C_i$ is simple: there is one explicit set in $C_i$ corresponding to each $S\in \cS_i$. In particular, for every $S\in\cS_i$, we have
\[C_i=\{\ball_G(S,i-1)|S\in \cS_i\},\]
where recall we only state the explicit sets in the multi-way cut $C_i$.

Thus, to complete the descriptions of the multi-way cuts, it is enough to show how to compute $\cS_{i}$.
$\cS_1$ is defined to be the partition of $K$ into the $k$ singleton sets $\{i\}$ (for every $i\in K$). 
To compute $\cS_{i+1}$ from $\cS_i$ we first construct a graph $G'_i$ which has one node for every $S\in \cS_i$. Add an edge $(S,T)$ for $T\neq S\in \cS_i$ in $G'_i$ if 
$\ball_G(S,i)$ intersects $\ball_G(T,i)$. For each connected component in $G'_i$, add the union of all sets from $\cS_i$ in the connected component as one set in $\cS_{i+1}$. Note that it is possible that $\cS_{i+1}=\cS_i$. The last index $t$ is defined as the smallest index such that $|\cS_{t+1}|=1$.

Note that the containment and disjointness properties of the multi-way cuts $C_1,\dots,C_t$ follow from construction. Further, by definition, all the explicit sets in $C_1$ contains exactly one terminal from $K$. Next we argue that

\begin{lemma}
\label{lem:sum-cost=st}
\[\sum_{i=1}^t |C_i| \ge \frac{1}{2}\cdot \ST(G,K).\]
\end{lemma}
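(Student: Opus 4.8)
The plan is to recognize the ball-growing process as single-linkage clustering on the metric closure of $K$, and then invoke the classical identity that summing the number of connected components over the distance threshold recovers the weight of a minimum spanning tree.

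First I would pin down a clean combinatorial description of the partitions $\cS_i$. The elementary observation driving this is that for any nonempty $S,T\subseteq V$ and any radius $r$, the balls $\ball_G(S,r)$ and $\ball_G(T,r)$ intersect if and only if there exist $u\in S$, $v\in T$ with $d_G(u,v)\le 2r$ (the triangle inequality in one direction; a midpoint of a shortest $u$--$v$ path in the other). Plugging this into the definition of $G'_i$ and inducting on $i$, one gets that $\cS_{i+1}$ is exactly the partition of $K$ into connected components of the graph on vertex set $K$ with edge set $\{(u,v):d_G(u,v)\le 2i\}$; here one must note that merging already-merged groups creates no reachability beyond that threshold graph. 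Writing $c(\theta)$ for the number of connected components of the metric-closure graph on $K$ when only pairs at distance $\le\theta$ are joined, this says $|C_i|=|\cS_i|=c(2(i-1))$, and the stopping rule $|\cS_{t+1}|=1$ becomes $c(2t)=1$, hence $c(\theta)=1$ for all $\theta\ge 2t$.

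Next I would use the standard identity $\sum_{j\ge 0}\bigl(c(j)-1\bigr)=\mathrm{MST}(K,d_G)$, the weight of a minimum spanning tree of the metric closure (it holds because the MST edges of weight $\le j$ form a maximal spanning forest of the ``$\le j$'' graph, so there are exactly $c(j)-1$ MST edges of weight $>j$; then $\sum_e w(e)=\sum_e|\{j\ge 0:w(e)>j\}|$ gives the identity, distances being integers here). Since $c$ is non-increasing and $c(j)=1$ for $j\ge 2t$, the chain of inequalities would read
\[
\mathrm{MST}(K,d_G)=\sum_{j=0}^{2t-1}\bigl(c(j)-1\bigr)\le\sum_{j=0}^{2t-1}c(j)=\sum_{i=0}^{t-1}\bigl(c(2i)+c(2i+1)\bigr)\le 2\sum_{i=0}^{t-1}c(2i)=2\sum_{i=1}^{t}|C_i|.
\]
Finally I would invoke $\mathrm{MST}(K,d_G)\ge\ST(G,K)$ (replace each MST edge by a shortest path in $G$; the union is a connected subgraph spanning $K$ of weight at most $\mathrm{MST}(K,d_G)$, hence contains a Steiner tree for $K$; see \cite[Chap.~2]{vazirani}), which together with the display yields $\sum_{i=1}^t|C_i|\ge\tfrac12\,\ST(G,K)$.

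The main obstacle I anticipate is bookkeeping rather than any deep step: correctly establishing that $\cS_i$ is genuinely the single-linkage clustering at threshold $2(i-1)$ (the ball-intersection equivalence and the induction must be stated with care), and then matching the factor of two between the scale at which the balls grow and the scale on which the MST integral lives — this is exactly where the $\tfrac12$ in the statement comes from, together with the harmless slack between $c(j)$ and $c(j)-1$.
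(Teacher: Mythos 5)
Your proof is correct, but it takes a genuinely different route from the paper's. The paper proves $\sum_{i=1}^t |C_i| \ge \tfrac{1}{2}\cost(T(\bar G))$ by induction on $|K|$, explicitly simulating Bor\.uvka's algorithm: it finds the first index at which a merge occurs, contracts the merged components into super-nodes, adjusts all edge weights of the metric closure by $2(i-1)$, and recurses, with some delicate bookkeeping to show the MST cost drops by at most $2k(i-1)$ while the left-hand side drops by exactly $(k+1)(i-1)$. You instead observe that $\cS_{i+1}$ is exactly the single-linkage clustering of $K$ at threshold $2i$ (via the ball-intersection equivalence, which does need the integrality of shortest-path distances for the midpoint argument, but the paper's graphs are unweighted so this is fine), and then invoke the Kruskal-type level-set identity $\sum_{j\ge 0}\bigl(c(j)-1\bigr)=\mathrm{MST}(K,d_G)$ together with monotonicity of $c$ to absorb the odd thresholds into the even ones at a cost of the factor $2$. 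Both arguments pass through $\mathrm{MST}(K,d_G)\ge \ST(G,K)$ and both land on the same constant $\tfrac12$. Your version is non-inductive and arguably cleaner — the factor of $2$ is transparently the ratio between the ball-growth scale and the distance scale, rather than emerging from the contraction accounting — while the paper's contraction view is the one that directly matches its narrative of "simulating Bor\.uvka." One small point to make explicit if you write this up: for distinct $S,T\in\cS_i$ the balls $\ball_G(S,i-1)$ and $\ball_G(T,i-1)$ are disjoint by construction (otherwise $S$ and $T$ would have been merged at the previous step), so $|C_i|=|\cS_i|=c(2(i-1))$ as you use.
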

\begin{proof}
Let $\bar G$ denote the complete graph on the vertex set $K$, where the edge $(u,v)$ in $\bar G$ has a cost of $d_G(u,v)$. Let $T(\bar G)$ denote an MST of $\bar G$. It is easy to see the cost of $T(\bar G)$ (denoted by $\cost(T(\bar G))$) is at least $\ST(G,K)$. Next we argue that $\sum_{i=1}^t |C_i|$ is at least half of the cost of $T(\bar G)$, which would complete the proof.

Intuitively, the argument about the cost of $T(\bar G)$ is essentially that our algorithm to compute the various $\cS_i$ simulates a run of Bor\.uvka's algorithm~\cite{boruvka} for computing an MST of $\bar G$.

We will now prove the result by induction on $k\stackrel{\mathrm{def}}{=}|K|$. When $k=2$, then it is easy to see that $\sum_{i=1}^t |C_i| \ge \cost(T(\bar G))-1 \ge \cost(T(\bar G))/2$, as desired.\footnote{The inequality holds as long as $\cost(T(\bar G))\ge 2$. If $\cost(T(\bar G))=1$, then note that $\bar G$ is just a unit cost edge with the two end points being the two terminals. Note that in this case $|C_1|=1$ and hence the inequality $\sum_{i=1}^t |C_i| \ge \cost(T(\bar G))/2$ still holds as required.} Let us assume that the claim is true for all $K$ with $|K|=k\ge 2$. 

Next consider the case when $|K|=k+1$. Let $i$ be the smallest index where the graph $G'_{i-1}$ has at most $k$ components (i.e. this is the first $i$ such that at least two singletons sets from $\cS_{i-1}$ are merged when computing $\cS_i$). Let $G'$ denote the graph where we collapse all nodes in $\cC_i$ into ``super-nodes" and let $K'$ denote the corresponding set of terminals in $G'$: i.e. $K'$ is in one to one correspondence with $\cS_i$.
Let $C'_1,\dots,C'_{t'}$ denote the cuts defined if our algorithm ran on $G'$ and $K'$. We claim two properties: (i) $\sum_{i=1}^t |C_i|-\sum_{j=1}^{t'}|C'_j|=(k+1)\cdot(i-1)$ and (ii) the corresponding graph\footnote{In particular, $\bar G'$ is a complete graph on $\cS_i$ and the cost of an edge $(u',v')$ is $d_G(u',v')-2(i-1)$, where $d_G(u',v')$ is the distance between the closest pairs of terminals in $u'$ and $v'$ (recall that $u'$ and $v'$ correspond to disjoint subsets of $K$).} $\bar G'$ has its MST cost (denoted by $\cost(T(\bar G'))$) to be  at least $\cost(T(G))-2k(i-1)$. Note that claims (i) and (ii) complete the inductive step of the proof.\footnote{It can be verified that our construction of $C_i,\dots,C_t$ on $G$ corresponds to running our algorithm on $G'$ with the terminal set $K'$. This implies (e.g. by induction) that $\sum_{j=i}^{t'} |C'_j|\ge \cost(T(\bar G'))/2$. Hence by (i) we have $\sum_{i=1}^t |C_i|\ge (k+1)(i-1)+\cost(T(\bar G'))/2\ge (k+1)(i-1) + \cost(T(G))/2-k(i-1)\ge \cost(T(G))/2+(i-1)\ge \cost(T(G))/2$, where the second inequality follows from (ii).}
 To complete the proof, we argue these two claims.

We begin with claim (i). We first note that the multi-way cut $C'_j$ (for $j\in [t']$) is in one to one correspondence\footnote{This follows by our earlier observation that we can think of the construction of $C_i,\dots,C_t$ as running our algorithm on $G'$ with the terminal set being $K'$.} with $C_{i+j-1}$. In particular, we have $|C'_j|=|C_{i+j-1}|$. The claim then follows by noting that all $\cS_{\ell}=\cS_1$ for $\ell<i$ (and hence $\sum_{\ell=1}^{i-1} |C_{\ell}|=(k+1)(i-1)$).

We finish by arguing claim (ii). The main observation is that $T(\bar G)$ can be obtained by starting with $T(\bar G')$ and then replacing each super node in $T(\bar G')$ by a spanning tree of the corresponding component of $G'_{i-1}$ (recall that each super node in $K'$ is constructed by collapsing a component in $G'_{i-1}$ of size at least two). To complete the claim, we need to track the changes in edge weights. We first note that the cost of edges in $\bar G'$ is smaller than the corresponding edge in $\bar G$ by exactly $2(i-1)$. Second, each edge added back for each super node in $K'$ has cost at most $2(i-1)$. This implies that
\[\cost(T(\bar G))-\cost(\bar G')= \left(|K'|-1\right)\cdot 2(i-1)+ \left(|K|-|K'|\right)\cdot 2(i-1) \le 2k(i-1),\]
as desired.
\end{proof}

Note that now we have shown an $(1,1/k)$-multicut family that satisfies all the other conditions in Lemma~\ref{lem:good-multicuts}. We now present a simple way to convert this into an $(O(\log{k}),1/3)$-multicut family. In particular, we will group $\ell=O(\log{k})$ consecutive chunks of multi-way cuts from $C_1,\dots,C_t$ to obtain our final family $\cC_1,\dots,\cC_{\ell}$. We first show how we compute $\cC_1$. Let $j$ be the largest index in $[t]$ such that $\cS_j$ has at least $k/3$ singleton sets. Then $\cC_1=\{C_1,\dots,C_j\}$. Now note that $|\cS_{j+1}|\le 2k/3$ (because it has at most $k/3$ singleton sets and the rest in the worst-case might form subset of size $2$). We now re-start the process from $C_{j+1}$, where we think of $\cS_{j+1}$ as the set of terminals. If this process stops in $\ell$ steps note that this results in an $(\ell,1/3)$-multicut family. Recall that once we go from $\cC_i$ to constructing $\cC_{i+1}$, the number of terminals decreases by a factor of at least $3/2$. This in turn implies that $\ell=O(\log{k})$, as desired.


\end{document}